\numberwithin{equation}{section}
\numberwithin{figure}{section}
\theoremstyle{plain}
\newtheorem{thm}{\protect\theoremname}
\theoremstyle{definition}
\newtheorem{defn}[thm]{\protect\definitionname}
\theoremstyle{remark}
\newtheorem{rem}[thm]{\protect\remarkname}
\theoremstyle{plain}
\newtheorem{cor}[thm]{\protect\corollaryname}
\providecommand{\corollaryname}{Corollary}
\providecommand{\definitionname}{Definition}
\providecommand{\remarkname}{Remark}
\providecommand{\theoremname}{Theorem}
\begin{document}
\title{Exceptional points of degeneracy in traveling wave tubes}
\author{Alexander Figotin}
\address{University of California at Irvine, CA 92967}
\begin{abstract}
Traveling wave tube (TWT) is a powerful vacuum electronic device used
to amplify radio-frequency (RF) signals with numerous applications,
including radar, television and telephone satellite communications.
TWT design in a nutshell comprises of a pencil-like electron beam
(e-beam) in vacuum interacting with guiding it slow-wave structure
(SWS). In our studies here the e-beam is represented by one-dimensional
electron flow and SWS is represented by a transmission line (TL).
The interaction between the e-beam and the TL is modeled by an analytic
theory that generalizes the well-known Pierce model by taking into
account the so-called space-charge effects particularly electron-to-electron
repulsion (debunching). Many important aspects of the analytic theory
of TWTs have been already analyzed in our monograph on the subject.
The focus of the studies here is on degeneracies of the TWT dispersion
relations particularly on exceptional points of degeneracy and their
applications. The term exceptional point of degeneracy (EPD) refers
to the property of the relevant matrix to have nontrivial Jordan block
structure. Using special parameterization particularly suited to chosen
EPD we derive exact formulas for the relevant Jordan basis including
the eigenvectors and the so-called root vector associated with the
Jordan block. Based on these studies we develop constructive approach
to sensing of small signals.
\end{abstract}

\keywords{Traveling wave tube, TWT, exceptional point of degeneracy (EPD), Jordan
block, perturbations, instability, sensitivity.}
\maketitle

\section{Introduction\label{sec:int-twtj}}

There is growing interest to electromagnetic system exhibiting Jordan
eigenvector degeneracy, which is a degeneracy of the system evolution
matrix when not only some eigenvalues coincide but the corresponding
eigenvectors coincide also. The degeneracy of this type is sometimes
referred to as exceptional point of degeneracy (EPD), \cite[II.1]{Kato}.
A particularly important class of applications of EPDs is sensing,
\cite{CheN}. \cite{PeLiXu}, \cite{Wie}, \cite{Wie1}, \cite{KNAC},
\cite{OGC}.

In our prior work in \cite{FigSynbJ}, \cite{FigSynbJ,FigPert} we
advanced and studied simple circuits exhibiting EPDs and their applications
to sensing. Operation of electric circuits though is limited to frequencies
up to hundreds of MHz, and to overcome this limitation other physical
systems that can operate at higher frequencies must be considered.
Our prior studies of traveling wave tubes (TWT) in \cite[4, 7, 13, 14, 54, 55]{FigTWTbk}
demonstrate that TWTs always have EPDs. Operating frequencies of TWTs
can go up to hundreds of GHz and even into THz frequency range, \cite{BoosVE},
\cite{Burt} and for this reason they are the primary subject of our
studies here. For more applications of EPDs for traveling wave tubes
see \cite{OTC}, \cite{OVFC}, \cite{OVFC1}, \cite{VOFC}.

We start with a concise review of the basics of traveling wave tubes.
Traveling wave tube (TWT) utilizes the energy of the electron beam
(e-beam) as a flow of free electrons in a vacuum and converts it into
an RF signal, see Fig. \ref{fig:TWT1}. To facilitate energy conversion
and signal amplification, the electron beam is enclosed in the so-called
\emph{slow wave structure} (SWS), which supports waves that are slow
enough to effectively interact with the e-beam. As a result of this
interaction, the kinetic energy of electrons is converted into the
electromagnetic energy stored in the field, \cite{Gilm1}, \cite{Tsim},
\cite[2.2]{Nusi}, \cite[4]{SchaB}. Consequently, the \emph{key operational
principle of a TWT is a positive feedback interaction between the
slow-wave structure and the flow of electrons}. The physical mechanism
of radiation generation and its amplification is the electron bunching
caused by the acceleration and deceleration of electrons along the
e-beam.
\begin{figure}[h]
\centering{}\includegraphics[scale=0.5]{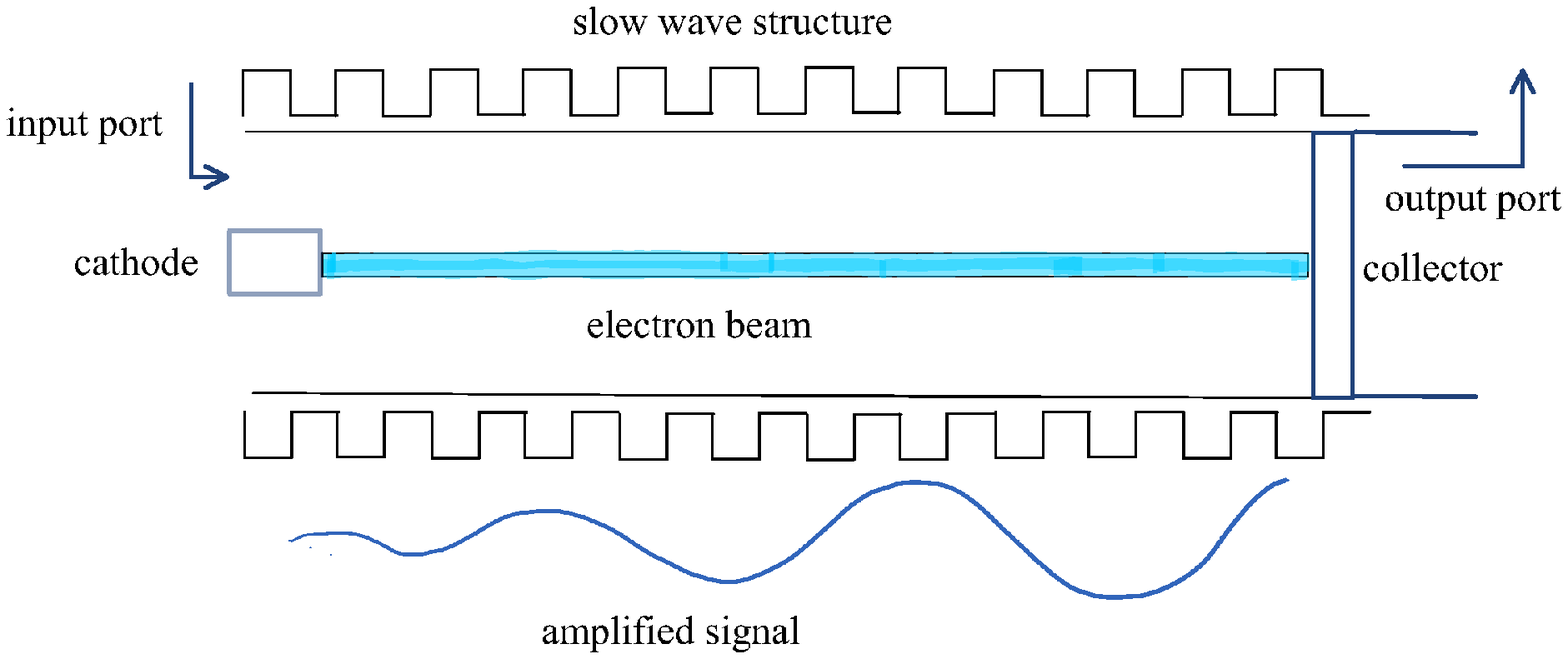}\caption{\label{fig:TWT1} The upper picture is a schematic presentation of
a traveling wave tube. The lower picture shows an RF perturbation
in the form of a space-charge wave that gets amplified exponentially
as it propagates through the traveling wave tube.}
\end{figure}

A schematic sketch of typical TWT is shown in Fig. \ref{fig:TWT1}.
Such a typical TWT consists of a vacuum tube containing an e-beam
that passes down the middle of an SWS such as an RF circuit. It operates
as follows. The left end of the RF circuit is fed with a low-powered
RF signal to be amplified. The SWS electromagnetic field acts upon
the e-beam causing electron bunching and the formation of the so-called
\emph{space-charge wave}. In turn, the electromagnetic field generated
by the space charge wave induces more current back into the RF circuit
with a consequent enhancement of electron bunching. As a result, the
EM field is amplified as the RF signal passes down the structure until
a saturation regime is reached and a large RF signal is collected
at the output. The role of the SWS is to provide slow-wave modes to
match up with the velocity of the electrons in the e-beam. This velocity
is usually a small fraction of the speed of light. Importantly, synchronism
is required for effective in-phase interaction between the SWS and
the e-beam with optimal extraction of the kinetic energy of the electrons.
A typical simple SWS is the helix, which reduces the speed of propagation
according to its pitch. The TWT is designed so that the RF signal
travels along the tube at nearly the same speed as electrons in the
e-beam to facilitate effective coupling. Technical details on the
designs and operation of TWTs can be found in \cite{Gilm1}, \cite[4]{Nusi}
\cite{PierTWT}, \cite{Tsim}. As for a rich and interesting history
of traveling wave tubes, we refer the reader to \cite{MAEAD} and
references therein.

An effective mathematical model for a TWT interacting with the e-beam
was introduced by Pierce \cite[I]{Pier51}, \cite{PierTWT}. The Pierce
model is one-dimensional; it accounts for the wave amplification,
energy extraction from the e-beam and its conversion into microwave
radiation in the TWT \cite{Gilm1}, \cite{Gilm}, \cite[4]{Nusi},
\cite[4]{SchaB}, \cite{Tsim}. This model captures remarkably well
significant features of the wave amplification and the beam-wave energy
transfer, and is still used for basic design estimates. In our paper
\cite{FigRey1}, we have constructed a Lagrangian field theory by
generalizing and extending the Pierce theory to the case of a possibly
inhomogeneous MTL coupled to the e-beam. This work was extended to
an analytic theory of multi-stream electron beams in traveling wave
tubes in \cite{FigTWTbk}.

According to our analytic theory the TWT dispersion relations always
have EPDs which can be effectively found \cite[4, 7, 13, 14, 54, 55]{FigTWTbk}.
We study here the simplest TWT model that generalizes the Pierce model
by integrating into it the space-charge effects. We introduce for
this model a special parameterization that allow for (i) explicit
representation of chosen EPD associated with a Jordan block of size
2; (ii) exact formulas for the Jordan basis including the eigenvectors
and the so-called root vector associated with the Jordan block.

The paper is organized as follows. In Section \ref{sec:twt-mod} we
review our analytic model of TWT introduced and studied in \cite[4, 24]{FigTWTbk}.
In Section \ref{sec:twt-epd} we carry out detailed studies of the
relevant TWT matrix and its Jordan form. In Section \ref{sec:sensing}
we (i) carry out the perturbation analysis of the relevant TWT matrix;
(ii) develop constructive approach for using the TWT EPD for sensing
of small signals.

While quoting monographs we identify the relevant sections as follows.
Reference {[}X,Y{]} refers to Section/Chapter ``Y'' of monograph
(article) ``X'', whereas {[}X, p. Y{]} refers to page ``Y'' of
monograph (article) ``X''. For instance, reference {[}2, VI.3{]}
refers to monograph {[}2{]}, Section VI.3; reference {[}2, p. 131{]}
refers to page 131 of monograph {[}2{]}.

\section{An analytic model of the traveling wave tube\label{sec:twt-mod}}

We concisely review here an analytic model of the traveling wave tube
introduced and studied in our monograph \cite[4, 24]{FigTWTbk}. According
to this model an ideal TWT is represented by a single-stream e-beam
interacting with single transmission line. This model is a generalization
of the Pierce model \cite[I]{Pier51}, \cite{PierTWT} and its parameters
are as follows. The main parameter describing the single-stream e-beam
is e-beam coefficient
\begin{equation}
\beta=\frac{\sigma_{\mathrm{B}}}{4\pi}R_{\mathrm{sc}}^{2}\omega_{\mathrm{p}}^{2}=\frac{e^{2}}{m}R_{\mathrm{sc}}^{2}\sigma_{\mathrm{B}}\mathring{n},\quad\omega_{\mathrm{p}}^{2}=\frac{4\pi\mathring{n}e^{2}}{m},\label{eq:T1B1beta1aj}
\end{equation}
where $-e$ is electron charge with $e>0$, $m$ is the electron mass,
$\omega_{\mathrm{p}}$ is the e-beam plasma frequency, $\sigma_{\mathrm{B}}$
is the area of the cross-section of the e-beam, the constant $R_{\mathrm{sc}}$
is the\emph{ }plasma frequency reduction factor that accounts phenomenologically
for finite dimensions of the e-beam cylinder as well as geometric
features of the slow-wave structure, \cite[41, 63]{FigTWTbk}, and
$\mathring{n}$ is the density of the number of electrons. The single-stream
e-beam has steady velocity $\mathring{v}>0$.

As for the single transmission line, its shunt capacitance per unit
of length is a real number $C>0$ and its inductance per unit of length
is another real number $L>0$. The coupling constant $0<b\leq1$ is
a number also, see \cite[3]{FigTWTbk} for more details. The TL single
characteristic velocity $w$ and the single \emph{TL principal coefficient}
$\theta$ defined by
\begin{equation}
w=\frac{1}{\sqrt{CL}},\quad\theta=\frac{b^{2}}{C}.\label{eq:T1B1beta1bj}
\end{equation}
Following to \cite[3]{FigTWTbk} we assume that
\begin{equation}
0<\mathring{v}<w.\label{eq:T1B1beta1cj}
\end{equation}

\subsection{TWT system Lagrangian and evolution equations\label{subsec:two-lag-ev}}

Following to developments in \cite{FigTWTbk} we introduce the \emph{TWT
principal parameter} $\bar{\gamma}=\theta\beta$. This parameter in
view of equations (\ref{eq:T1B1beta1aj}) and (\ref{eq:T1B1beta1bj})
can be represented as follows
\begin{equation}
\gamma=\theta\beta=\frac{b^{2}}{C}\frac{\sigma_{\mathrm{B}}}{4\pi}R_{\mathrm{sc}}^{2}\omega_{\mathrm{p}}^{2}=\frac{b^{2}}{C}\frac{e^{2}}{m}R_{\mathrm{sc}}^{2}\sigma_{\mathrm{B}}\mathring{n},\quad\theta=\frac{b^{2}}{C},\quad\beta=\frac{e^{2}}{m}R_{\mathrm{sc}}^{2}\sigma_{\mathrm{B}}\mathring{n}.\label{eq:DsT1B1hj}
\end{equation}
The TWT-system Lagrangian $\mathcal{L}{}_{\mathrm{TB}}$ is defined
by \cite[4, 24]{FigTWTbk}
\begin{gather}
\mathcal{L}{}_{\mathrm{TB}}=\mathcal{L}_{\mathrm{Tb}}+\mathcal{L}_{\mathrm{B}},\label{eq:T1B1beta1dj}\\
\mathcal{L}_{\mathrm{Tb}}=\frac{L}{2}\left(\partial_{t}Q\right)^{2}-\frac{1}{2C}\left(\partial_{z}Q+b\partial_{z}q\right)^{2},\;\mathcal{L}_{\mathrm{B}}=\frac{1}{2\beta}\left(\partial_{t}q+\mathring{v}\partial_{z}q\right)^{2}-\frac{2\pi}{\sigma_{\mathrm{B}}}q^{2},\nonumber 
\end{gather}
where $q\left(z,t\right)$ and $Q\left(z,t\right)$\emph{ are charges}
associated with the e-beam and the TL defined as time integrals of
the corresponding e-beam currents $i(z,t)$ and TL current $I(z,t)$,
that is
\begin{equation}
q(z,t)=\int^{t}i(z,t^{\prime})\,\mathrm{d}t^{\prime},\quad.Q(z,t)=\int^{t}I(z,t^{\prime})\,\mathrm{d}t^{\prime}.\label{eq:MTLQ1aj}
\end{equation}
The corresponding Euler-Lagrange equations are the following system
of second-order differential equations
\begin{gather}
L\partial_{t}^{2}Q-\partial_{z}\left[C^{-1}\left(\partial_{z}Q+b\partial_{z}q\right)\right]=0,\label{eq:T1B1beta2aj}\\
\frac{1}{\beta}\left(\partial_{t}+\mathring{v}\partial_{z}\right)^{2}q+\frac{4\pi}{\sigma_{\mathrm{B}}}q-b\partial_{z}\left[C^{-1}\left(\partial_{z}Q+b\partial_{z}q\right)\right]=0,\quad\beta=\frac{\sigma_{\mathrm{B}}}{4\pi}R_{\mathrm{sc}}^{2}\omega_{\mathrm{p}}^{2}.\label{eq:T1B1beta2bj}
\end{gather}
The Fourier transformation (see Section \ref{sec:four}) in time $t$
and space variable $z$ of equations (\ref{eq:T1B1beta2aj}) and (\ref{eq:T1B1beta2bj})
yields
\begin{gather}
\left(k^{2}C^{-1}-\omega^{2}L\right)\hat{Q}+k^{2}C^{-1}b\hat{q}=0,\label{eq:T1B1beta2cj}\\
\frac{4\pi}{\sigma_{\mathrm{B}}}\left[1-\frac{\left(\omega-\mathring{v}k\right)^{2}}{R_{\mathrm{sc}}^{2}\omega_{\mathrm{p}}^{2}}\right]\hat{q}+k^{2}bC^{-1}\left[b\hat{q}+\hat{Q}\right]=0,\label{eq:T1B1beta2dj}
\end{gather}
where $\omega$ and $k=k\left(\omega\right)$ are the frequency and
the wavenumber, respectively, and functions $\hat{Q}=\hat{Q}\left(\omega,k\right)$
and $\hat{q}=\hat{q}\left(\omega,k\right)$ are the Fourier transforms
of the system vector variables $Q\left(t,z\right)$ and $q\left(t,z\right)$,
see Appendix \ref{sec:four}. In this case, the general TWT-system
eigenmodes are of the form
\begin{equation}
Q\left(z,t\right)=\hat{Q}\left(k,\omega\right)\mathrm{e}^{-\mathrm{i}\left(\omega t-kz\right)},\quad q\left(z,t\right)=\hat{q}\left(k,\omega\right)\mathrm{e}^{-\mathrm{i}\left(\omega t-kz\right)},\label{eq:T1B1qQ1aj}
\end{equation}
where as it turns out
\begin{equation}
\hat{q}\left(k,\omega\right)=a_{0}\frac{1}{\left(u-\mathring{v}\right)^{2}},\quad\hat{Q}\left(k,\omega\right)=a_{0}\frac{bw^{2}}{\left(u-\mathring{v}\right)^{2}\left(u^{2}-w^{2}\right)},\quad u=\frac{\omega}{k}.\label{eq:T1B1qQ1bj}
\end{equation}
In equation (\ref{eq:T1B1qQ1bj}) $a_{0}$ is a constant of proper
physical dimensions and, importantly, $u$ is the complex-valued characteristic
velocity satisfying characteristic equation (\ref{eq:DsT1B1cj}) provided
below.

Notice that dividing equations (\ref{eq:T1B1beta2cj})-(\ref{eq:T1B1beta2dj})
by $k^{2}$ yields
\begin{gather}
\left(k^{2}C^{-1}-u^{2}L\right)\hat{Q}+C^{-1}b\hat{q}=0,\quad u=\frac{\omega}{k},\label{eq:T1B1beta2cuj}\\
\frac{4\pi}{\sigma_{\mathrm{B}}}\left[\frac{u^{2}}{\omega^{2}}-\frac{\left(u-\mathring{v}\right)^{2}}{\Omega_{\mathrm{p}}^{2}}\right]\hat{q}+bC^{-1}\left[b\hat{q}+\hat{Q}\right]=0,\quad\Omega_{\mathrm{p}}=R_{\mathrm{sc}}\omega_{\mathrm{p}}.\label{eq:T1B1beta2duj}
\end{gather}
Concise matrix form of equations (\ref{eq:T1B1beta2cuj})-(\ref{eq:T1B1beta2duj})
is the following eigenvalue type problem for $k$ and $x$ assuming
that $\omega$ is fixed
\begin{equation}
M_{k\omega}x=0,\quad M_{k\omega}=\left[\begin{array}{rr}
-L\omega^{2}+\frac{k^{2}}{C} & \frac{bk^{2}}{C}\\
\frac{bk^{2}}{C} & \frac{b^{2}k^{2}}{C}+\frac{4\pi}{\sigma}\left(1-\frac{\left(k\mathring{v}-\omega\right)^{2}}{\Omega_{\mathrm{p}}^{2}}\right)
\end{array}\right],\quad x=\left[\begin{array}{r}
\hat{Q}\\
\hat{q}
\end{array}\right],\label{eq:MLkC1a}
\end{equation}
The problem (\ref{eq:MLkC1a}) is equivalent to another eigenvalue
type problem for $u$ and $x$ assuming that $\omega$ is fixed
\begin{gather}
M_{u\omega}x=0,\quad M_{u\omega}=\left[\begin{array}{rr}
\frac{1}{u^{2}}-\frac{1}{w^{2}} & \frac{b}{u^{2}}\\
\frac{b}{u^{2}} & \frac{b^{2}}{\gamma\breve{\omega}^{2}}+\frac{b^{2}}{u^{2}}-\frac{b^{2}\left(u-\mathring{v}\right)^{2}}{\gamma u^{2}}
\end{array}\right],\quad x=\left[\begin{array}{r}
\hat{Q}\\
\hat{q}
\end{array}\right],\label{eq:MLkC1b}\\
u=\frac{\omega}{k},\quad\breve{\omega}=\frac{\omega}{\Omega_{\mathrm{p}}}.\nonumber 
\end{gather}
where matrix $M_{u\omega}$ encodes the information about TWT eigenmodes
and we refer to it as \emph{TWT principal matrix}. As we show below
this eigenvalue problem can be recast in terms of the theory of matrix
polynomials reviewed in Section \ref{sec:mat-poly}.

Notice the first equation in (\ref{eq:MLkC1b}) is also kind of eigenvalue
problem where being given $w$, $\mathring{v}$, $b$, $\gamma$ and
$\breve{\omega}$ we need to find $u$ and two-dimensional nonzero
vector $x$ that solve it. The problem of finding such $u$ is evidently
reduced evidently to the characteristic equation
\begin{equation}
\det\left\{ M_{u\omega}\right\} =0,\label{eq:MLkC1c}
\end{equation}
where $M_{u\omega}$ is TWT principal matrix defined in equations
(\ref{eq:MLkC1b}). After elementary algebraic transformations of
equation (\ref{eq:MLkC1c}) we find that it is equivalent to the following
equation
\begin{equation}
\mathscr{D}\left(u,\gamma\right)=\frac{\gamma}{w^{2}-u^{2}}+\frac{\left(u-\mathring{v}\right)^{2}}{u^{2}}=\frac{1}{\breve{\omega}^{2}},\quad\breve{\omega}=\frac{\omega}{\Omega_{\mathrm{p}}},\quad u=\frac{\omega}{k},\label{eq:DsT1B1cj}
\end{equation}
and we refer to function $\mathscr{D}\left(u,\gamma\right)$ as the
\emph{characteristic function}. We refer to solutions of the characteristic
equation (\ref{eq:DsT1B1cj}) as \emph{characteristic velocities}.
Since the dimensionless frequency $\breve{\omega}$ in characteristic
equation (\ref{eq:DsT1B1cj}) is real, the characteristic equation
is equivalent to the following system of equations:
\begin{gather}
\Im\left\{ \mathscr{D}\left(u,\bar{\gamma}\right)\right\} =0,\quad\Re\left\{ \mathscr{D}\left(u,\bar{\gamma}\right)\right\} =\frac{1}{\breve{\omega}^{2}}\geq0.\label{eq:MTLBdis2aarj}
\end{gather}
With the above equations in mind, we denote the set of all characteristic
velocities $u$ by $\mathscr{U}_{\mathrm{TB}}^{+}=\mathscr{U}_{\mathrm{TB}}^{+}\left(\bar{\gamma}\right)$
\begin{equation}
\mathscr{U}_{\mathrm{TB}}^{+}=\mathscr{U}_{\mathrm{TB}}^{+}\left(\bar{\gamma}\right)=\left\{ u:\Re\left\{ \mathscr{D}\left(u,\bar{\gamma}\right)\right\} \geq0\right\} \label{eq:UTBplus}
\end{equation}
with superscript ``+'' being a reminder of $\Re\left\{ \mathscr{D}\left(u,\bar{\gamma}\right)\right\} \geq0$.

The TWT characteristic function $\mathscr{D}\left(u,\bar{\gamma}\right)$
defined by equation (\ref{eq:T1B1beta2cj}) can be represented in
the form 
\begin{equation}
\mathscr{D}\left(u,\gamma\right)=\gamma\mathscr{D}_{\mathrm{T}}\left(u\right)+\mathscr{D}_{\mathrm{B}}\left(u\right)=\frac{\gamma}{w^{2}-u^{2}}+\frac{\left(u-\mathring{v}\right)^{2}}{u^{2}},\quad0<\mathring{v}<w,\quad\gamma=\theta\beta,\label{eq:DsT1B1aj}
\end{equation}
where $\mathscr{D}_{\mathrm{T}}\left(u\right)$ and $\mathscr{D}_{\mathrm{B}}\left(u\right)$
are respectively characteristic functions of the TL and e-beam defined
by
\begin{equation}
\mathscr{D}_{\mathrm{T}}\left(u\right)=u^{-2}\Delta_{\mathrm{T}}\left(u\right)=\frac{1}{w^{2}-u^{2}},\quad\mathscr{D}_{\mathrm{B}}\left(u\right)=u^{-2}\Delta_{\mathrm{B}}^{-1}\left(u\right)=\frac{\left(u-\mathring{v}\right)^{2}}{u^{2}}.\label{eq:DsT1B1bj}
\end{equation}
Functions $\Delta_{\mathrm{T}}\left(u\right)$ and $\Delta_{\mathrm{B}}\left(u\right)$
are defined in turn by
\begin{equation}
\Delta_{\mathrm{T}}\left(u\right)=\frac{u^{2}}{w^{2}-u^{2}},\quad\Delta_{\mathrm{B}}\left(u\right)=\frac{1}{\left(u-\mathring{v}\right)^{2}},\quad u=\frac{\omega}{k}.\label{eq:DsT1B1baj}
\end{equation}

Using dimensionless variables
\begin{equation}
\breve{\omega}=\frac{\omega}{\Omega_{\mathrm{p}}},\quad\Omega_{\mathrm{p}}=R_{\mathrm{sc}}\omega_{\mathrm{p}},\quad\check{\gamma}=\frac{\gamma}{\mathring{v}{}^{2}},\quad\check{u}=\frac{u}{\mathring{v}}=\frac{\omega}{k\mathring{v}},\quad\chi=\frac{w}{\mathring{v}}.\label{eq:MLkC1d}
\end{equation}
 we can recast eigenmode equation (\ref{eq:MLkC1b}) as follows
\begin{equation}
M_{u\omega}x=0,\quad M_{u\omega}=\left[\begin{array}{rr}
\frac{1}{\check{u}{}^{2}}-\frac{1}{\chi^{2}} & \frac{b}{\check{u}{}^{2}}\\
\frac{b}{\check{u}{}^{2}} & \left(\frac{1}{\check{u}{}^{2}}+\frac{1}{\check{\gamma}}\left(\frac{1}{\check{\omega}{}^{2}}-\frac{\left(\check{u}-1\right)^{2}}{\check{u}^{2}}\right)\right)b^{2}
\end{array}\right],\quad x=\left[\begin{array}{r}
\hat{Q}\\
\hat{q}
\end{array}\right].\label{eq:MLkC1e}
\end{equation}
It is convenient to recast the first equation into the standard matrix
polynomial in $\check{u}$ by multiplying it by $\check{u}{}^{2}$.
That action after elementary algebraic transformations yields
\begin{gather}
\mathsf{M}_{u\omega}x=0,\quad\mathsf{M}_{u\omega}=D_{b}^{\prime}\left[\begin{array}{rr}
\check{u}{}^{2}-\chi^{2} & -\chi^{2}\\
\frac{\check{\gamma}}{\check{\omega}{}^{-2}-1} & \check{u}{}^{2}+\frac{2\check{u}}{\check{\omega}{}^{-2}-1}+\frac{\check{\gamma}-1}{\check{\omega}{}^{-2}-1}
\end{array}\right]D_{b},\quad x=\left[\begin{array}{r}
\hat{Q}\\
\hat{q}
\end{array}\right],\label{eq:MLkC2a}
\end{gather}
where
\begin{equation}
D_{b}^{\prime}=\left[\begin{array}{rr}
-\frac{1}{\chi^{2}} & 0\\
0 & \frac{b\left(\check{\omega}{}^{-2}-1\right)}{\check{\gamma}}
\end{array}\right],\quad D_{b}=\left[\begin{array}{rr}
1 & 0\\
0 & b
\end{array}\right].\label{eq:MLkC2b}
\end{equation}
Matrix polynomial equation (\ref{eq:MLkC2a}) can in turn be readily
recast into the following equivalent form
\begin{gather}
\mathsf{M}_{u\omega}x=0,\quad\mathsf{M}_{u\omega}=\left[\begin{array}{rr}
\check{u}{}^{2}-\chi^{2} & -\chi^{2}\\
\frac{\check{\gamma}}{\check{\omega}{}^{-2}-1} & \check{u}{}^{2}+\frac{2\check{u}}{\check{\omega}{}^{-2}-1}+\frac{\check{\gamma}-1}{\check{\omega}{}^{-2}-1}
\end{array}\right],\quad x=\left[\begin{array}{r}
\hat{Q}\\
b\hat{q}
\end{array}\right].\label{eq:MLkC2c}
\end{gather}
We refer to matrix $\mathsf{M}_{u\omega}$ defined in equations (\ref{eq:MLkC2c})
\emph{TWT matrix polynomial}. The theory of matrix polynomials is
reviewed in Section \ref{sec:mat-poly}. From now on the matrix monic
polynomial in $\check{u}$ equation (\ref{eq:MLkC2c}) will be our
preferred form. The corresponding characteristic equation equivalent
to $\det\left\{ \mathsf{M}_{u\omega}\right\} =0$ and related to it
characteristic function $\mathscr{D}\left(\check{u},\check{\gamma}\right)$
and the corresponding equation are as follows

\begin{gather}
\mathscr{D}\left(\check{u},\check{\gamma}\right)=\frac{\check{\gamma}}{\chi^{2}-\check{u}{}^{2}}+\frac{\left(\check{u}-1\right)^{2}}{\check{u}{}^{2}}=\frac{1}{\breve{\omega}^{2}}.\label{eq:DsT1B1dj}
\end{gather}

The conventional plot of the characteristic function $\mathscr{D}\left(u,\gamma\right)$
helps to visualize solutions $u$ to the characteristic equations
(\ref{eq:DsT1B1cj}), that is, $\mathscr{D}\left(u,\bar{\gamma}\right)=\frac{1}{\breve{\omega}^{2}}$,
when the solutions are real-valued. But a non-real solution $u$ to
the characteristic equation, that is when $\Im\left\{ u\right\} \neq0$
is ``invisible'' in the conventional plot of the characteristic
function $\mathscr{D}\left(u,\bar{\gamma}\right)$. To remedy this,
we extend the conventional following to \cite[8]{FigTWTbk}. Suppose
that for $D>0$ equation $\mathscr{D}\left(u,\bar{\gamma}\right)=D$
has a complex-valued solution $u$, $\Im\left\{ u\right\} \neq0$.
We then represent such complex $u$ in the plot of $\mathscr{D}\left(u,\bar{\gamma}\right)$
as a point

\begin{equation}
\left(\Re\left\{ u\right\} ,D\right),\text{ where }\mathscr{D}\left(u,\bar{\gamma}\right)=D>0.\label{eq:topinstab1aj}
\end{equation}
We distinguish such points $\left(\Re\left\{ u\right\} ,D\right)$
by plotting curves comprised from them as solid (brown). We refer
to the so extended plot as the \emph{plot of the characteristic function
with instability branches}. We also refer to the endpoints $\left(\mathsf{u}_{0},\mathscr{D}\left(\mathsf{u}_{0},\bar{\gamma}\right)\right)$
of the instability branches as the \emph{characteristic function instability
nodes}. When generating plots and numerical examples we often use
the following data
\begin{equation}
\mathring{v}=1,\quad w=1.1,\quad\gamma=3.\label{eq:datavwgam}
\end{equation}

Figures \ref{fig:char-instab-T1B1-gam3-2}, \ref{fig:char-instab-T1B1-gam3-1},
and \ref{fig:char-instab-T1B1-gam3-3} show plots of the characteristic
function $\mathscr{D}\left(u,\gamma\right)$ defined in equations
(\ref{eq:DsT1B1cj}) with its instability branches.
\begin{figure}
\begin{centering}
\includegraphics[scale=0.55]{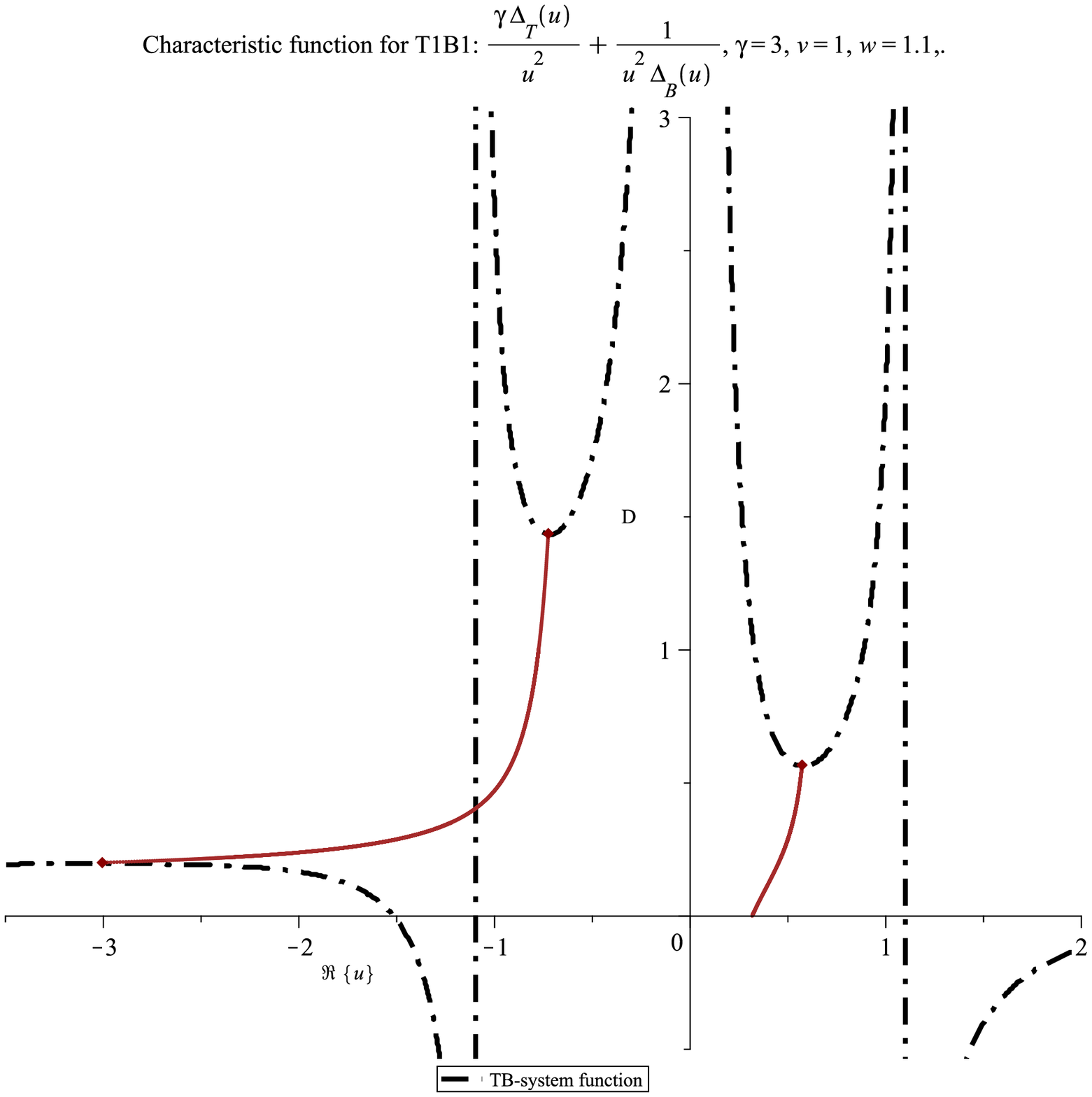}
\par\end{centering}
\centering{}\caption{\label{fig:char-instab-T1B1-gam3-2} Single-stream e-beam coupled
to a single TL. Plot of the characteristic function $\mathscr{D}\left(u,\gamma\right)=\frac{\gamma}{w^{2}-u^{2}}+\frac{\left(u-\mathring{v}\right)^{2}}{u^{2}}$
with its instability branches for $\mathring{v}=1$, $w=1.1$, $\gamma=3$
(horizontal axis \textendash{} $\Re\left\{ u\right\} $, vertical
axis \textendash{} $D$). The instability branches are represented
by $\mathscr{D}\left(\Re\left\{ u\right\} ,\gamma\right)$, where
$u$ is complex characteristic velocity, $\Im\left\{ u\right\} \protect\neq0$,
satisfying equations (\ref{eq:T1B1beta2cj}), that is, $\mathscr{D}\left(u,\gamma\right)=\frac{1}{\breve{\omega}^{2}}$.
Dash-dot (black) lines represent the plot of $\mathscr{D}\left(u,\gamma\right)$
for real $u$, solid (brown) lines represent unstable branches with
$\Im\left\{ u\right\} \protect\neq0$. The characteristic function
instability nodes are represented by solid (brown) diamond dots. Vertical
dash-dot (black) straight lines represent asymptotes associated with
real-valued poles of function $\gamma\mathscr{D}_{\mathrm{T}}\left(u\right)=\frac{\gamma}{w^{2}-u^{2}}$
and consequently of the characteristic function $\mathscr{D}\left(u,\gamma\right)$.}
\end{figure}
\begin{figure}
\begin{centering}
\includegraphics[scale=0.55]{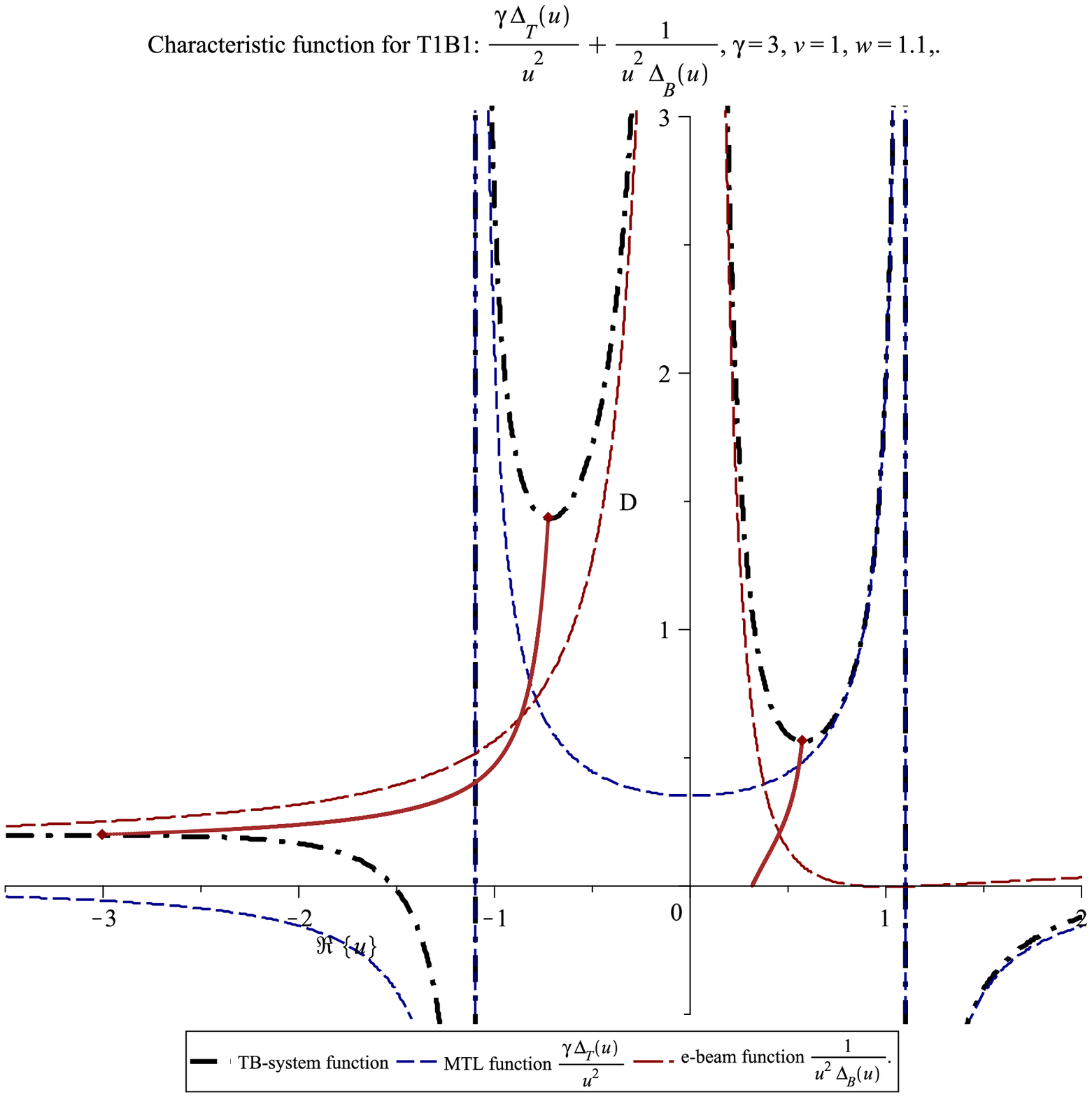}
\par\end{centering}
\centering{}\caption{\label{fig:char-instab-T1B1-gam3-1} Single-stream e-beam coupled
to a single TL. Plot of the characteristic function $\mathscr{D}\left(u,\gamma\right)=\frac{\gamma}{w^{2}-u^{2}}+\frac{\left(u-\mathring{v}\right)^{2}}{u^{2}}$
with its instability branches for $\mathring{v}=1$, $w=1.1$, $\gamma=3$
(horizontal axis \textendash{} $\Re\left\{ u\right\} $, vertical
axis \textendash{} $D$). The instability branches are represented
by $\mathscr{D}\left(\Re\left\{ u\right\} ,\gamma\right)$, where
$u$ is complex characteristic velocity, $\Im\left\{ u\right\} \protect\neq0$,
satisfying equations (\ref{eq:T1B1beta2cj}), that is, $\mathscr{D}\left(u,\gamma\right)=\frac{1}{\breve{\omega}^{2}}$.
Dash-dot (black) lines represent the plot of $\mathscr{D}\left(u,\gamma\right)$
for real $u$, solid (brown) lines represent unstable branches with
$\Im\left\{ u\right\} \protect\neq0$, and dashed (brown, blue) lines
represent, respectively, plots of functions $\gamma\mathscr{D}_{\mathrm{T}}\left(u\right)=\frac{\gamma}{w^{2}-u^{2}}$
and $\mathscr{D}_{\mathrm{B}}\left(u\right)=\frac{\left(u-\mathring{v}\right)^{2}}{u^{2}}$.
The characteristic function instability nodes are represented by solid
(brown) diamond dots. Vertical dash-dot (black) straight lines represent
asymptotes associated with real-valued poles of function $\frac{\gamma}{w^{2}-u^{2}}$
and consequently of the characteristic function $\mathscr{D}\left(u,\gamma\right)$.}
\end{figure}
\begin{figure}[h]
\begin{centering}
\includegraphics[scale=0.3]{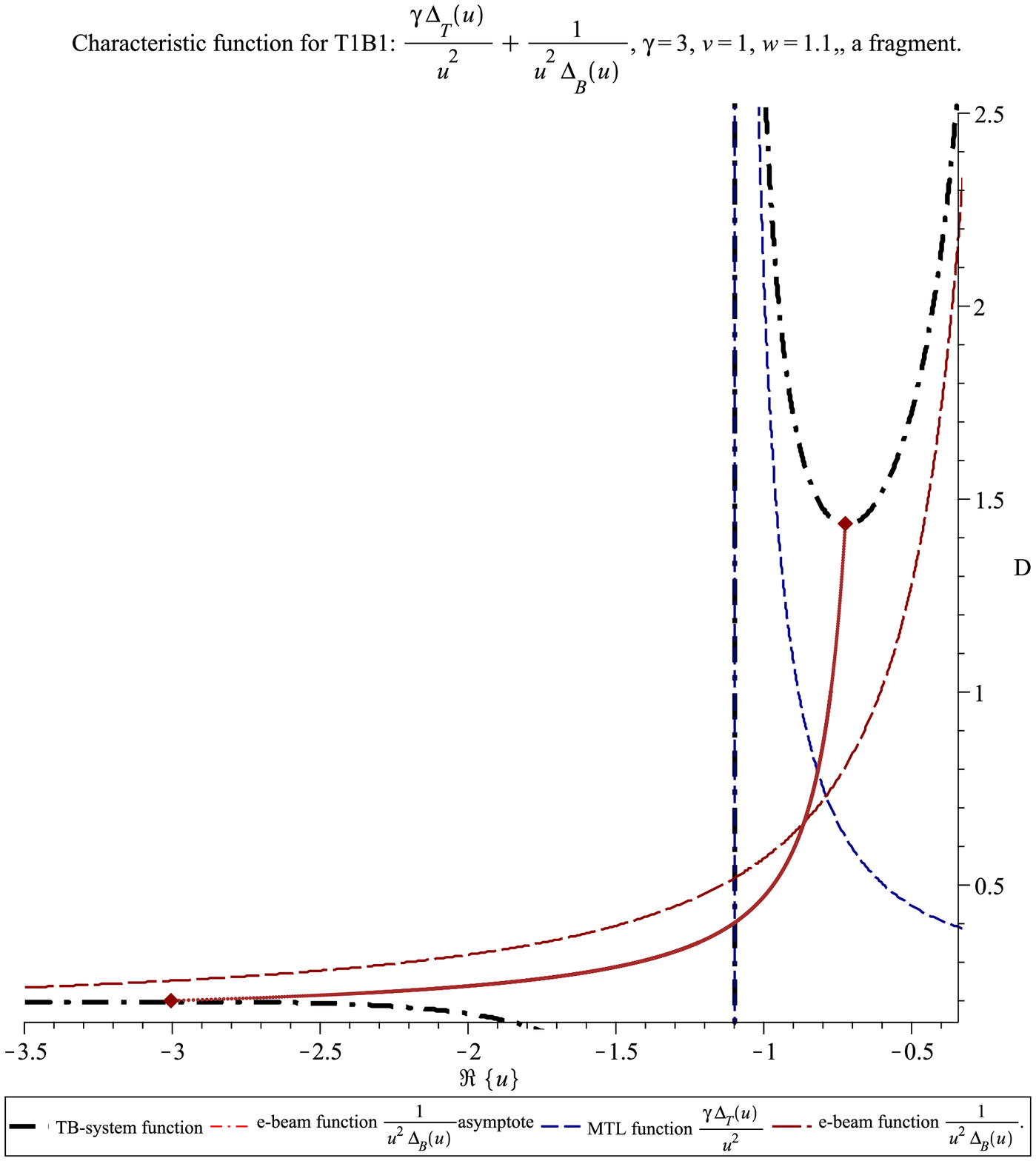}\hspace{0.5cm}\includegraphics[scale=0.32]{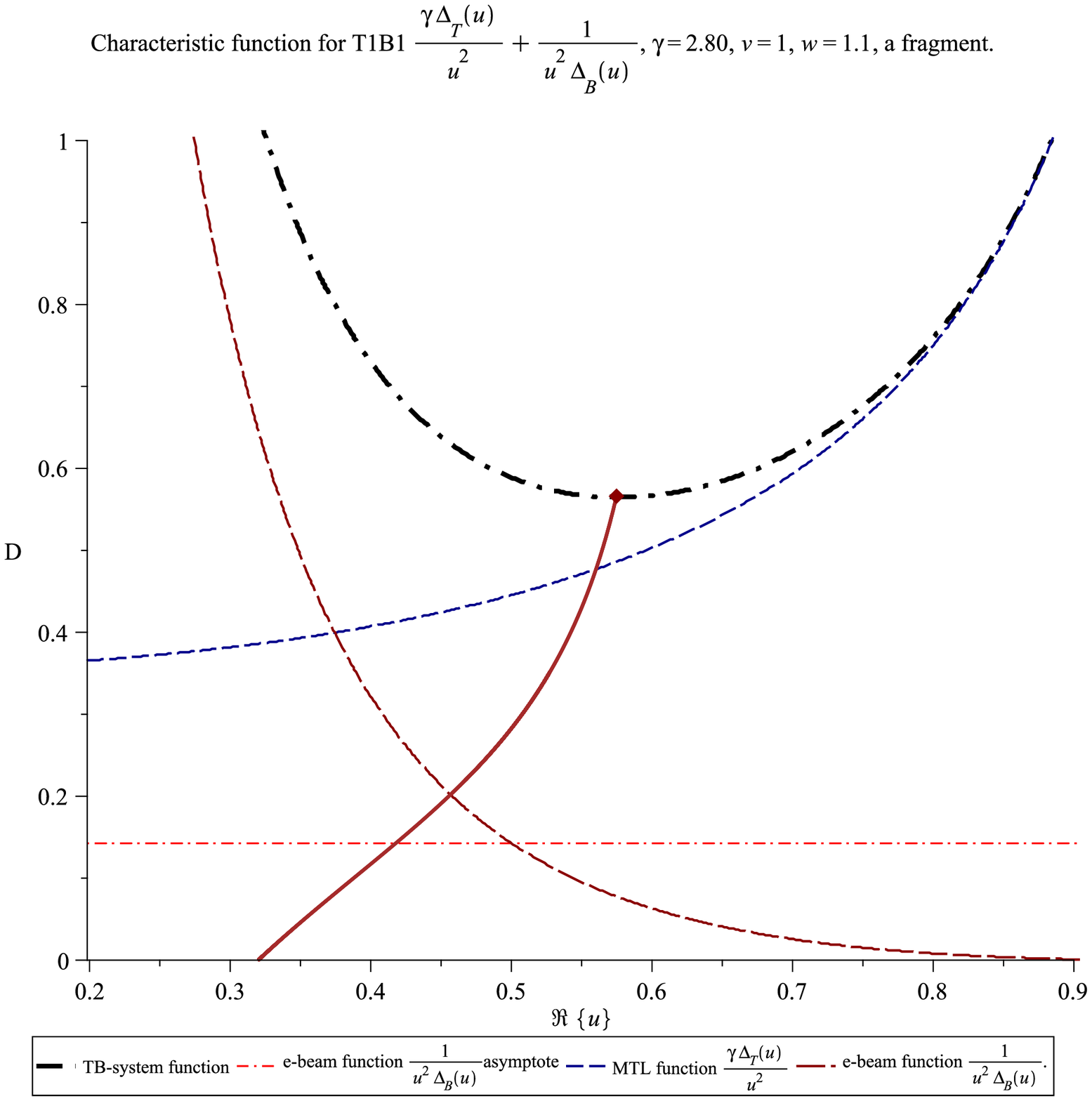}
\par\end{centering}
\centering{}(a)\hspace{5cm}(b)\caption{\label{fig:char-instab-T1B1-gam3-3} Single-stream e-beam coupled
to a single TL. Zoomed fragments of the plot in Fig. \ref{fig:char-instab-T1B1-gam3-1}
of the characteristic function $\mathscr{D}\left(\Re\left\{ u\right\} ,\gamma\right)$
with its instability branches (horizontal axis \textendash{} $\Re\left\{ u\right\} $,
vertical axis \textendash{} $D$) for $\mathring{v}=1$, $w=1.1$,
$\gamma=3$: (a) negative $u$ and $\Re\left\{ u\right\} $; (b) positive
$u$ and $\Re\left\{ u\right\} $. The instability branches are represented
by $\mathscr{D}\left(\Re\left\{ u\right\} ,\gamma\right)$, where
$u$ is complex characteristic velocity, $\Im\left\{ u\right\} \protect\neq0$,
satisfying equations (\ref{eq:T1B1beta2cj}), that is, $\mathscr{D}\left(u,\gamma\right)=\frac{1}{\breve{\omega}^{2}}$.
Dash-dot (black) lines represent the plot of $\mathscr{D}\left(u,\gamma\right)$
for real $u$, solid (brown) lines represent unstable branches with
$\Im\left\{ u\right\} \protect\neq0$, and dashed (brown, blue) lines
represent plots of functions $\gamma\mathscr{D}_{\mathrm{T}}\left(u\right)=\frac{\gamma}{w^{2}-u^{2}}$
and $\mathscr{D}_{\mathrm{B}}\left(u\right)=\frac{\left(u-\mathring{v}\right)^{2}}{u^{2}}$,
respectively. The characteristic function instability nodes are represented
by solid (brown) diamond dots. Vertical dash-dot (black) straight
lines represent the asymptotes associated with real-valued poles of
function $\frac{\gamma}{w^{2}-u^{2}}$ and consequently of the characteristic
function $\mathscr{D}\left(u,\gamma\right)$.}
\end{figure}

\subsection{Dispersion relations\label{subsec:dis-omu}}

Following to \cite[4]{FigTWTbk} we define for any characteristic
velocity $u$ from characteristic velocity set $\mathscr{U}_{\mathrm{TB}}^{+}\left(\bar{\gamma}\right)$
as in equations (\ref{eq:UTBplus})\emph{ the TWT frequency function}
$\Omega\left(u\right)$ so that it is the frequency corresponding
to $u$ in the characteristic equation (\ref{eq:DsT1B1cj}), namely
\begin{equation}
\Omega\left(u\right)=\frac{\Omega_{\mathrm{p}}}{\sqrt{\mathscr{D}\left(u,\bar{\gamma}\right)}}>0,\quad\Omega_{\mathrm{p}}=R_{\mathrm{sc}}\omega_{\mathrm{p}},\quad u\in\mathscr{U}_{\mathrm{TB}}^{+}\left(\bar{\gamma}\right),\quad\mathscr{D}\left(u,\bar{\gamma}\right)>0.\label{eq:omOmu1aj}
\end{equation}
Notice that by the definition of the characteristic velocity $u$,
the value of $\mathscr{D}\left(u,\bar{\gamma}\right)$ must be real
and positive, and the square root $\sqrt{\mathscr{D}\left(u,\bar{\gamma}\right)}$
is assumed to be positive as well. Consequently, $\Omega\left(u\right)$
is real-valued. Also note that if $u_{0}$ is a real-valued characteristic
velocity, then for any real $u$ belonging to its sufficiently small
vicinity, we have $\mathscr{D}\left(u,\bar{\gamma}\right)>0$, implying
that it is also a characteristic velocity. Consequently, $\Omega\left(u\right)$
defined by equation (\ref{eq:omOmu1aj}) is a real analytic function
in the vicinity of $u_{0}$. We may also view the equation
\begin{equation}
\Omega\left(u\right)=\frac{\Omega_{\mathrm{p}}}{\sqrt{\mathscr{D}\left(u,\bar{\gamma}\right)}}=\omega\label{eq:omOmu1aomj}
\end{equation}
as an equivalent form of the characteristic equation (\ref{eq:DsT1B1cj})
for $u$ in the vicinity of $u_{0}$. Equation (\ref{eq:omOmu1aomj})
can be viewed also as a form of the dispersion relations and refer
to it \emph{velocity dispersion relation relation}. Using the frequency
function $\Omega\left(u\right)$, we naturally define \emph{the TWT
wavenumber function} $K\left(u\right)$ by the following equation:
\begin{equation}
K\left(u\right)=\frac{\Omega\left(u\right)}{u},\quad u\in\mathscr{U}_{\mathrm{TB}}^{+}\left(\bar{\gamma}\right),\quad\Omega\left(u\right)>0.\label{eq:omOmu1aK}
\end{equation}

To visualize features of the TWT instability in its dispersion relation
we proceed as follows \cite[7]{FigTWTbk}. We represent the set of
all oscillatory and unstable modes of the TWT-system geometrically
by the set $\Pi_{\mathrm{TB}}$ of the corresponding modal points
$\left(k\left(\omega\right),\omega\right)$ and $\left(\Re\left\{ k\left(\omega\right)\right\} ,\omega\right)$
in the $k\omega$-plane. We name the set $\Pi_{\mathrm{TB}}$ as the
\emph{dispersion-instability graph}. To distinguish graphically points
$\left(k\left(\omega\right),\omega\right)$ associated oscillatory
modes when $k\left(\omega\right)$ is real-valued from points $\left(\Re\left\{ k\left(\omega\right)\right\} ,\omega\right)$
associated unstable modes when $k\left(\omega\right)$ is complex-valued
with $\Im\left\{ k\left(\omega\right)\right\} \neq0$, we assign them
colors as follows: (i) blue color is assigned to points $\left(k\left(\omega\right),\omega\right)$
when $k$ is real-valued; (ii) brown color is assigned to points $\left(\Re\left\{ k\left(\omega\right)\right\} ,\omega\right)$
when $k\left(\omega\right)$ is complex-valued with $\Im\left\{ k\left(\omega\right)\right\} \neq0$.
We remind once again that every brown point $\left(\omega,\Re\left\{ k\left(\omega\right)\right\} \right)$
represents exactly two complex conjugate unstable modes associated
with $\pm\Im\left\{ k\left(\omega\right)\right\} $. We introduce
then two subsets of the set $\Pi_{\mathrm{TB}}$, namely, $\Pi_{\mathrm{TBo}}$
and $\Pi_{\mathrm{TBu}}$ representing, respectively, all oscillatory
and unstable modes. Evidently, the sets $\Pi_{\mathrm{TBo}}$ and
$\Pi_{\mathrm{TBu}}$ are disjoint, and they partition the set $\Pi_{\mathrm{TB}}$,
that is, $\Pi_{\mathrm{TB}}=\Pi_{\mathrm{TBo}}\cup\Pi_{\mathrm{TBu}}$.
According to the color assignments, the points of sets $\Pi_{\mathrm{TBo}}$
and $\Pi_{\mathrm{TBu}}$ are, respectively, blue and brown. Fig.
\ref{fig:dis-instab-T1B1-gam3} shows a typical dispersion-instability
graph for the data as in equation (\ref{eq:datavwgam}).
\begin{figure}[h]
\begin{centering}
\includegraphics[scale=0.45]{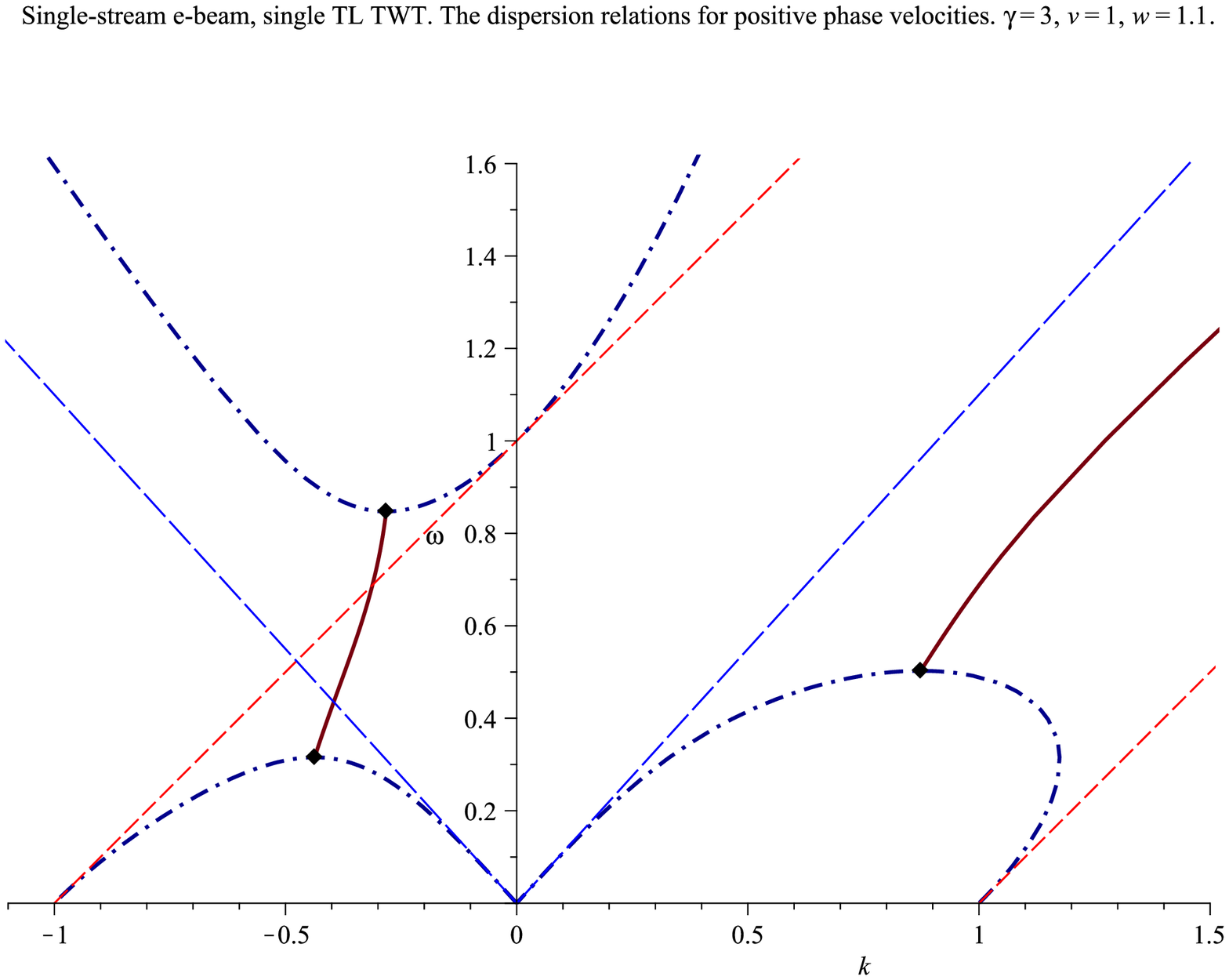}
\par\end{centering}
\centering{}\caption{\label{fig:dis-instab-T1B1-gam3} Single-stream e-beam coupled to
a single TL. Dispersion-instability graph (horizontal axis \textendash $\Re\left\{ k\right\} $,
vertical axis \textendash{} $\omega$) for $\mathring{v}=1$, $w=1.1$,
$\gamma=3$. Solid (brown) curves represent unstable branches with
$\Im\left\{ k\right\} \protect\neq0$, dash-dotted (black) lines represent
oscillatory branches with $\Im\left\{ k\right\} =0$, dashed lines
represent the dispersion relations for uncoupled TL (blue, converging
to the origin) and for an uncoupled e-beam (red). The instability
nodes are represented by solid (brown) diamond dots.}
\end{figure}

\subsection{Nodal velocities, nodal function and equation\label{subsec:nod-func}}

We consider here the \emph{nodal velocities} which are the phase velocities
that signify the onset of the TWT instability. According to our theory
developed in \cite[13, 30]{FigTWTbk} the nodal velocities are solutions
to the nodal equation
\begin{equation}
\mathscr{R}\left(u\right)=-\frac{\partial_{u}\mathscr{D}_{\mathrm{B}}\left(u\right)}{\partial_{u}\mathscr{D}_{\mathrm{T}}\left(u\right)}=\frac{\mathring{v}\left(\mathring{v}-u\right)\left(w^{2}-u^{2}\right)^{2}}{u^{4}}=\gamma,\label{eq:RuT1B1aj}
\end{equation}
where functions $\mathscr{D}_{\mathrm{T}}\left(u\right)$ and $\mathscr{D}_{\mathrm{B}}\left(u\right)$
are defined by equations (\ref{eq:DsT1B1bj}) and $\gamma$ is TWT
principal parameter defined equations (\ref{eq:DsT1B1hj}). We refer
to function $\mathscr{R}\left(u\right)$ in the \emph{nodal function}.
Figure \ref{fig:T1B1-node1} illustrates graphically features of the
nodal function $\mathscr{R}\left(u\right)$ for a TWT composed of
a single-stream e-beam and a single TL. Equation (\ref{eq:RuT1B1aj})
can be recast in terms of dimensionless variables as follows:
\begin{equation}
\mathscr{R}\left(\check{u}\right)=\frac{\left(1-\check{u}\right)\left(\chi^{2}-\check{u}^{2}\right)^{2}}{\check{u}^{4}}=\check{\gamma},\quad\check{\gamma}=\frac{\gamma}{\mathring{v}{}^{2}},\quad\check{u}=\frac{u}{\mathring{v}},\quad\chi=\frac{w}{\mathring{v}},\label{eq:RT1B1nod1aj}
\end{equation}
where $\check{\gamma}$ is the dimensionless TWT principal parameter.
If $\mathsf{u}_{0}$ is the nodal velocity satisfying equation (\ref{eq:RT1B1nod1aj}),
then the corresponding instability node frequency $\omega_{0}$ as
well as instability node wavenumber $k_{0}$ can be expressed in terms
of the frequency function $\Omega\left(u\right)$ defined by equation
(\ref{eq:omOmu1aj}), that is,
\begin{equation}
\omega_{0}=\Omega\left(\mathsf{u}_{0}\right)=\frac{\Omega_{\mathrm{p}}}{\sqrt{\mathscr{D}\left(\mathsf{u}_{0},\bar{\gamma}\right)}},\quad k_{0}=\frac{\omega_{0}}{\mathsf{u}_{0}},\label{eq:RT1B1nod1bj}
\end{equation}
where $\Omega\left(u\right)$ is defined by equations (\ref{eq:omOmu1aj}).

Observe that the nodal equation (\ref{eq:RuT1B1aj}) has exactly three
real-valued solutions: (i) one positive $\mathsf{u}^{+}\left(\bar{\gamma}\right)>0$;
(ii) two negative $\mathsf{u}_{1}^{-}\left(\bar{\gamma}\right)<\mathsf{u}_{2}^{-}\left(\bar{\gamma}\right)<0$.
Notice also since in equation (\ref{eq:omOmu1aj}) $\gamma>0$ then
$\mathsf{u}^{+}\left(\bar{\gamma}\right)<1$ and hence we have
\begin{equation}
0<\mathsf{u}^{+}\left(\bar{\gamma}\right)<1.\label{eq:upgam1a}
\end{equation}
 
\begin{figure}[h]
\begin{centering}
\hspace{-0.5cm}\includegraphics[scale=0.3]{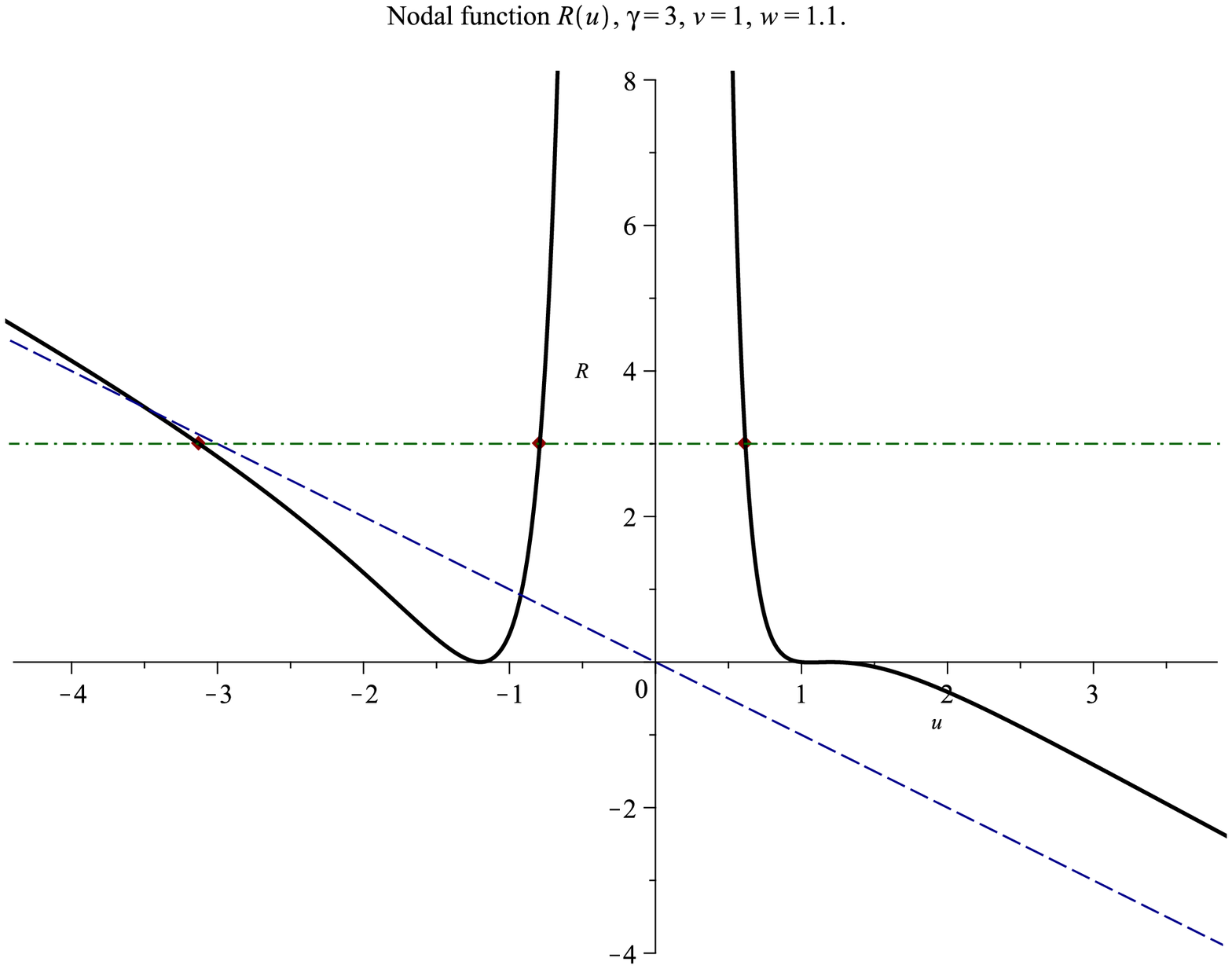}\hspace{0.8cm}\includegraphics[scale=0.24]{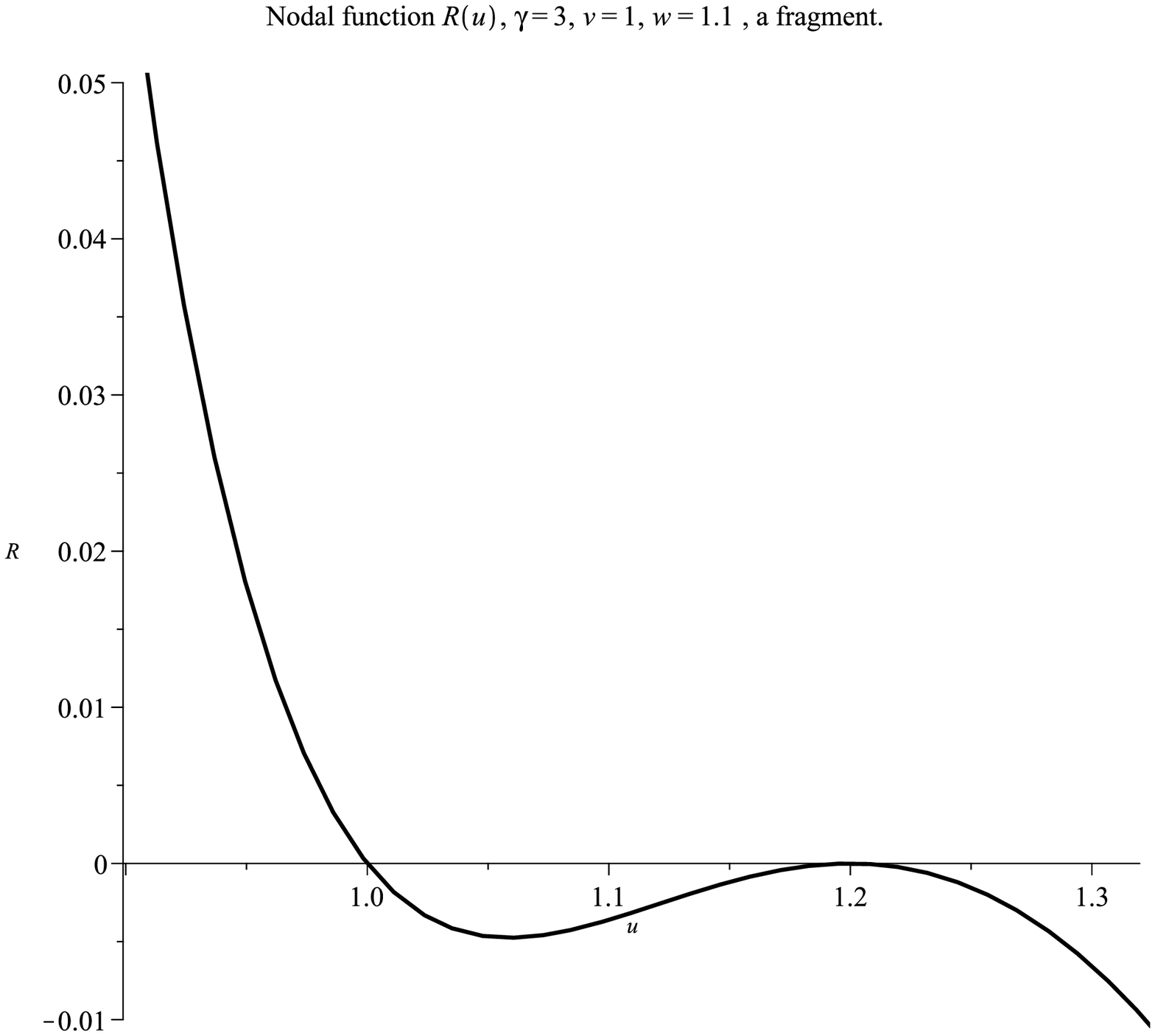}
\par\end{centering}
\centering{}(a)\hspace{5cm}(b)\caption{\label{fig:T1B1-node1} Single e-beam coupled to a single TL. Plot
(a) of the nodal function $\mathscr{R}\left(u\right)=\frac{v\left(v-u\right)\left(w^{2}-u^{2}\right)^{2}}{u^{4}}$
and (b) its fragment (horizontal axis \textendash{} $u$, vertical
axis \textendash{} $R$) for $\mathring{v}=1$, $w=1.1$. Solid (black)
line represents function $\mathscr{R}\left(u\right)$, and dashed
(blue) line represents asymptotics function $\mathscr{R}\left(u\right)$
as $u\rightarrow\infty$, the dash-dot (green) line represents constant
$\gamma=3$, and solid (brown) diamond dots represent solutions for
the instability node equation (\ref{eq:RuT1B1aj}), that is, $\mathscr{R}\left(u\right)=\gamma=3.$}
\end{figure}

\subsection{Nodal points as points of degeneracy of the dispersion relations\label{subsec:nod-deg}}

The diamond (brown) dots Fig. \ref{fig:dis-instab-T1B1-gam3} identify
the three points of the degeneracy of the dispersion relations. Vicinites
of these points demonstrate graphically the transition to instability
when points on dot-dashed (blue) lines associated with oscillatory
modes with real phase velocities $u$ lines and real wavenumbers $k$
merge with points on solid (brown) lines associated with complex-conjugate
pairs of unstable modes with non-real real phase velocities at nontrivial
$k\omega$-nodes.

Importantly according to our studies in \cite[13, 30]{FigTWTbk} the
nodal velocities are in fact the points of degeneracy of the dispersion
relation. In addition to that,\emph{ any nodal velocity $\mathsf{u}_{0}$
has to be a real number which is an extreme point of characteristic
function $\mathscr{D}\left(u,\bar{\gamma}\right)$}. Consequently,
$\mathsf{u}_{0}$ has to be an extreme point function $\Omega\left(u\right)$
defined by equation (\ref{eq:omOmu1aj}). Based on this, we can assume
that
\begin{equation}
\left.\partial_{u}\Omega\left(u\right)\right|_{\mathsf{u}_{0}}=0,\quad\left.\partial_{u}^{2}\Omega\left(u\right)\right|_{\mathsf{u}_{0}}\neq0,\label{eq:omOmu1bj}
\end{equation}
that is, $\mathsf{u}_{0}$ is a point of degeneracy of second-order
of function $\Omega\left(u\right)$.

To assess the analytic properties of solutions for the characteristic
equation (\ref{eq:DsT1B1cj}) in the vicinity of $\mathsf{u}_{0}$,
we introduce the following ``small'' dimensionless parameters, which
are useful for our studies of local extrema of function $\Omega\left(u\right)$
defined by equations (\ref{eq:omOmu1aj}):
\begin{gather}
\delta_{u}=\frac{u-\mathsf{u}_{0}}{\mathsf{u}_{0}},\quad\delta_{\omega}=\frac{\omega-\omega_{0}}{\omega_{0}},\quad\delta_{k}=\frac{k-k_{0}}{k_{0}},\quad\kappa=\Re\left\{ \delta_{k}\right\} ,\label{eq:omOmu2aj}\\
\eta=\left[\xi_{0}^{-1}\delta_{\omega}\right]^{\frac{1}{2}}=\left[\frac{\omega-\omega_{0}}{\omega_{2}}\right]^{\frac{1}{2}},\quad\xi_{0}=\frac{\omega_{2}}{\omega_{0}},\quad\omega_{2}=\frac{\mathsf{u}_{0}^{2}}{2}\left.\partial_{u}^{2}\Omega\left(u\right)\right|_{\mathsf{u}_{0}}\neq0,\label{eq:omOmu2bj}
\end{gather}
where the square roots in equation (\ref{eq:omOmu2bj}) are naturally
defined up to a factor $\pm1$. Notice that parameter $\xi_{0}$ can
positive or negative depending on the sign of $\omega_{2}$. Observe
also that $\delta_{k}$ can be expressed in terms of $\eta$ and $\delta_{u}$
by the following formulas:
\begin{gather}
\delta_{k}=\frac{k-k_{0}}{k_{0}}=\frac{1+\delta_{\omega}}{1+\delta_{u}}-1=\frac{1+\xi_{0}\eta^{2}}{1+\delta_{u}}-1,\label{eq:omOmu2cj}\\
k=\frac{\omega}{u},\quad k_{0}=\frac{\omega_{0}}{\mathsf{u}_{0}}.\label{eq:omOmu2dj}
\end{gather}
Then using the analyticity of function $\Omega\left(u\right)$, we
introduce its power series
\begin{gather}
\Omega\left(u\right)=\omega_{0}+\omega_{2}\delta_{u}^{2}\left[1+\sum_{n\geq3}\Omega_{n}\delta_{u}^{n-2}\right],\quad\delta_{u}=\frac{u-\mathsf{u}_{0}}{\mathsf{u}_{0}},\label{eq:omOmu1cj}\\
\omega_{2}=\frac{\mathsf{u}_{0}^{2}}{2}\left.\partial_{u}^{2}\Omega\left(u\right)\right|_{\mathsf{u}_{0}}\neq0,\quad\Omega_{n}=\frac{\mathsf{u}_{0}^{n}}{\omega_{2}}\frac{\left.\partial_{u}^{n}\Omega\left(u\right)\right|_{\mathsf{u}_{0}}}{n!},\quad n\geq3,\nonumber 
\end{gather}
where $\omega_{0}$ and $\omega_{2}$ have the physical dimensions
of frequency, whereas variable $\delta_{u}$ and coefficients $\Omega_{n}$
are dimensionless. Notice that coefficients $\omega_{0}$, $\omega_{2}$
and $\Omega_{n}$ are all real.

Using dimensionless variables (\ref{eq:omOmu2cj}) and series (\ref{eq:omOmu1cj})
we can recast equation (\ref{eq:omOmu1aj}) as follows:
\begin{equation}
\delta_{\omega}=\xi_{0}\delta_{u}^{2}\left[1+\sum_{n\geq3}\Omega_{n}\delta_{u}^{n-2}\right].\label{eq:omOmu1aaj}
\end{equation}
The analysis of equations (\ref{eq:omOmu2aj})-(\ref{eq:omOmu1aaj})
carried out in \cite[13, 54]{FigTWTbk} provides the following power
series approximation that relates $\delta_{\omega}$ and $\kappa=\Re\left\{ \delta_{k}\right\} $:
\begin{gather}
\delta_{\omega}=\frac{\omega-\omega_{0}}{\omega_{0}}=\left\{ \begin{array}{rcl}
\xi_{0}\kappa^{2}+\cdots, & \text{if} & \xi_{0}^{-1}\delta_{\omega}>0\\
\frac{2\xi_{0}}{2+2\xi_{0}+\Omega_{3}}\kappa+\cdots, & \text{ if } & \xi_{0}^{-1}\delta_{\omega}<0
\end{array}\right.,\quad\kappa=\Re\left\{ \delta_{k}\right\} =\Re\left\{ \frac{k-k_{0}}{k_{0}}\right\} .\label{eq:omOmu6j}
\end{gather}
Solutions to characteristic equation (\ref{eq:DsT1B1cj}) or the equivalent
to it equation (\ref{eq:omOmu1aomj}) can be transformed into dispersion
relation $\omega=\omega\left(\Re\left\{ k\right\} \right)$, which
is represented graphically in Fig. \ref{fig:instab-approxj} for the
data as in equation (\ref{eq:datavwgam}).
\begin{figure}
\begin{centering}
\includegraphics[scale=0.4]{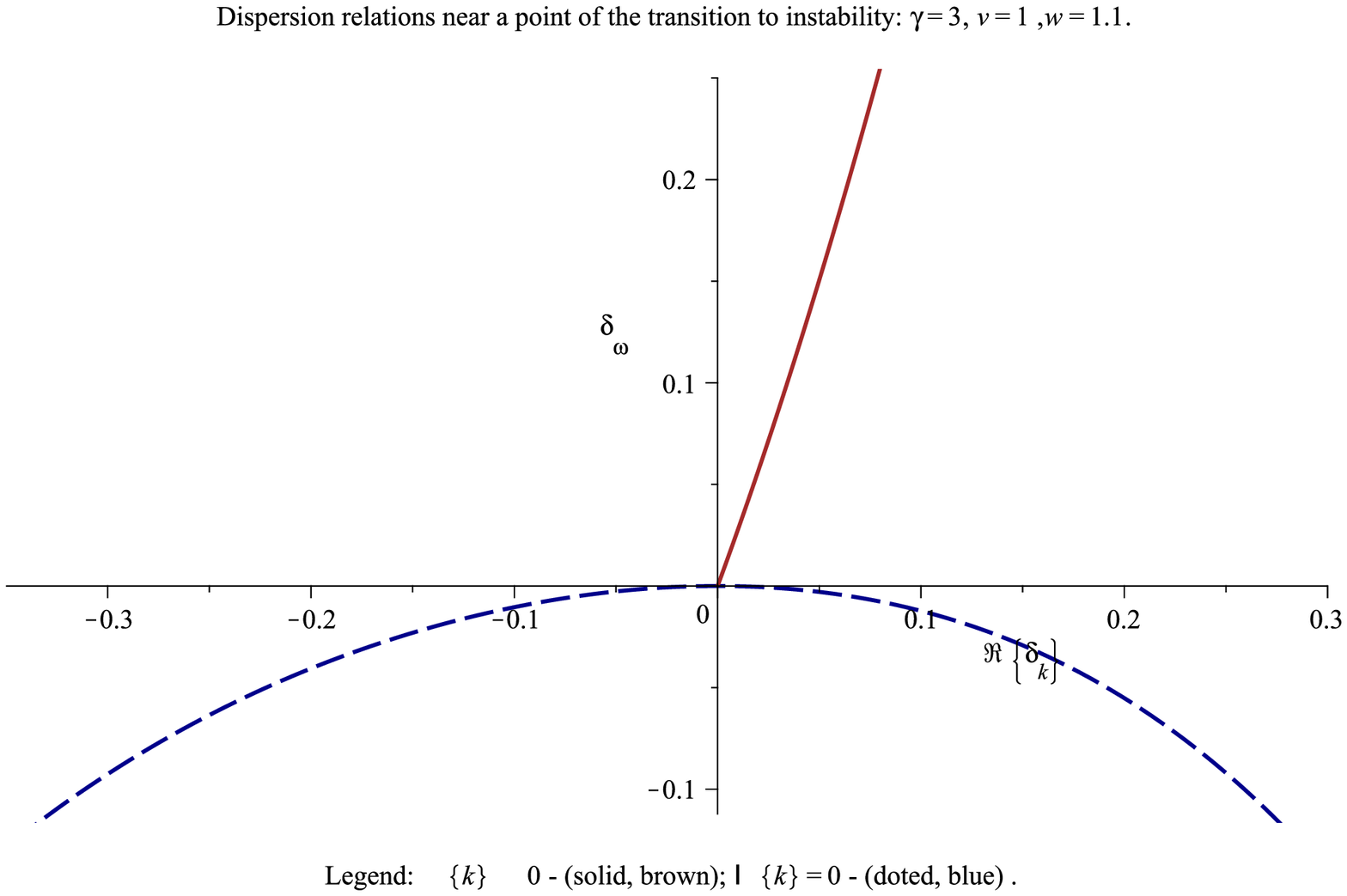}
\par\end{centering}
\centering{}\caption{\label{fig:instab-approxj} A fragment of the dispersion-instability
graph in a vicinity of nodal velocity $\mathsf{u}^{+}\left(\bar{\gamma}\right)$
computed based on asymptotic formulas (\ref{eq:omOmu6j}) for $\mathring{v}=1$,
$w=1.1$, $\gamma=3$ (horizontal axis \textendash{} $\Re\left\{ \delta_{k}\right\} $,
vertical axis \textendash{} $\delta_{\omega}$). Solid (brown) lines
represent unstable branches with $\Im\left\{ k\right\} \protect\neq0$,
and dotted (blue) lines represent oscillatory branches with $\Im\left\{ k\right\} =0$.}
\end{figure}

\section{Analysis of the TWT at the EPD\label{sec:twt-epd}}

The analysis of the TWT at the EPD we carry out here is based on the
TWT properties reviewed in Section \ref{sec:twt-mod}. In particular,
we take a close look into the spectral properties of the TWT at the
EPD associated with the nodal velocity $\mathsf{u}^{+}\left(\bar{\gamma}\right)$
considered in Section \ref{subsec:nod-func}. These properties are
related to the TWT monic matrix polynomial $\mathsf{M}_{u\omega}$
defined in equations (\ref{eq:MLkC2c}), that is
\begin{gather}
\mathsf{M}_{u\omega}=\left[\begin{array}{rr}
\check{u}{}^{2}-\chi^{2} & -\chi^{2}\\
\frac{\check{\gamma}}{\check{\omega}{}^{-2}-1} & \check{u}{}^{2}+\frac{2\check{u}}{\check{\omega}{}^{-2}-1}+\frac{\check{\gamma}-1}{\check{\omega}{}^{-2}-1}
\end{array}\right].\label{eq:CMuom1a}
\end{gather}
for $\check{u}=\mathsf{u}^{+}\left(\bar{\gamma}\right)$, and to the
corresponding companion matrix we consider in Section \ref{subsec:comp-mat}.
For the reader's convenience the basics of the theory of matrix polynomials
and the corresponding companion matrices are reviewed in Section \ref{sec:mat-poly}.

In turns out that it is advantageous for our purposes here to use
the nodal velocity
\begin{equation}
p=\mathsf{u}^{+}\left(\bar{\gamma}\right)>0\label{eq:pugam1a}
\end{equation}
as a \emph{new independent variable in place of parameter $\bar{\gamma}$}.
Using the fact that $\mathsf{u}^{+}\left(\bar{\gamma}\right)$ is
a monotonically decreasing function of $\bar{\gamma}$ we can uniquely
recover $\bar{\gamma}$ from $p$ based on equation (\ref{eq:pugam1a}).
We naturally refer to $p$ and the\emph{ nodal velocity} for: (i)
it is a point of degeneracy for the dispersion relations associated
with the nodal velocity $\mathsf{u}^{+}\left(\bar{\gamma}\right)$
and (ii) it is also a point of local minimum of the characteristic
function $\mathscr{D}\left(\check{u},\check{\gamma}\right)$ defined
by equation (\ref{eq:DsT1B1dj}), see Figs. \ref{fig:char-instab-T1B1-gam3-3}
(b), \ref{fig:dis-instab-T1B1-gam3} and \ref{fig:T1B1-node1} (a)
for graphical illustration.

\subsection{Nodal velocity as the TWT parameter\label{subsec:nod-vel-par}}

Plugging $\check{u}=p$ in the nodal equation (\ref{eq:RT1B1nod1aj})
we obtain the following representation of $\check{\gamma}$ in terms
$p$:
\begin{equation}
\check{\gamma}=\check{\gamma}_{\mathrm{e}}=\check{\gamma}_{\mathrm{e}}\left(p,\chi\right)=\frac{\left(1-p\right)\left(p^{2}-{\it \chi}^{2}\right)^{2}}{p^{4}}>0,\quad0<p<1.\label{eq:gamdp1a}
\end{equation}
Notice that equation (\ref{eq:gamdp1a}) can be viewed as the inversion
of equation (\ref{eq:pugam1a}) that defines the nodal velocity $p=\mathsf{u}^{+}\left(\bar{\gamma}\right)$.
Under assumption that $\chi$ is given and fixed function $\check{\gamma}_{\mathrm{e}}\left(p,\chi\right)$
in equation (\ref{eq:gamdp1a}) relates the nodal velocity $p$ to
the TWT principle parameter $\check{\gamma}$. Notice that $\check{\gamma}_{\mathrm{e}}\left(p,\chi\right)$
is a monotonically decreasing function of $p$ as illustrated by Figure
\ref{fig:gam-p-T1B1}. The function monotonicity can be established
by an examination of the sign of its derivative, which is
\begin{equation}
\partial_{p}\left(\check{\gamma}_{\mathrm{e}}\left(p,\chi\right)\right)=\frac{\left({\it \chi}^{2}-p^{2}\right)\left(3{\it \chi}^{2}p+p^{3}-4{\it \chi}^{2}\right)}{p^{5}}<0,\text{for }0<p<1<\chi.\label{eq:gamdp1b}
\end{equation}
Since factor $3{\it \chi}^{2}p+p^{3}-4{\it \chi}^{2}$ in the right-hand
side of equation (\ref{eq:gamdp1b}) is a monotonically growing function
of $p$ it satisfies the following inequalities
\begin{equation}
-4{\it \chi}^{2}<3{\it \chi}^{2}p+p^{3}-4{\it \chi}^{2}\leq1-{\it \chi}^{2}<0,\text{for }0<p<1<\chi,\label{eq:gamdp1c}
\end{equation}
implying that $\check{\gamma}_{\mathrm{e}}\left(p,\chi\right)$ is
indeed a monotonically decreasing function of $p$.

Typical values of the TWT principal parameter $\check{\gamma}$ are
small and according to \cite[Remark 62.1]{FigTWTbk} they vary between
$2\cdot10^{-6}$ and $0.00675$. Small values of $\check{\gamma}$
according to relation (\ref{eq:gamdp1a}) correspond to values of
the nodal velocity $p$ that are close to 1, and the following asymptotic
formula holds
\begin{equation}
\check{\gamma}_{\mathrm{e}}\left(p,\chi\right)=\left({\it \chi}^{2}-1\right)^{2}\left(1-p\right)+O\left(\left(1-p\right)\right),\quad p\rightarrow1-0.\label{eq:gamdp1d}
\end{equation}
Figure \ref{fig:gam-p-T1B1} shows the graph of function $\check{\gamma}_{\mathrm{e}}\left(p,\chi\right)$
as in equation (\ref{eq:gamdp1a}) for small values of $\check{\gamma}$.
\begin{figure}[h]
\begin{centering}
\hspace{-0.5cm}\includegraphics[scale=0.3]{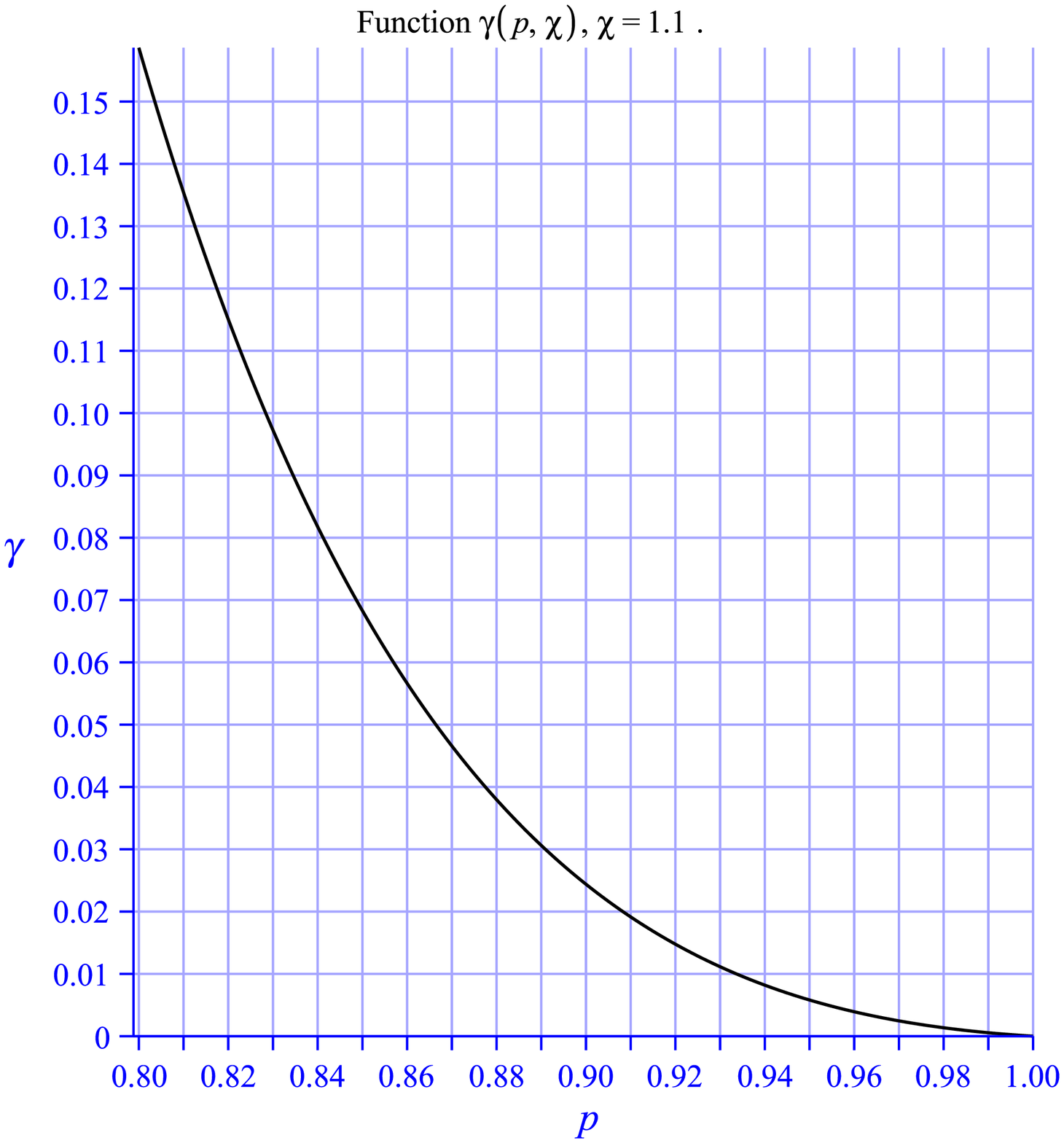}\hspace{0.8cm}\includegraphics[scale=0.3]{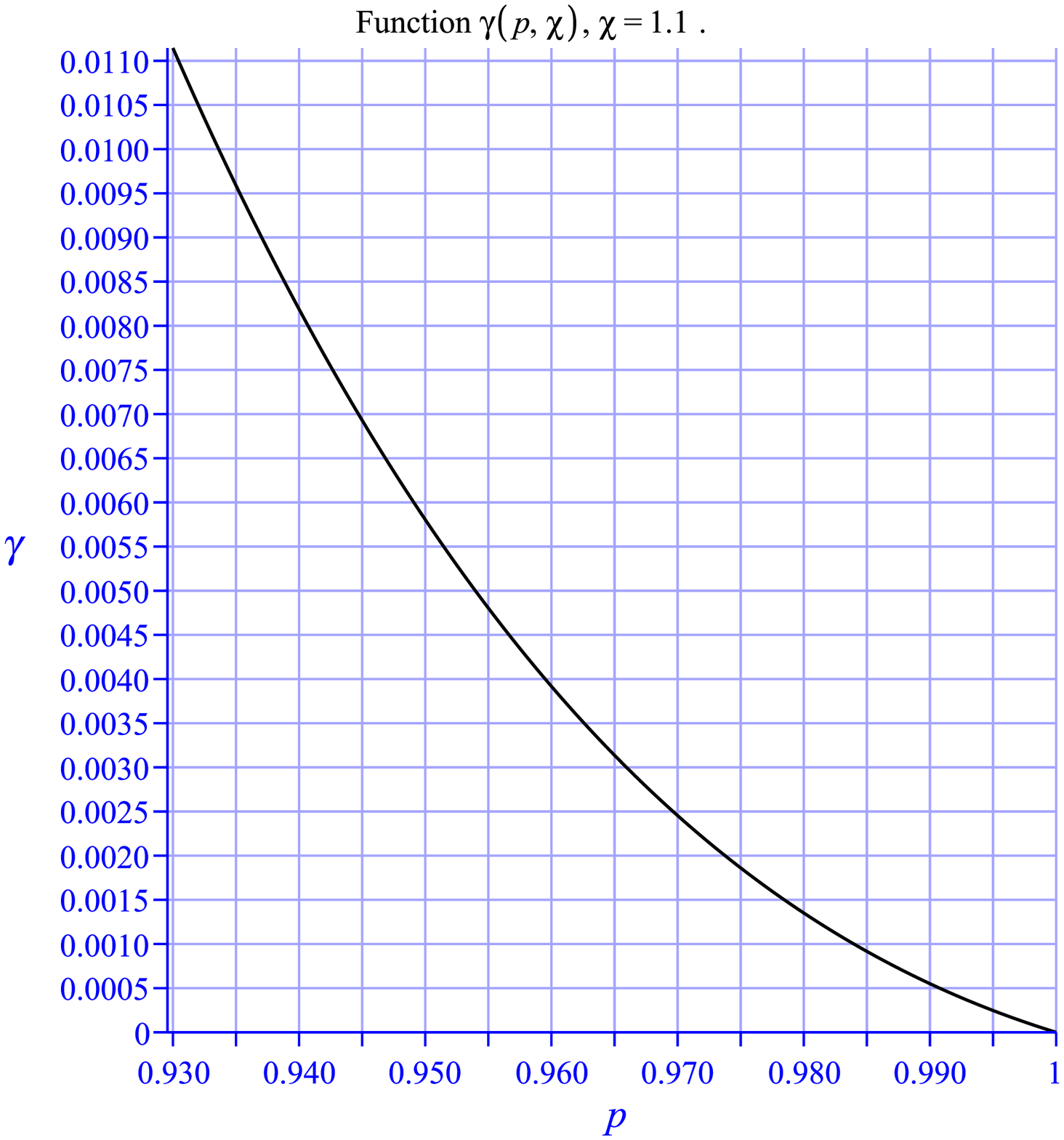}
\par\end{centering}
\centering{}(a)\hspace{6.5cm}(b)\caption{\label{fig:gam-p-T1B1} Plots of function $\check{\gamma}_{\mathrm{e}}\left(p,\chi\right)$
as in (\ref{eq:gamdp1a}) for $\chi=1.1$ and (a) $0.8<p<1$; (b)
$0.93<p<1$.}
\end{figure}

Using equations (\ref{eq:DsT1B1dj}) and (\ref{eq:gamdp1a}) we obtain
the following representation of the characteristic function $\mathscr{D}$
and the corresponding characteristic equation:
\begin{equation}
\mathscr{D}_{\mathrm{e}}\left(\check{u},p\right)=\frac{\left(p-1\right)\left(p^{2}-{\it \chi}^{2}\right)^{2}}{\left(\check{u}^{2}-{\it \chi}^{2}\right)p^{4}}+\frac{\left(\check{u}-1\right)^{2}}{\check{u}{}^{2}}=\frac{1}{\check{\omega}^{2}}.\label{eq:Dsup1a}
\end{equation}
In particular, plugging in $\check{u}=p$ in $\mathscr{D}\left(\check{u},p\right)$
we obtain
\begin{equation}
\mathscr{D}_{\mathrm{e}}\left(p,p\right)=\frac{\left(1-p\right)\left({\it \chi}^{2}-p^{3}\right)}{p^{4}}>0,\text{for }0<p<1<\chi.\label{eq:Dsup1b}
\end{equation}
Hence we can introduce now a positive frequency $\check{\omega}_{\mathrm{e}}=\check{\omega}_{\mathrm{e}}\left(p,\chi\right)$
by the following equality
\begin{equation}
\frac{1}{\check{\omega}_{\mathrm{e}}^{2}}=\mathscr{D}_{\mathrm{e}}\left(p,p\right)=\frac{\left(1-p\right)\left({\it \chi}^{2}-p^{3}\right)}{p^{4}},\label{eq:Dsup1c}
\end{equation}
or equivalently
\begin{equation}
\check{\omega}_{\mathrm{e}}=\check{\omega}_{\mathrm{e}}\left(p,\chi\right)=\frac{1}{\sqrt{\mathscr{D}_{\mathrm{e}}\left(p,p\right)}}=\frac{p^{2}}{\sqrt{\left(1-p\right)\left({\it \chi}^{2}-p^{3}\right)}}>0,\text{for }0<p<1<\chi.\label{eq:Dsup1d}
\end{equation}
We refer to $\check{\omega}_{\mathrm{e}}=\check{\omega}_{\mathrm{e}}\left(p,\chi\right)$
as\emph{ EPD frequency}. In view of equations (\ref{eq:Dsup1a}) and
(\ref{eq:gamdp1b}) the EPD frequency $\check{\omega}_{\mathrm{e}}\left(p,\chi\right)$
defined by equations (\ref{eq:Dsup1d}) corresponds to the nodal velocity
$p$. Figure \ref{fig:om-p-T1B1} is a graphical representation of
function $\check{\omega}_{\mathrm{e}}\left(p,\chi\right)$ as in equation
(\ref{eq:Dsup1d}) for different ranges of values of the nodal velocity
$p$.
\begin{figure}[h]
\begin{centering}
\hspace{-0.5cm}\includegraphics[scale=0.35]{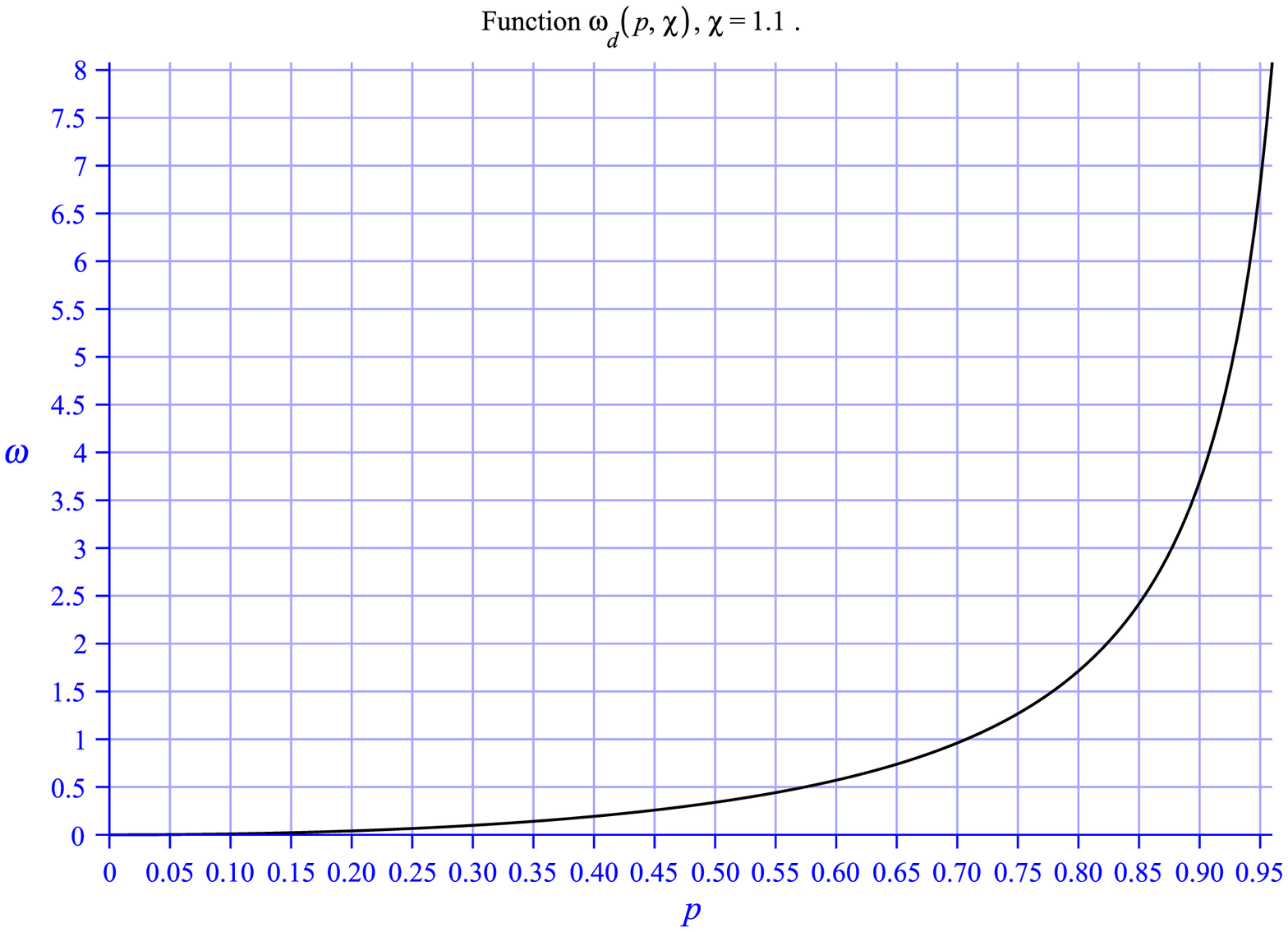}\hspace{0.8cm}\includegraphics[scale=0.35]{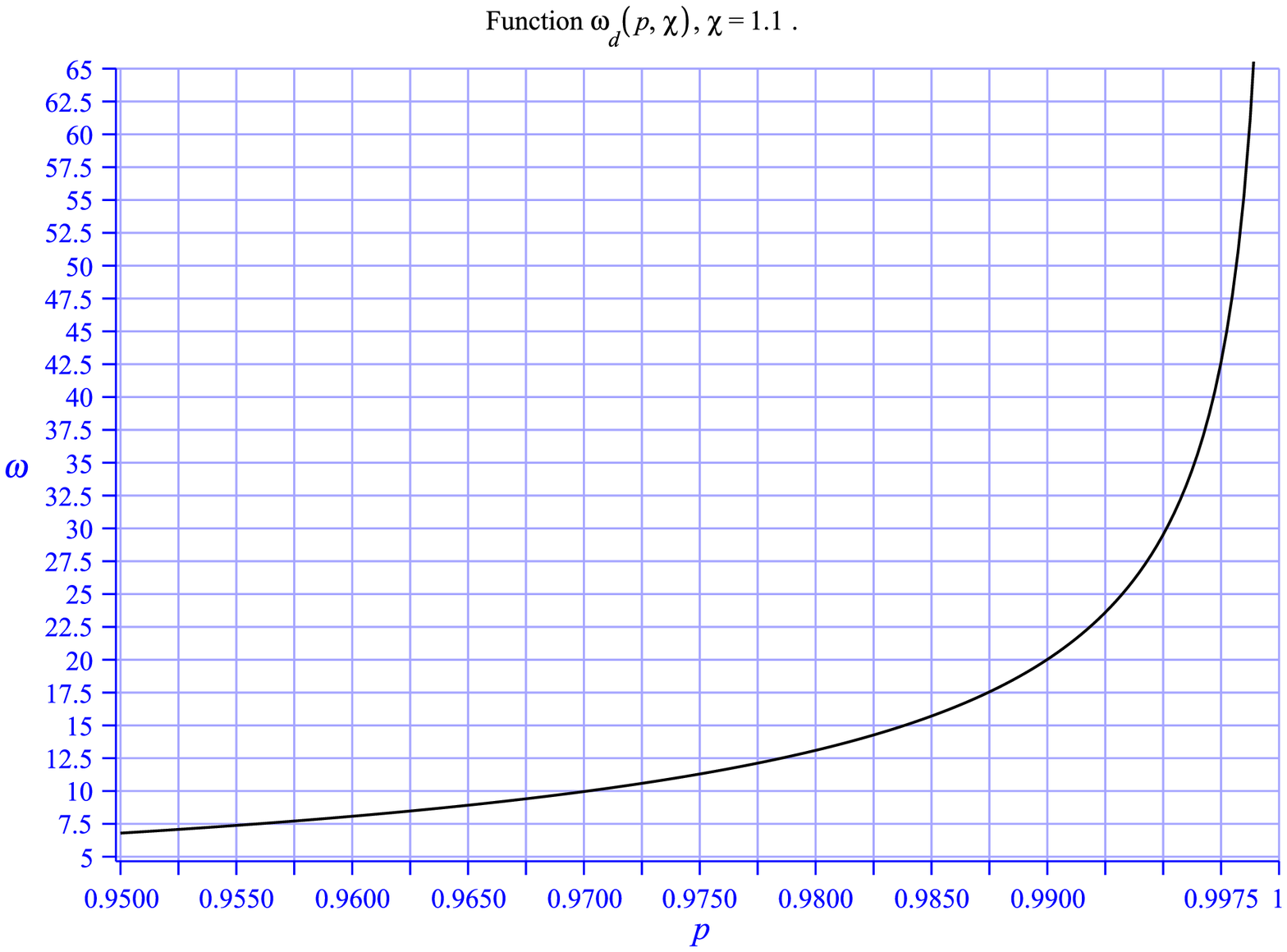}
\par\end{centering}
\centering{}(a)\hspace{6.5cm}(b)\caption{\label{fig:om-p-T1B1} Plots of function $\check{\omega}_{\mathrm{e}}\left(p,\chi\right)$
as in equation (\ref{eq:Dsup1d}) for $\chi=1.1$ and (a) $0<p<0.95$;
(b) $0.95<p<1$.}
\end{figure}

Using the EPD frequency $\check{\omega}_{\mathrm{e}}\left(p,\chi\right)$
defined by equations (\ref{eq:Dsup1d}) we obtain the following equivalent
form of the characteristic equation (\ref{eq:Dsup1a})
\begin{gather}
\mathscr{D}_{\mathrm{e}}\left(\check{u},p\right)=\mathscr{D}_{\mathrm{e}}\left(\check{u},p\right)-\frac{1}{\check{\omega}_{\mathrm{e}}^{2}\left(p,\chi\right)}=\frac{1}{\check{\omega}^{2}}-\frac{1}{\check{\omega}_{\mathrm{e}}^{2}\left(p,\chi\right)},\text{ where}\label{eq:Dsup2a}\\
\mathscr{D}_{\mathrm{e}}\left(\check{u},p\right)=\frac{\left(\check{u}-p\right)^{2}\left[\left({\it \chi}^{2}p+p^{3}-{\it \chi}^{2}\right)\check{u}^{2}+2{\it \chi}^{2}p\left(p-1\right)\check{u}-p^{2}{\it \chi}^{2}\right]}{\check{u}^{2}p^{4}\left(\check{u}^{2}-{\it \chi}^{2}\right)}.\nonumber 
\end{gather}
It is evident that for $\check{\omega}=\check{\omega}_{\mathrm{e}}$
velocity $\check{u}=p$ is a solution to equation (\ref{eq:Dsup2a})
of multiplicity $2$. We refer to equation (\ref{eq:Dsup2a}) as the
\emph{EPD form of the characteristic equation}. An algebraic factorized
form of rational function $\mathscr{D}_{\mathrm{e}}\left(\check{u},p\right)$
defined by equations (\ref{eq:Dsup2a}) is
\begin{equation}
\mathscr{D}_{\mathrm{e}}\left(\check{u},p\right)=\frac{\left({\it \chi}^{2}p+p^{3}-{\it \chi}^{2}\right)\left(\check{u}-p\right)^{2}\left(\check{u}-\lambda_{+}\left(p\right)\right)\left(\check{u}-\lambda_{+}\left(p\right)\right)}{\check{u}^{2}p^{4}\left(\check{u}^{2}-{\it \chi}^{2}\right)},\label{eq:Dsup2b}
\end{equation}
where
\begin{equation}
\lambda_{\pm}\left(p\right)=\frac{p{\it \chi}\left[\chi\left(1-p\right)\pm\sqrt{p^{3}-p\chi^{2}\left(1-p\right)}\right]}{p^{3}-\left(1-p\right)\chi^{2}}.\label{eq:Dsup2c}
\end{equation}

We infer based on the above analysis that for any given $p$ and $\chi$
the EPD values of the original TWT dimensionless parameters are
\begin{equation}
\text{EPD: }\check{u}=p,\quad\check{\gamma}=\check{\gamma}_{\mathrm{e}}\left(p,\chi\right)=\frac{\left(1-p\right)\left(p^{2}-{\it \chi}^{2}\right)^{2}}{p^{4}},\quad\check{\omega}=\check{\omega}_{\mathrm{e}}\left(p,\chi\right)=\frac{p^{2}}{\sqrt{\left(1-p\right)\left({\it \chi}^{2}-p^{3}\right)}}.\label{eq:Dsup2d}
\end{equation}

\subsection{Companion matrix spectral analysis\label{subsec:comp-mat}}

According to our review on matrix polynomials and associated with
them companion matrices in Section \ref{sec:mat-poly} the companion
matrix associated with matrix polynomial $\mathsf{M}_{u\omega}$ as
in equation (\ref{eq:CMuom1a}) takes the form
\begin{equation}
\mathscr{C}=\left[\begin{array}{rrrr}
0 & 0 & 1 & 0\\
0 & 0 & 0 & 1\\
\chi^{2} & \chi^{2} & 0 & 0\\
-\frac{\check{\gamma}}{\check{\omega}{}^{-2}-1} & -\frac{\check{\gamma}-1}{\check{\omega}{}^{-2}-1} & 0 & -\frac{2}{\check{\omega}{}^{-2}-1}
\end{array}\right].\label{eq:Compu1a}
\end{equation}
If $\check{u}$ is a characteristic velocity and consequently an eigenvalue
of companion matrix $\mathscr{C}$ then its unique eigenvector $Y\left(\check{u}\right)$
is defined by the following expression
\begin{equation}
Y\left(\check{u}\right)=\left[\begin{array}{r}
\chi^{2}\\
\check{u}{}^{2}-\chi^{2}\\
\chi^{2}\check{u}\\
\check{u}\left(\check{u}{}^{2}-\chi^{2}\right)
\end{array}\right].\label{eq:Compu1b}
\end{equation}
Notice that expression (\ref{eq:CMuom1a}) for the eigenvector $Y\left(\check{u}\right)$
of companion matrix $\mathscr{C}$ depends remarkably on parameter
$\chi$ and the value of the corresponding eigenvalue $\check{u}$
of $\mathscr{C}$ only.

The value $\mathscr{C}_{\mathrm{e}}$ of companion matrix $\mathscr{C}$
at the EPD point $\check{u}=p$ can be obtained by plugging in its
expression (\ref{eq:CMuom1a}) the EPD values of $\check{\gamma}=\check{\gamma}_{\mathrm{e}}$
and $\check{\omega}=\check{\omega}_{\mathrm{e}}$ defined in equations
(\ref{eq:Dsup2d}) resulting in
\begin{equation}
\mathscr{C}_{\mathrm{e}}=\left[\begin{array}{rrrr}
0 & 0 & 1 & 0\\
0 & 0 & 0 & 1\\
\chi^{2} & \chi^{2} & 0 & 0\\
\frac{\left(1-p\right)\left(p^{2}-{\it \chi}^{2}\right)^{2}}{p^{3}-\left(1-p\right)\chi^{2}} & \frac{\left(1-p\right){\it \chi}^{2}\left({\it \chi}^{2}-2p^{2}\right)-p^{5}}{p^{3}-\left(1-p\right)\chi^{2}} & 0 & \frac{2p^{4}}{p^{3}-\left(1-p\right)\chi^{2}}
\end{array}\right].\label{eq:Compu1c}
\end{equation}
Elementary but tedious analysis shows the Jordan canonical form $\mathscr{J}_{\mathrm{e}}$
of companion matrix $\mathscr{C}_{\mathrm{e}}$ is
\begin{equation}
\mathscr{J}_{\mathrm{e}}=\left[\begin{array}{rrrr}
p & 1 & 0 & 0\\
0 & p & 0 & 0\\
0 & 0 & \lambda_{+}\left(p\right) & 0\\
0 & 0 & 0 & \lambda_{-}\left(p\right)
\end{array}\right],\quad\lambda_{\pm}\left(p\right)=\frac{p{\it \chi}\left[\chi\left(1-p\right)\pm\sqrt{p^{3}-p\chi^{2}\left(1-p\right)}\right]}{p^{3}-\left(1-p\right)\chi^{2}}.\label{eq:Compu1d}
\end{equation}
Notice that expressions for eigenvalues $\lambda_{\pm}\left(p\right)$
are consistent with equations (\ref{eq:Dsup2b}) and (\ref{eq:Dsup2c})
as one may expect.

The Jordan basis of the generalized eigenspace of the EPD companion
matrix $\mathscr{C}_{\mathrm{e}}$ is formed by the eigenvector $Y_{\mathrm{e}}\left(p\right)$
and the so-called root vector $Y_{\mathrm{e}}^{\prime}\left(p\right)$
which are as follows
\begin{equation}
Y_{\mathrm{e}}\left(p\right)=\left[\begin{array}{r}
\chi^{2}\\
p^{2}-\chi^{2}\\
\chi^{2}p\\
p\left(p-\chi^{2}\right)
\end{array}\right],\quad Y_{\mathrm{e}}^{\prime}\left(p\right)=\partial_{p}Y_{\mathrm{e}}\left(p\right)=\left[\begin{array}{r}
0\\
2\check{u}\\
\chi^{2}\\
3\check{u}{}^{2}-\chi^{2}
\end{array}\right],\label{eq:Compu1e}
\end{equation}
where root vector $Y_{\mathrm{e}}^{\prime}\left(p\right)$ satisfies
the following relations
\begin{equation}
\left(\mathscr{C}_{\mathrm{e}}-p\mathbb{I}\right)Y_{\mathrm{e}}^{\prime}\left(p\right)=Y_{\mathrm{e}}\left(p\right),\quad\left(\mathscr{C}_{\mathrm{e}}-p\mathbb{I}\right)^{2}Y_{\mathrm{e}}^{\prime}\left(p\right)=0.\label{eq:Compu1f}
\end{equation}
The fact that $Y_{\mathrm{e}}^{\prime}\left(p\right)=\partial_{p}Y_{\mathrm{e}}\left(p\right)$
is of course not incidental. It can be argued based on the fact that
the eigenvector $Y\left(\check{u}\right)$ of companion matrix $\mathscr{C}$
defined by equation (\ref{eq:Compu1b}) depends on parameter $\chi$
and the value of the corresponding eigenvalue $\check{u}$ of $\mathscr{C}$
only. But regardless to the argument one can verify the validity of
equations (\ref{eq:Compu1f}) by tedious by straightforward evaluation.

Then using (i) the Jordan basis formed by vectors $Y_{\mathrm{e}}\left(p\right)$
and $Y_{\mathrm{e}}^{\prime}\left(p\right)$ defined by equations
(\ref{eq:Compu1e}) and (ii) the general expression (\ref{eq:Compu1b})
for the eigenvector $Y\left(\check{u}\right)$ of companion matrix
$\mathscr{C}$ we obtain complete Jordan basis for the EPD companion
matrix $\mathscr{C}_{\mathrm{e}}$ formed by columns of the following
matrix
\begin{equation}
\mathscr{Y}=\left[Y_{\mathrm{e}}\left(p\right),Y_{\mathrm{e}}^{\prime}\left(p\right),Y\left(\lambda_{+}\left(p\right)\right),Y\left(\lambda_{-}\left(p\right)\right)\right].\label{eq:Compu2a}
\end{equation}
Consequently, we have
\begin{equation}
\mathscr{C}_{\mathrm{e}}=\mathscr{Y}\mathscr{J}_{\mathrm{e}}\mathscr{Y}^{-1},\label{eq:Compu2b}
\end{equation}
where matrices $\mathscr{J}_{\mathrm{e}}$ and $\mathscr{Y}$ are
defined respectively by equations (\ref{eq:Compu1d}) and (\ref{eq:Compu2a}).

As to physical significance of the EPD phase velocity $p$ and the
EPD frequency $\check{\omega}_{\mathrm{e}}$ they can be detected
and identified by their intrinsic association with the onset of instability.
Indeed, gradually increasing frequency $\omega$ of probing excitation
of the TWT one can detect its value $\check{\omega}_{\mathrm{e}}$
when the instability sets up. At this point one can also assess the
value of the EPD phase velocity $p$ by comparing time dependent input
and output signals. Using then equations (\ref{eq:Dsup2d}) and assuming
that $p$ and $\check{\omega}_{\mathrm{e}}$ are known we recover
the TWT model parameters as follows:

\begin{equation}
{\it \chi}^{2}=p^{3}\left(1+\frac{p}{\left(1-p\right)\check{\omega}_{\mathrm{e}}}\right),\quad\check{\gamma}=\frac{\left[p^{2}-\check{\omega}_{\mathrm{e}}^{2}\left(1-p\right)^{2}\right]^{2}}{\left(1-p\right)\check{\omega}_{\mathrm{e}}^{4}}.\label{eq:Compu2c}
\end{equation}

\section{Using an EPD for enhanced sensing of small signals\label{sec:sensing}}

Based on our studies in Section \ref{sec:twt-epd} we develop here
an approach for using the EPD of the TWT for enhanced sensing of small
signals. This approach has some similarity to what we advanced in
\cite{FigPert} for simple circuits with EPDs but it is naturally
somewhat more complex. We remind that our primary motivation for considering
the TWT is that it can operate at much higher frequencies compare
to frequencies for lumped circuits.

Since TWTs are used mostly as amplifiers with the gain varying exponentially
with their length one might entertain an idea that the exponential
amplification can be exploited for sensing of small signals. The problem
with this idea though is that the origin of TWT amplification is an
instability and that is hardly compatible with enhanced sensing of
small signals. In addition to that, in the case of exponential amplification
a variety of noises that naturally occur in any TWT can obscure the
small sensor signal.

An EPD in a TWT is in fact also associated with an instability but
the EPD regime is at least marginally stable. In particular, if we
choose TWT regime to be near the EPD rather than exactly at it the
TWT operation can be stable as we showed in \cite{FigPert}. This
approach is of course a trade off allowing to buy the stability in
exchange for reduced value of the enhancement factor for the small
sensed signal. But even with such a trade off in place one can get
more than 100 fold enhancement \cite{FigPert}.

\subsection{Mathematical model for sensing}

We start with equations (\ref{eq:T1B1beta1bj}) and (\ref{eq:DsT1B1hj})
that relate the TWT system model parameters $\chi$ and $\check{\gamma}$
to the primary physical quantities, namely
\begin{equation}
\chi=\frac{w}{\mathring{v}}=\frac{1}{\mathring{v}\sqrt{CL}},\quad\check{\gamma}=\frac{\gamma}{\mathring{v}{}^{2}}=\frac{K_{\gamma}}{C},\quad K_{\gamma}=\frac{b^{2}}{\mathring{v}{}^{2}}\frac{e^{2}}{m}R_{\mathrm{sc}}^{2}\sigma_{\mathrm{B}}\mathring{n}.\label{eq:perchi1a}
\end{equation}
We proceed then with an assumption that the e-beam parameters $\mathring{v}$,
$\sigma_{\mathrm{B}}$ and $\mathring{n}$ are chosen, fixed and maintain
their values through the process of sensing. \emph{We suppose further
that: (i) it is either the distributed capacitance $C$ or the distributed
inductance $L$ utilized for sensing; (ii) parameter $C$ or $L$
which is selected for sensing is slightly altered by the small sensor
signal}. The relevant alteration caused by sensed signal is assessed
by measuring the relevant characteristic velocities $\check{u}$ of
the TWT. These velocities $\check{u}$ are solutions to the EPD form
(\ref{eq:Dsup2a}) of the characteristic equation, that is
\begin{equation}
\frac{\left(\check{u}-p\right)^{2}\left[\left({\it \chi}^{2}p+p^{3}-{\it \chi}^{2}\right)\check{u}^{2}+2{\it \chi}^{2}p\left(p-1\right)\check{u}-p^{2}{\it \chi}^{2}\right]}{\check{u}^{2}p^{4}\left(\check{u}^{2}-{\it \chi}^{2}\right)}=\frac{1}{\check{\omega}^{2}}-\frac{1}{\check{\omega}_{\mathrm{e}}^{2}\left(p,\chi\right)},\label{eq:perchi1b}
\end{equation}
where parameters $p$ and $\check{\omega}_{\mathrm{e}}$ are related
to parameters $\check{\gamma}$ and ${\it \chi}$ by equations (\ref{eq:Compu2c}).
More precisely, with the equation (\ref{eq:perchi1b}) in mind we
probe the TWT at frequency $\check{\omega}$ of our choosing and then
measure the phase velocities $\check{u}$ of the excited eigenmodes
of the TWT system. Since according to our TWT model these velocities
$\check{u}$ satisfy characteristic equation (\ref{eq:perchi1b})
we can relate them to values of parameters $p$ and ${\it \chi}$.
Having found $p$ and ${\it \chi}$ we can recover then the values
of $C$ and $L$ based on equations (\ref{eq:Compu2c}) and (\ref{eq:perchi1a})
as we show below.

To have a clarity on what kind of the TWT states are considered. We
remind that the TWT significant properties are encoded in its companion
matrix $\mathscr{C}$ defined by equation (\ref{eq:Compu1a}). This
matrix in turn is determined by parameters $\check{\gamma}$, $\chi$
and frequency $\check{\omega}$. Instead of parameter $\check{\gamma}$
we can use the nodal velocity $p$ related to it by equation (\ref{eq:gamdp1a}).
Then based on the mentioned quantities we introduce the following
definitions of the TWT configuration and TWT state.
\begin{defn}[TWT configuration and state]
 \emph{TWT configuration} is defined as a pair of two dimensionless
TWT parameters: (i) $\chi=\frac{w}{\mathring{v}}$; (ii) the nodal
velocity $p=p\left(\chi,\check{\gamma}\right)$ satisfying equation
(\ref{eq:Dsup2d}). \emph{TWT state} is defined as a triple of o dimensionless
parameters: $\chi$, $p$ and $\check{\omega}$. Frequency $\check{\omega}$
can be viewed as a parameter that selects the corresponding four TWT
eigenmodes associated with the TWT companion matrix $\mathscr{C}$
defined by equation (\ref{eq:Compu1a}). The selection is facilitated
physically by exciting/probing the TWT at frequency $\check{\omega}$.
An \emph{EPD state} is defined by a particular choice of its parameters,
namely $\chi$, $p$ and $\check{\omega}=\check{\omega}_{\mathrm{e}}\left(p,\chi\right)$
where the EPD frequency $\check{\omega}_{\mathrm{e}}\left(p,\chi\right)$
is defined by equations (\ref{eq:Dsup1d}). \emph{Work point state}
is defined by the following choice of its parameters $\chi$, $p$
and $\check{\omega}=\check{\omega}_{\mathrm{w}}<\check{\omega}_{\mathrm{e}}\left(p,\chi\right)$.
We refer to $\check{\omega}_{\mathrm{w}}$ as \emph{work point frequency}.
There is a flexibility in choosing $\check{\omega}_{\mathrm{w}}$
to be proximate to the EPD frequency $\check{\omega}_{\mathrm{e}}\left(p,\chi\right)$
when at the same time to maintain certain distance from it to provide
for the stability of the TWT operation as explained in Remark \ref{rem:workpoint}.
There are exactly two eigenmodes with their phase velocities $\check{u}$
close to $p$ that are of particular significance.
\end{defn}

\begin{rem}[work point]
\label{rem:workpoint} It turns out that the ideal EPD state is intrinsically
only marginally stable and the purpose of the work point is to overcome
this problem. The stability of the work point is achieved by making
a deliberate small departure from the EPD frequency $\check{\omega}_{\mathrm{e}}$.
The departure is achieved by appropriate selection of the probing
frequency $\check{\omega}=\check{\omega}_{\mathrm{w}}$ so that $\check{\omega}_{\mathrm{w}}<\check{\omega}_{\mathrm{e}}$.
An additional benefit and utility of the work point frequency $\check{\omega}_{\mathrm{w}}$
is that it lifts the characteristic velocity degeneracy causing the
velocity split. This velocity split can be measured and used to determine
the small sensed signal, see Theorem \ref{thm:velsplit}, Fig. \ref{fig:approx-EPD}
and equations (\ref{eq:delCLom1b}) and (\ref{eq:delCLom1c}).
\end{rem}

Suppose that the TWT configuration before sensing is defined by parameters
$\chi$ and $p$ and the TWT configuration altered by the small sensed
signal is defined by parameters $\chi^{\prime}$ and $p^{\prime}$.
We proceed then with introducing a larger set of parameters associated
with the EPD state of the TWT before it it receives the small sensed
signal:

\begin{equation}
\text{EPD state: }C,\quad L,\quad p,\quad\chi,\quad\check{\omega}=\check{\omega}_{\mathrm{e}}\left(p,\chi\right).\label{eq:perchi1c}
\end{equation}
Using the above EPD state as a reference point we introduce a larger
set of parameters for \emph{altered EPD state} associated with the
small sensed signal:
\begin{gather}
\text{altered EPD state: }C^{\prime}=C\left(1+\delta_{C}\right),\quad L{}^{\prime}=L\left(1+\delta_{L}\right),\quad p^{\prime}=p\left(1+\delta_{p}\right),\label{eq:perchi1ca}\\
\check{\left(\chi^{2}\right)^{\prime}=\chi^{2}\left(1+\delta_{\chi^{2}}\right),\quad\omega}=\check{\omega}_{\mathrm{e}}^{\prime}=\check{\omega}_{\mathrm{e}}\left(p^{\prime},\chi{}^{\prime}\right)=\check{\omega}_{\mathrm{e}}\left(1+\delta_{\omega}\right),\quad\delta_{\omega}=\frac{\check{\omega}_{\mathrm{e}}^{\prime}-\check{\omega}_{\mathrm{e}}}{\check{\omega}_{\mathrm{e}}},\nonumber 
\end{gather}
where as a matter of computational convenience we use parameter $\chi^{2}$
rather than $\chi$. In relations (\ref{eq:perchi1c})-(\ref{eq:perchi1e})
the relative variation coefficients $\delta_{*}$ are assumed to be
small and satisfy the following relations
\begin{equation}
\left|\delta_{C}\right|,\:\left|\delta_{L}\right|,\:\left|\delta_{\chi^{2}}\right|,\:\left|\delta_{p}\right|\ll1;\quad\left|\delta_{\omega}\right|\ll\left|\delta_{\mathrm{w}}\right|\ll1.\label{eq:perchi1f}
\end{equation}

Using once again the EPD state as a reference point we introduce parameters
of the work point state and its altered version as follows:

\begin{gather}
\text{work point state: }C,\quad L,\quad p,\quad\chi,\quad\check{\omega}=\check{\omega}_{\mathrm{w}}=\check{\omega}_{\mathrm{e}}\left(1+\delta_{\mathrm{w}}\right),\quad\delta_{\mathrm{w}}=\frac{\check{\omega}_{\mathrm{w}}-\check{\omega}_{\mathrm{e}}}{\check{\omega}_{\mathrm{e}}},\label{eq:perchi1d}
\end{gather}

\begin{gather}
\text{altered work point state: }C^{\prime}=C\left(1+\delta_{C}\right),\quad L{}^{\prime}=L\left(1+\delta_{L}\right),\quad p^{\prime}=p\left(1+\delta_{p}\right),\label{eq:perchi1e}\\
\left(\chi^{2}\right)^{\prime}=\chi^{2}\left(1+\delta_{\chi^{2}}\right),\;\check{\omega}=\check{\omega}_{\mathrm{w}}=\check{\omega}_{\mathrm{e}}\left(1+\delta_{\mathrm{w}}\right),\;\check{\omega}_{\mathrm{e}}^{\prime}=\check{\omega}_{\mathrm{e}}\left(p^{\prime},\chi{}^{\prime}\right)=\check{\omega}_{\mathrm{e}}\left(1+\delta_{\omega}\right),\;\delta_{\omega}=\frac{\check{\omega}_{\mathrm{e}}^{\prime}-\check{\omega}_{\mathrm{e}}}{\check{\omega}_{\mathrm{e}}}.\nonumber 
\end{gather}
\emph{Relative incremental frequency differences $\delta_{\mathrm{w}}$
and $\delta_{\omega}$ employ effectively the EPD frequency $\check{\omega}_{\mathrm{e}}$
as a natural frequency unit.}

Notice that parameters in relations (\ref{eq:perchi1c})-(\ref{eq:perchi1e})
are not independent but rather parameters $\chi$, $p$ and consequently
$\check{\omega}_{\mathrm{e}}$ can be expressed in terms of the primary
physical parameters $C$ and $L$. Namely, according to relations
(\ref{eq:Dsup2d}) and (\ref{eq:perchi1a}) we have
\begin{equation}
\chi=\frac{1}{\mathring{v}\sqrt{CL}},\quad\check{\gamma}=\frac{K_{\gamma}}{C}=\frac{\left(1-p\right)\left(p^{2}-{\it \chi}^{2}\right)^{2}}{p^{4}},\quad\check{\omega}_{\mathrm{e}}=\frac{p^{2}}{\sqrt{\left(1-p\right)\left({\it \chi}^{2}-p^{3}\right)}}.\label{eq:perchi2a}
\end{equation}
Since in view of made assumptions the e-beam parameters $\mathring{v}$
and $K_{\gamma}$ can be considered to be constants the first two
equations in (\ref{eq:perchi2a}) allow to express $\chi$ and $p$
as a functions of $C$ and $L$. Consequently the third equation in
(\ref{eq:perchi2a}) determines $\check{\omega}_{\mathrm{e}}$ as
a function of $C$ and $L$ also.

We proceed now with the derivation of linear approximations, that
is the differentials, for $\delta_{\chi^{2}}$, $\delta_{p}$ and
$\delta_{\omega}$ in terms of $\delta_{C}$ and $\delta_{L}$. The
assumed smallness of $\delta_{C}$, $\delta_{L}$ and $\delta_{\omega}$
imply the smallness of $\delta_{\chi^{2}}$, $\delta_{p}$ and $\delta_{\omega}^{\prime}$.
Hence \emph{the work point and the altered work point states are all
close to the EPD} state. Using the proximity of all these states we
obtain the following first order approximation to the characteristic
equation (\ref{eq:perchi1b}) for the altered state
\begin{gather}
\left(\check{u}-p^{\prime}\right)^{2}\cong S\left(\check{\omega}_{\mathrm{e}}^{\prime}-\check{\omega}_{\mathrm{w}}\right)=S\check{\omega}_{\mathrm{e}}\left(\delta_{\omega}-\delta_{\mathrm{w}}\right),\label{eq:upS1a}
\end{gather}
where
\begin{equation}
\delta_{\omega}=\frac{\check{\omega}_{\mathrm{e}}^{\prime}-\check{\omega}_{\mathrm{e}}}{\check{\omega}_{\mathrm{e}}},\quad\delta_{\mathrm{w}}=\frac{\check{\omega}_{\mathrm{w}}-\check{\omega}_{\mathrm{e}}}{\check{\omega}_{\mathrm{e}}},\quad\check{\omega}_{\mathrm{e}}=\check{\omega}_{\mathrm{e}}\left(p,\chi\right),\quad\check{\omega}_{\mathrm{e}}^{\prime}=\check{\omega}_{\mathrm{e}}\left(p^{\prime},\chi{}^{\prime}\right),\label{eq:upS1aa}
\end{equation}
\begin{equation}
S=S\left(p,\chi\right)=\frac{2p^{4}\left({\it \chi}^{2}-p^{2}\right)\left[\left(1-p\right)\left({\it \chi}^{2}-p^{3}\right)\right]^{\frac{3}{2}}}{p^{2}\left(4{\it \chi}^{2}-3{\it \chi}^{2}p-p^{3}\right)}>0,\quad0<p<1,\quad\chi>1.\label{eq:upS1ab}
\end{equation}
We refer to equation (\ref{eq:upS1a}) as \emph{EPD approximation
to the characteristic equation}, and we refer to quantities $\delta_{\omega}$
and $\delta_{\mathrm{w}}$ $\Delta^{\prime}$ and $\Delta_{w}$ respectively
as \emph{relative EPD increment} and \emph{relative work point increment}.
\emph{The EPD approximation to the characteristic equation which is
a quadratic equation selects two characteristic velocities out of
the total of four by their property to be proximate to the EPD velocity
$p$.} Equations (\ref{eq:upS1a}) readily imply the following statement.
\begin{thm}[velocity split near the EPD]
\label{thm:velsplit}. The velocities $\check{u}_{\pm}$ that solve
the quadratic in $\check{u}$ equation (\ref{eq:upS1a}) are
\begin{equation}
\check{u}_{\pm}=p^{\prime}\pm\sqrt{S\left(\check{\omega}_{\mathrm{e}}^{\prime}-\check{\omega}_{\mathrm{w}}\right)}=p^{\prime}\pm\sqrt{S\check{\omega}_{\mathrm{e}}\left(\delta_{\omega}-\delta_{\mathrm{w}}\right)},\label{eq:upS1b}
\end{equation}
where factor $S>0$ satisfies relations (\ref{eq:upS1ab}) and frequency
$\check{\omega}_{\mathrm{e}}>0$ satisfies the last equation in (\ref{eq:perchi2a}).
The parameters $\chi$ and $p$ that determine $S$ and $\check{\omega}_{\mathrm{e}}$
are associated with the EPD state as in (\ref{eq:perchi1c}). Since
$S>0$ velocities $\check{u}_{\pm}$ defined by equation (\ref{eq:upS1b})
are real if and only if
\begin{equation}
\check{\omega}_{\mathrm{w}}<\check{\omega}_{\mathrm{e}}^{\prime}.\label{eq:upS1ba}
\end{equation}
Equations (\ref{eq:upS1b}) imply the following representation for
the velocity split $\check{u}_{+}-\check{u}_{-}$ at the EPD:
\begin{gather}
\check{u}_{+}-\check{u}_{-}=2\sqrt{S\left(\check{\omega}_{\mathrm{e}}^{\prime}-\check{\omega}_{\mathrm{w}}\right)}=2\sqrt{S\check{\omega}_{\mathrm{e}}\left(\delta_{\omega}-\delta_{\mathrm{w}}\right)},\label{eq:upS1c}\\
\delta_{\omega}=\frac{\check{\omega}_{\mathrm{e}}^{\prime}-\check{\omega}_{\mathrm{e}}}{\check{\omega}_{\mathrm{e}}},\quad\delta_{\mathrm{w}}=\frac{\check{\omega}_{\mathrm{w}}-\check{\omega}_{\mathrm{e}}}{\check{\omega}_{\mathrm{e}}},\nonumber 
\end{gather}
where frequencies $\check{\omega}_{\mathrm{e}}$, $\check{\omega}_{\mathrm{e}}^{\prime}$
and $\check{\omega}_{\mathrm{w}}$ are associated respectively with
the EPD state, the altered state and the work point state defined
by equations (\ref{eq:perchi1c}), (\ref{eq:perchi1d}) and (\ref{eq:perchi1e}).

If the following inequalities hold
\begin{equation}
\left|\delta_{\omega}\right|<\left|\delta_{\mathrm{w}}\right|\text{ and }-\delta_{\mathrm{w}}>0,\label{eq:upS1ca}
\end{equation}
then
\begin{equation}
\delta_{\omega}-\delta_{\mathrm{w}}=\frac{\check{\omega}_{\mathrm{e}}^{\prime}-\check{\omega}_{\mathrm{w}}}{\check{\omega}_{\mathrm{e}}}>0\label{eq:upS1cb}
\end{equation}
readily implying the inequality (\ref{eq:upS1ba}).
\end{thm}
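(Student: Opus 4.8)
The plan is to treat (\ref{eq:upS1a}) purely as a quadratic equation in the single unknown $\check{u}$, regarding $p^{\prime}$, the factor $S$, and the frequency difference $\check{\omega}_{\mathrm{e}}^{\prime}-\check{\omega}_{\mathrm{w}}$ as fixed data, and then to extract every assertion by elementary algebra and sign analysis. First I would observe that (\ref{eq:upS1a}) is of the pure-square form $\left(\check{u}-p^{\prime}\right)^{2}=A$ with $A=S\left(\check{\omega}_{\mathrm{e}}^{\prime}-\check{\omega}_{\mathrm{w}}\right)$, so that its two roots are obtained simply by taking square roots, $\check{u}_{\pm}=p^{\prime}\pm\sqrt{A}$, which is precisely the first equality in (\ref{eq:upS1b}). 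The second equality in (\ref{eq:upS1b}) then follows by substituting the normalizations $\check{\omega}_{\mathrm{e}}^{\prime}=\check{\omega}_{\mathrm{e}}\left(1+\delta_{\omega}\right)$ and $\check{\omega}_{\mathrm{w}}=\check{\omega}_{\mathrm{e}}\left(1+\delta_{\mathrm{w}}\right)$ recorded in (\ref{eq:perchi1e}), whose difference is $\check{\omega}_{\mathrm{e}}\left(\delta_{\omega}-\delta_{\mathrm{w}}\right)$.

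Next I would settle the reality criterion. Since $S>0$ by (\ref{eq:upS1ab}), the sign of the radicand $A$ coincides with the sign of $\check{\omega}_{\mathrm{e}}^{\prime}-\check{\omega}_{\mathrm{w}}$; hence $A\geq0$, and the roots $\check{u}_{\pm}$ are real, exactly when $\check{\omega}_{\mathrm{w}}\leq\check{\omega}_{\mathrm{e}}^{\prime}$, with the strict inequality (\ref{eq:upS1ba}) producing a genuine (nondegenerate) split. Subtracting the two expressions in (\ref{eq:upS1b}) immediately yields the split formula (\ref{eq:upS1c}), namely $\check{u}_{+}-\check{u}_{-}=2\sqrt{A}$, again in both displayed forms via the same substitution from (\ref{eq:perchi1e}).

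Finally, for the last implication I would carry out a short sign argument. The hypothesis $-\delta_{\mathrm{w}}>0$ means $\delta_{\mathrm{w}}<0$, so $\left|\delta_{\mathrm{w}}\right|=-\delta_{\mathrm{w}}$; combining this with $\left|\delta_{\omega}\right|<\left|\delta_{\mathrm{w}}\right|$ gives $\delta_{\omega}\geq-\left|\delta_{\omega}\right|>-\left|\delta_{\mathrm{w}}\right|=\delta_{\mathrm{w}}$, whence $\delta_{\omega}-\delta_{\mathrm{w}}>0$. Because $\check{\omega}_{\mathrm{e}}>0$ by the last equation in (\ref{eq:perchi2a}), multiplying back recovers $\check{\omega}_{\mathrm{e}}^{\prime}-\check{\omega}_{\mathrm{w}}=\check{\omega}_{\mathrm{e}}\left(\delta_{\omega}-\delta_{\mathrm{w}}\right)>0$, which is (\ref{eq:upS1cb}) and in turn gives the reality condition (\ref{eq:upS1ba}).

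Since the content is so elementary, there is no real obstacle inside the theorem statement itself: all the genuine analytic work lives upstream in justifying the first-order reduction (\ref{eq:upS1a}) and the positivity of $S$ in (\ref{eq:upS1ab}), both of which the statement permits me to assume. The only places demanding any care are the consistent bookkeeping of signs in the last implication and the uniform use of $\check{\omega}_{\mathrm{e}}$ as the common frequency unit, so that the two equivalent forms of each displayed identity in (\ref{eq:upS1b}) and (\ref{eq:upS1c}) match term by term.
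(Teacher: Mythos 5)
Your proposal is correct and matches the paper's own treatment: the paper offers no separate proof, stating only that equations (\ref{eq:upS1a}) "readily imply" the theorem, and your explicit solution of the pure-square quadratic, the sign analysis of the radicand via $S>0$, and the final bookkeeping with $\left|\delta_{\omega}\right|<\left|\delta_{\mathrm{w}}\right|$ and $\delta_{\mathrm{w}}<0$ are exactly the elementary steps being left to the reader. Your remark that equality $\check{\omega}_{\mathrm{w}}=\check{\omega}_{\mathrm{e}}^{\prime}$ still gives real (though coincident) roots is a fair minor refinement of the paper's strict "if and only if."
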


\begin{rem}[stability and square root enhancement factor]
\label{rem:sqrt-enh} We refer to inequalities
\begin{equation}
\check{\omega}_{\mathrm{w}}<\check{\omega}_{\mathrm{e}},\;\check{\omega}_{\mathrm{e}}^{\prime},\label{eq:upS1cc}
\end{equation}
that appear in the statement of Theorem \ref{thm:velsplit} as the\emph{
frequency stability conditions}. These conditions are relevant to
the stability for they imply the real-valuedness of the velocities
$\check{u}_{\pm}$. Note that inequalities (\ref{eq:upS1ca}) imply
frequency stability conditions (\ref{eq:upS1cc}). 

Note also that according to equation (\ref{eq:upS1c}) the velocity
split $\check{u}_{+}-\check{u}_{-}$, a quantity that can be measured,
is proportional to $\sqrt{\check{\omega}_{\mathrm{e}}^{\prime}-\check{\omega}_{\mathrm{w}}}$.
The square root operation applied to the small frequency difference
$\check{\omega}_{\mathrm{e}}^{\prime}-\check{\omega}_{\mathrm{w}}$
effectively ``magnifies'' it. This is a typical manifestation of
the proximity to the EPD. \emph{The rise of the square root operation
can be traced to the EPD approximation to the characteristic equation
(\ref{eq:upS1a}) which is a quadratic in $\check{u}$ equation}.
\end{rem}

The EPD approximation to the characteristic equation (\ref{eq:upS1a})
and its solutions $\check{u}_{\pm}$ are illustrated by Fig. \ref{fig:approx-EPD}.
\begin{figure}
\begin{centering}
\includegraphics[scale=0.4]{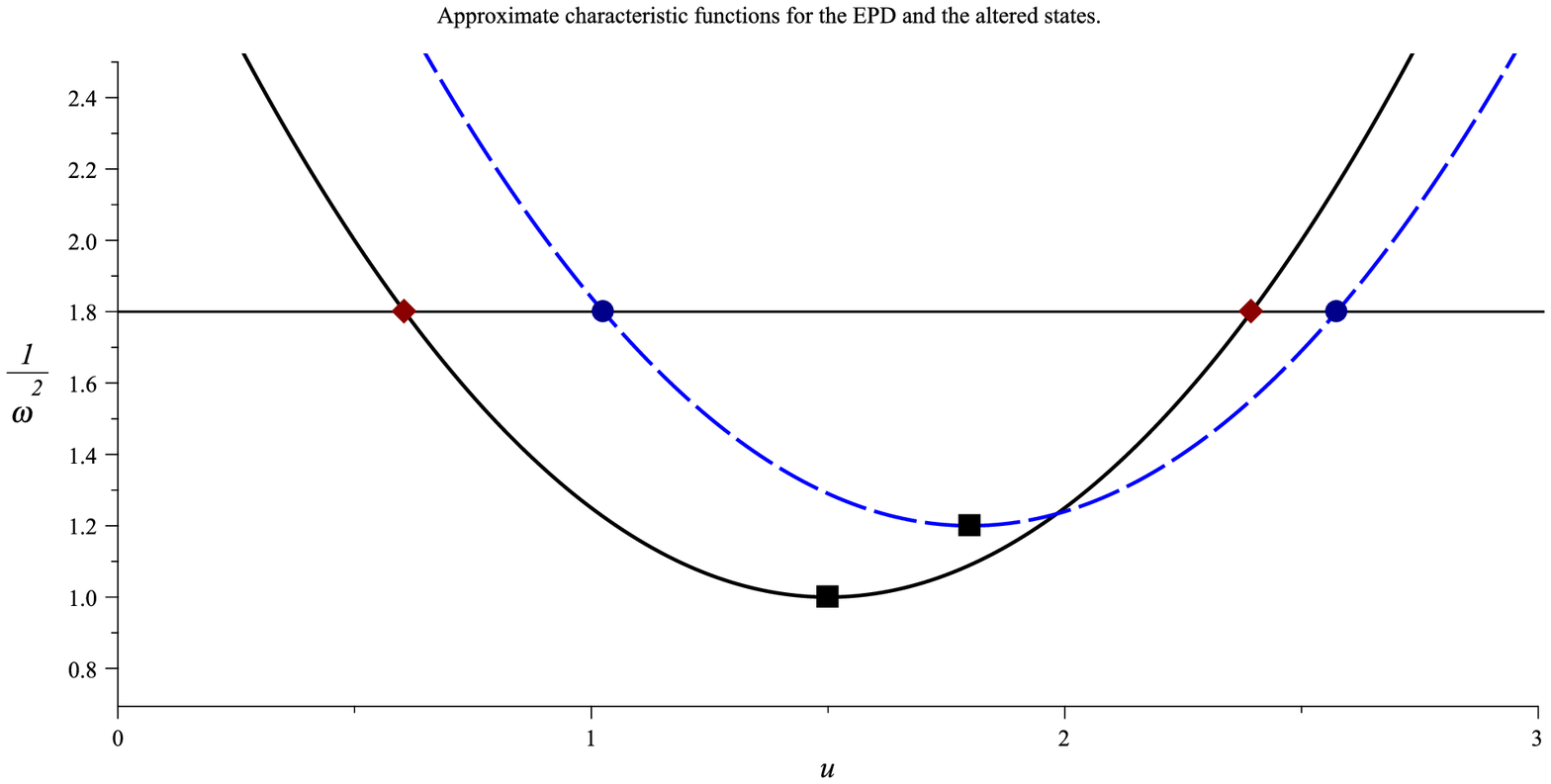}
\par\end{centering}
\centering{}\caption{\label{fig:approx-EPD} An illustrative plot for the approximate characteristic
equation similar (\ref{eq:upS1a}) for EPD state represented by black
solid parabola and the altered state represented by blue dashed parabola
(horizontal axis \textendash{} $u$, vertical axis \textendash{} $\frac{1}{\check{\omega}^{2}}$).
The black horizontal line corresponds to the work frequency $\check{\omega}=\check{\omega}_{\mathrm{w}}$.
The points of intersection of the horizontal line and blue dashed
parabola shown as blue disk points represent solutions $\check{u}_{\pm}$
to equation (\ref{eq:upS1b}). The points of intersection of the horizontal
line and black solid parabola are shown as brown diamond points represents
solutions $\check{u}_{\pm}$ to equation similar (\ref{eq:upS1b})
corresponding to $p$ rather than $p^{\prime}$. The EPDs correspond
the to vertexes of the parabolas shown as black square points. Notice
that the EPD frequencies $\check{\omega}_{\mathrm{e}}$ and $\check{\omega}_{\mathrm{e}}^{\prime}$
associated with the parabolas vertices satisfy the frequency stability
conditions (\ref{eq:upS1cc}).}
\end{figure}

\subsection{Sensing algorithm}

Our approach to utilize the TWT for sensing is as follows. First of
all, we assume that (i) the small sensed signal alters either distributed
capacitance $C$ or distributed inductance $L$; (ii) the physical
alteration of the TWT indicated in (i) is described mathematically
by relations (\ref{eq:perchi1c})-(\ref{eq:perchi1f}). Second of
all, we assume that prior to any measurements the values of parameters
$C$, $L$, $\chi$ and $p$ that determine the current state of the
TWT are established and known. These values constitute a reference
point for an assessment of the results of sensing. Third of all, we
assume that the relationship between $\delta_{C}$ or $\delta_{L}$
and the small sensed signal is known and is an integral part of the
sensor.

\emph{The consecutive steps of our sensing algorithm are as follows:}
\begin{enumerate}
\item Apply the small sensed signal to the relevant component of the TWT.
Probe then the TWT by an excitation at the work point frequency $\check{\omega}=\check{\omega}_{\mathrm{w}}$. 
\item Observe the two TWT eigenmodes and assess the corresponding to them
characteristic velocities $\check{u}_{+}$ and $\check{u}_{-}$ that
are close to the EPD velocity $p$ and satisfy the EPD approximation
(\ref{eq:upS1a}) to the characteristic equation.
\item Compute the velocity split $\check{u}_{+}-\check{u}_{-}$. Depending
on what of the parameters $C$ or $L$ was utilized for sensing find
the corresponding value $\delta_{C}$ or $\delta_{L}$ using respectively
formulas (\ref{eq:delCLom1b}) and (\ref{eq:delCLom1c}) below.
\item Based on assumed to be known relationship between $\delta_{C}$ or
$\delta_{L}$ and the small sensed signal recover the value of the
small sensed signal.
\end{enumerate}
Fig. \ref{fig:approx-EPD} illustrates graphically the proposed sensing
approach. The sensing approach utilizing the EPD of the TWT is conceptually
similar to the one for circuits advanced in \cite{FigPert} but it
is naturally more complex. The reason for complexity is that the TWT
as a physical system is naturally a more complex system compare to
the simple circuits with EPDs we advanced in \cite{FigPert}.

In the light of Theorem \ref{thm:velsplit} and Remark \ref{rem:sqrt-enh}
let us take a closer look at the relative EPD increment $\delta_{\omega}$
and the relative work point increment $\delta_{\mathrm{w}}$ defined
by equations (\ref{eq:upS1aa}). Both relative increments $\delta_{\omega}$
and $\delta_{\mathrm{w}}$ in equations (\ref{eq:upS1aa}) use the
EPD frequency $\check{\omega}_{\mathrm{e}}=\check{\omega}_{\mathrm{e}}\left(p,\chi\right)$
as a reference point and effectively as a natural frequency unit.
The stability of the TWT operation requires the characteristic velocities
$\check{u}_{\pm}$ defined by equations (\ref{eq:upS1b}) to be real-valued.
This requirement is fulfilled always if and only if the expression
under square root in the right-hand side of equation (\ref{eq:upS1a})
is non-negative. That in turn leads to
\begin{equation}
\delta_{\omega}-\delta_{\mathrm{w}}=\frac{\check{\omega}_{\mathrm{e}}^{\prime}-\check{\omega}_{\mathrm{w}}}{\check{\omega}_{\mathrm{e}}}>0,\text{ implying }\check{\omega}_{\mathrm{w}}<\check{\omega}_{\mathrm{e}},\label{eq:upS1d}
\end{equation}
since $\delta_{\omega}$ can be zero and factors $\check{\omega}_{\mathrm{e}},S>0$
according to relations (\ref{eq:perchi2a}) and (\ref{eq:upS1ab}).
The second inequality in (\ref{eq:upS1d}) is one of the frequency
stability conditions (\ref{eq:upS1cc}). As it is indicated previously
in the last relations in (\ref{eq:perchi1f}) we require the following
relations to hold
\begin{equation}
\left|\delta_{\omega}\right|=\left|\frac{\check{\omega}_{\mathrm{e}}^{\prime}-\check{\omega}_{\mathrm{e}}}{\check{\omega}_{\mathrm{e}}}\right|\ll\left|\delta_{\mathrm{w}}\right|=\left|\frac{\check{\omega}_{\mathrm{w}}-\check{\omega}_{\mathrm{e}}}{\check{\omega}_{\mathrm{e}}}\right|\ll1.\label{eq:upS1e}
\end{equation}
The point of inequalities (\ref{eq:upS1e}) as a requirement is to
assure that the altered EPD frequency $\check{\omega}_{\mathrm{e}}^{\prime}$
satisfies the frequency stability conditions (\ref{eq:upS1cc}) as
soon as the EPD frequency $\check{\omega}_{\mathrm{e}}$ satisfies
them. Consequently relations (\ref{eq:upS1e}) assure the effectiveness
and robustness of described above approach to sensing.

We proceed now with relating the relative increments $\delta_{\omega}$
and $\delta_{\mathrm{w}}$ defined by equations (\ref{eq:upS1aa})
to quantities $\delta_{C}$ and $\delta_{L}$ that are directly effected
by the small sensor signal. Using equations (\ref{eq:perchi2a}) and
assuming that $\delta_{C}$ and $\delta_{L}$ are small after tedious
but elementary evaluations we find the following first order approximations
\begin{equation}
\delta_{\chi^{2}}\cong-\left(\delta_{C}+\delta_{L}\right),\quad\delta_{p}\cong-\frac{\left(1-p\right)\left[\left(\chi^{2}+p^{2}\right)\delta_{{\it C}}+2\chi^{2}\delta_{{\it L}}\right]}{\left(4-3p\right)\chi^{2}-p^{3}}.\label{eq:upS2a}
\end{equation}
Using representation (\ref{eq:perchi2a}) for $\check{\omega}_{\mathrm{e}}\left(p,\chi\right)$
and equations (\ref{eq:upS2a}) (\ref{eq:upS2a}) under assumption
that $\delta_{C}$ and $\delta_{L}$ are small we obtain the following
first order approximation
\begin{equation}
\delta_{\omega}=\frac{\check{\omega}_{\mathrm{e}}\left(p^{\prime},\chi{}^{\prime}\right)-\check{\omega}_{\mathrm{e}}\left(p,\chi\right)}{\check{\omega}_{\mathrm{e}}\left(p,\chi\right)}\cong-\frac{p^{2}\delta_{C}+{\it \chi}^{2}\delta_{L}}{2\left({\it \chi}^{2}-p^{3}\right)}.\label{eq:upS2d}
\end{equation}
Fig. \ref{fig:incr-om-C} is a graphical representations of function
the increment $\check{\omega}_{\mathrm{e}}\left(p^{\prime},\chi{}^{\prime}\right)-\check{\omega}_{\mathrm{e}}\left(p,\chi\right)$
and its first order approximation as in equation (\ref{eq:upS2d})
for ranges of values of relative variation coefficient $\delta_{C}$
when $\delta_{L}=0$. Fig. \ref{fig:incr-om-L} similarly is a graphical
representation of function the increment $\check{\omega}_{\mathrm{e}}\left(p^{\prime},\chi{}^{\prime}\right)-\check{\omega}_{\mathrm{e}}\left(p,\chi\right)$
and its first order approximation as in equation (\ref{eq:upS2d})
for two ranges of values of the relative variation coefficient $\delta_{L}$
when $\delta_{C}=0$. Figures \ref{fig:incr-om-C} and \ref{fig:incr-om-L}
indicate that for the chosen values of parameters when the value of
relative variation coefficient $\delta_{C}$ or $\delta_{L}$ is under
$5\%$ the simple formula (\ref{eq:upS2d}) yields pretty accurate
values compare to the exact value of the increment. If values relative
variation coefficient $\delta_{C}$ or $\delta_{L}$ values are below
$1\%$ the approximation values are nearly the same as the exact values.
\begin{figure}[h]
\begin{centering}
\hspace{-1cm}\includegraphics[scale=0.28]{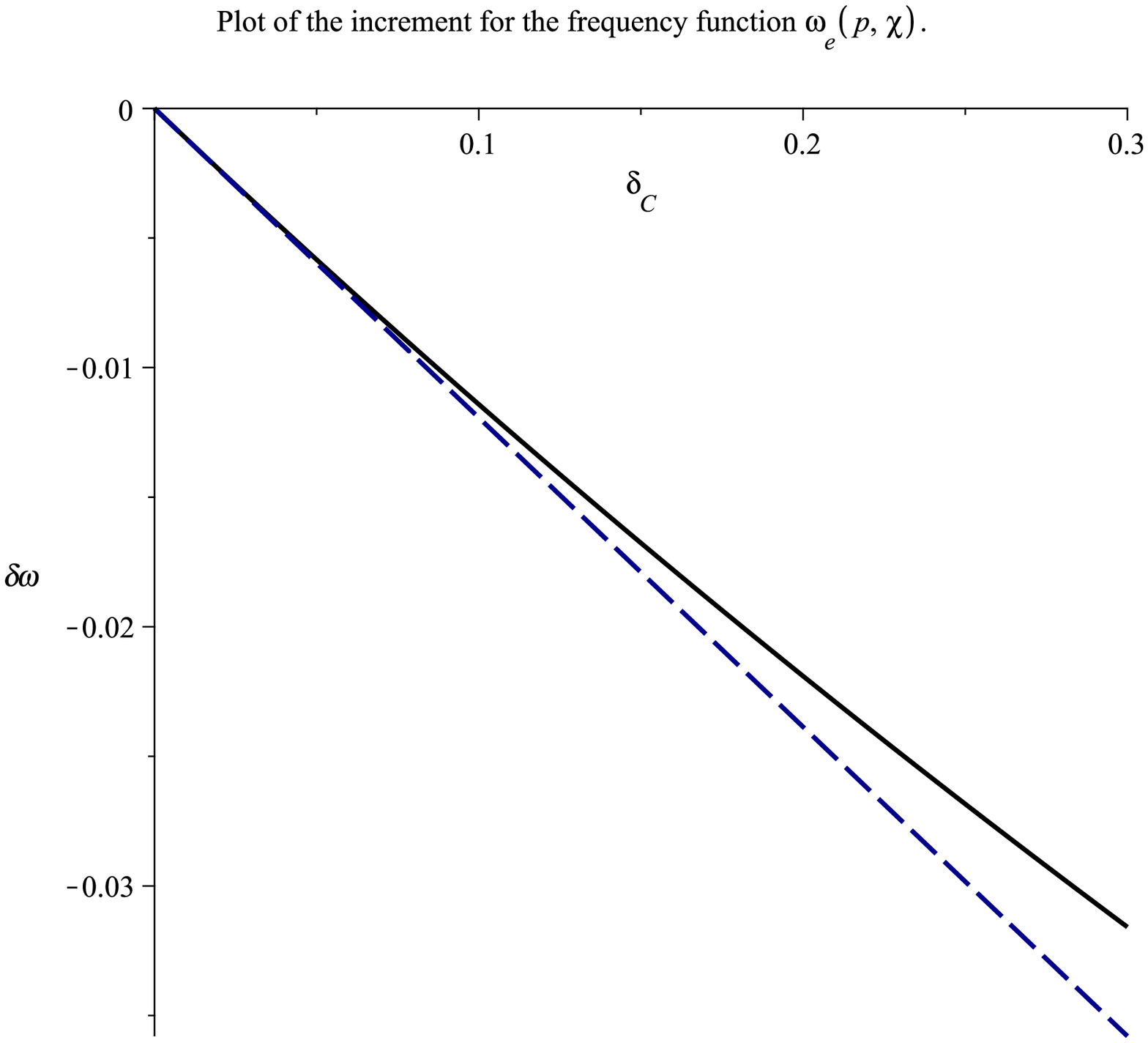}\hspace{2cm}\includegraphics[scale=0.28]{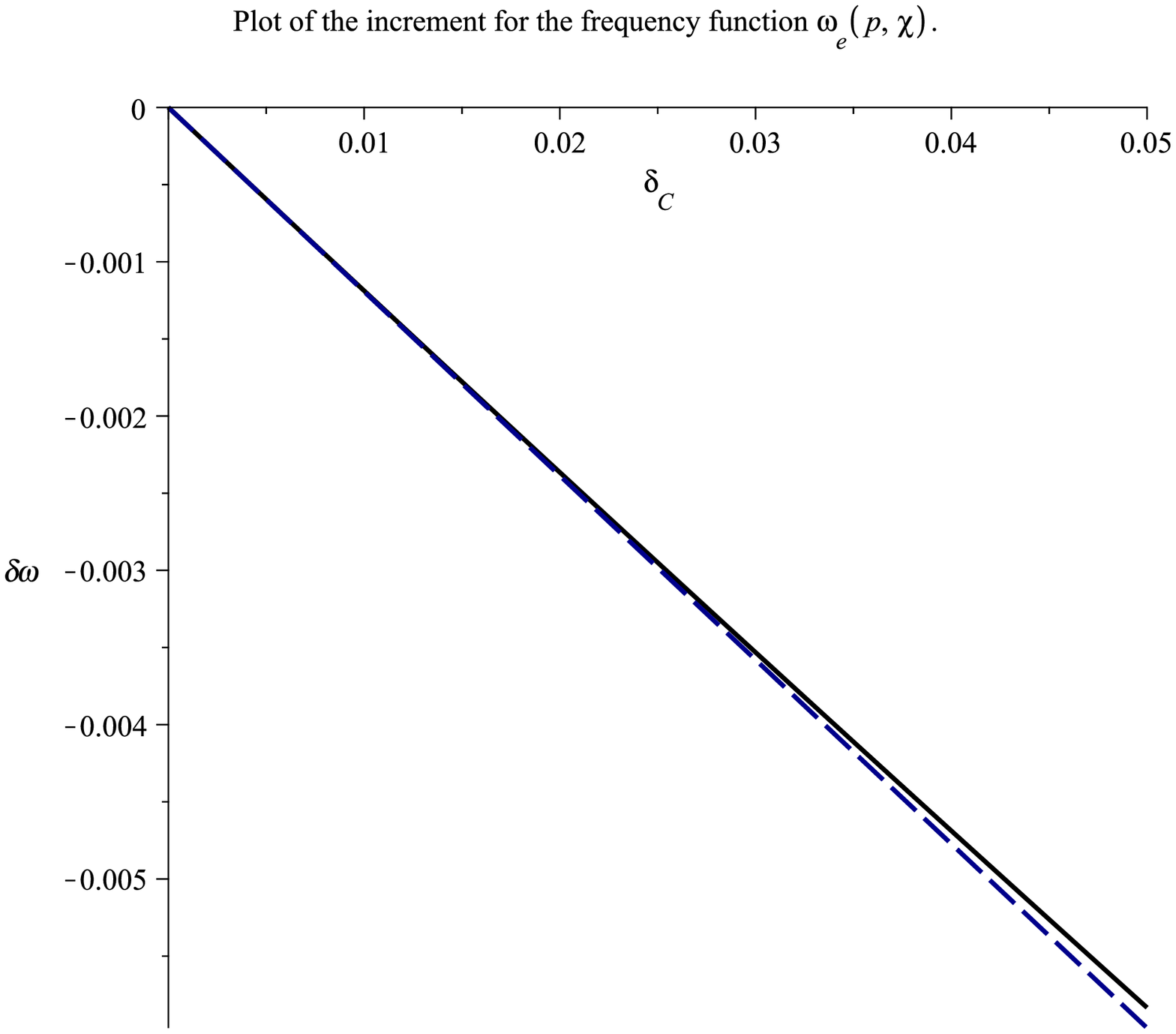}
\par\end{centering}
\centering{}(a)\hspace{7cm}(b)\caption{\label{fig:incr-om-C} Plots of the increment $\check{\omega}_{\mathrm{e}}\left(p^{\prime},\chi{}^{\prime}\right)-\check{\omega}_{\mathrm{e}}\left(p,\chi\right)$
and its first order approximation as in equation (\ref{eq:upS2d})
for $\mathring{v}=1.1$, $K_{\gamma}=1.2$, $C=1.3$, $L=1.4$, $\delta_{L}=0$
and: (a) $0<\delta_{C}<0.3$; (b) $0<\delta_{C}<0.05$.}
\end{figure}
 
\begin{figure}[h]
\begin{centering}
\hspace{-0.5cm}\includegraphics[scale=0.28]{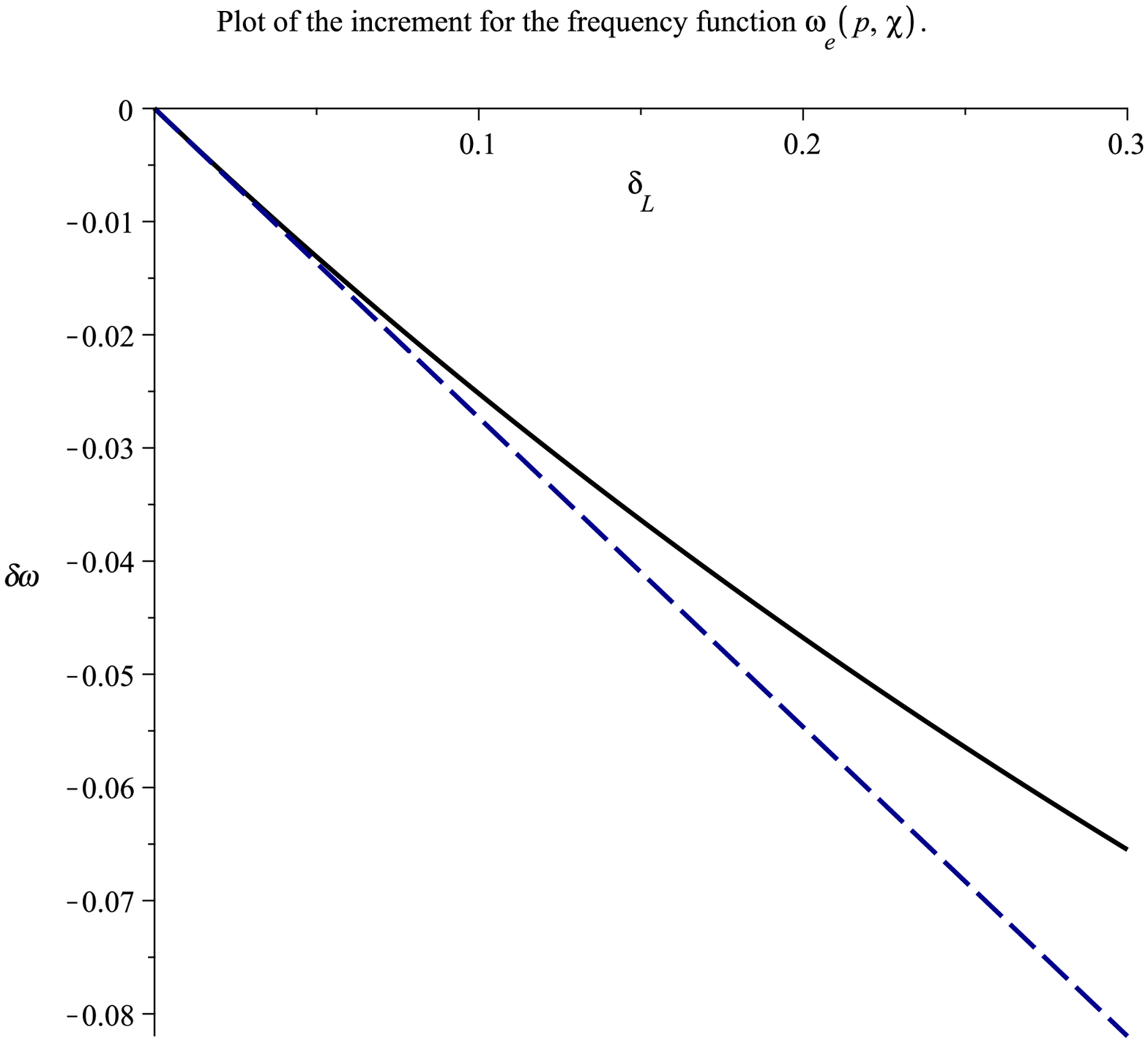}\hspace{2cm}\includegraphics[scale=0.3]{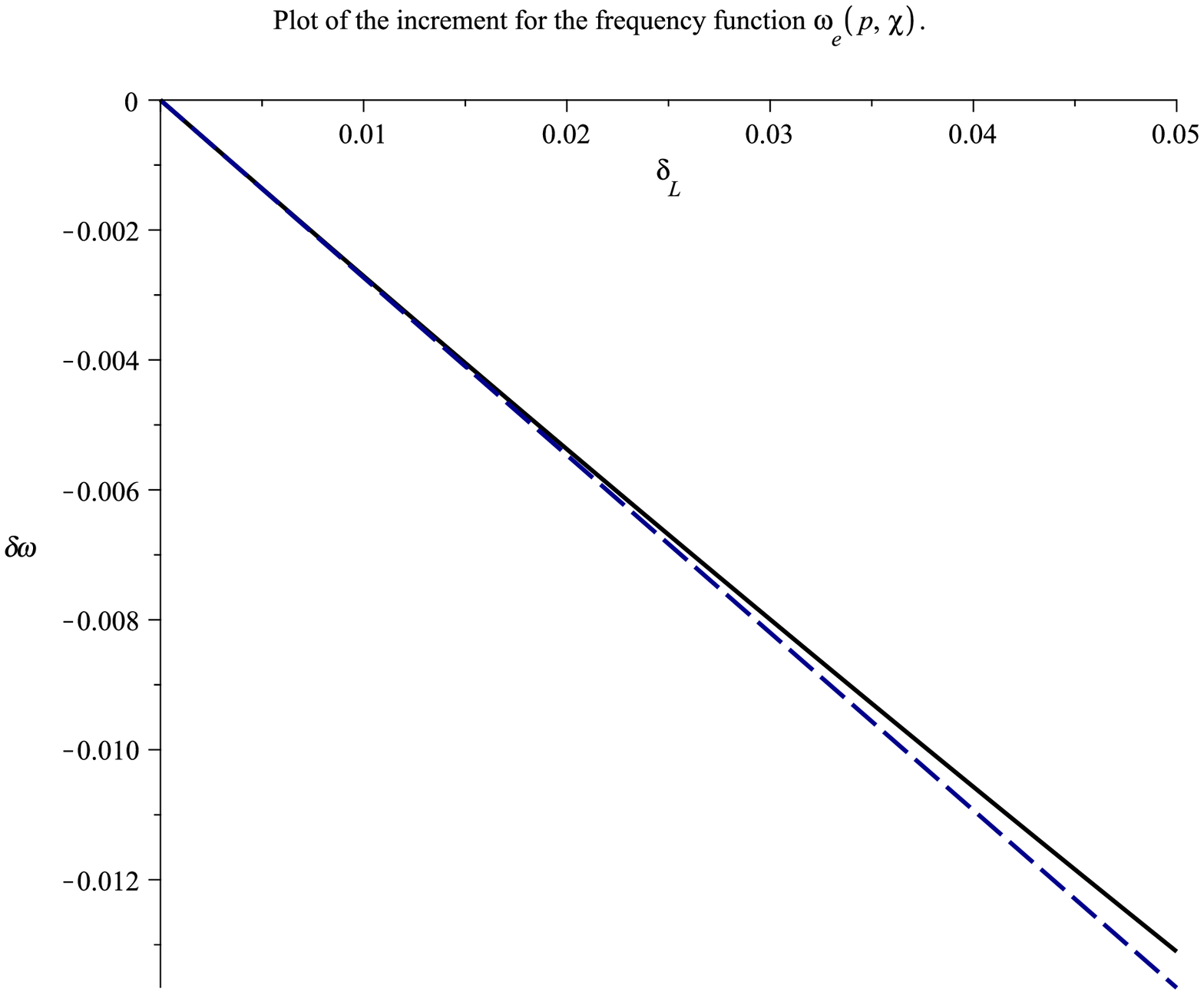}
\par\end{centering}
\centering{}(a)\hspace{7cm}(b)\caption{\label{fig:incr-om-L} Plots of the increment $\check{\omega}_{\mathrm{e}}\left(p^{\prime},\chi{}^{\prime}\right)-\check{\omega}_{\mathrm{e}}\left(p,\chi\right)$
and its first order approximation as in equation (\ref{eq:upS2d})
for $\mathring{v}=1.1$, $K_{\gamma}=1.2$, $C=1.3$, $L=1.4$, $\delta_{C}=0$
and: (a) $0<\delta_{L}<0.3$; (b) $0<\delta_{L}<0.05$.}
\end{figure}

Using equations (\ref{eq:upS1c}) and (\ref{eq:upS2d}) we recover
the value of $\delta_{\check{\omega}}$ based on the measured velocity
shift $\check{u}_{+}-\check{u}_{-}$ as follows
\begin{equation}
\delta_{\check{\omega}}\cong-\frac{p^{2}\delta_{C}+{\it \chi}^{2}\delta_{L}}{2\left({\it \chi}^{2}-p^{3}\right)}=\frac{\left(\check{u}_{+}-\check{u}_{-}\right)^{2}}{4S\check{\omega}_{\mathrm{e}}}+\delta_{\mathrm{w}}=\frac{\left(\check{u}_{+}-\check{u}_{-}\right)^{2}}{4S\check{\omega}_{\mathrm{e}}}+\frac{\check{\omega}_{\mathrm{w}}}{\check{\omega}_{\mathrm{e}}}-1.\label{eq:delCLom1a}
\end{equation}
Consequently, if the distributed capacitance $C$ was used for sensing
and hence $\delta_{L}=0$ we obtain from equations (\ref{eq:delCLom1a})
the value of $\delta_{C}$, that is
\begin{equation}
\delta_{C}\cong\frac{2\left({\it \chi}^{2}-p^{3}\right)}{p^{2}}\left[1-\frac{\check{\omega}_{\mathrm{w}}}{\check{\omega}_{\mathrm{e}}}-\frac{\left(\check{u}_{+}-\check{u}_{-}\right)^{2}}{4S\check{\omega}_{\mathrm{e}}}\right].\label{eq:delCLom1b}
\end{equation}
Similarly, if the distributed inductance $L$ was used for sensing
and hence $\delta_{C}=0$ we obtain from equations (\ref{eq:delCLom1a})
the value of $\delta_{L}$, that is
\begin{equation}
\delta_{L}\cong\frac{2\left({\it \chi}^{2}-p^{3}\right)}{\chi^{2}}\left[1-\frac{\check{\omega}_{\mathrm{w}}}{\check{\omega}_{\mathrm{e}}}-\frac{\left(\check{u}_{+}-\check{u}_{-}\right)^{2}}{4S\check{\omega}_{\mathrm{e}}}\right].\label{eq:delCLom1c}
\end{equation}

\renewcommand{\sectionname}{Appendix}
\counterwithin{section}{part}
\renewcommand{\thesection}{\Alph{section}}
\setcounter{section}{0}
\renewcommand{\theequation}{\Alph{section}.\arabic{equation}}

\section{Fourier transform\label{sec:four}}

Our preferred form of the Fourier transforms as in \cite[7.2, 7.5]{Foll},
\cite[20.2]{ArfWeb}:
\begin{gather}
f\left(t\right)=\int_{-\infty}^{\infty}\hat{f}\left(\omega\right)\mathrm{e}^{-\mathrm{i}\omega t}\,\mathrm{d}\omega,\quad\hat{f}\left(\omega\right)=\frac{1}{2\pi}\int_{-\infty}^{\infty}f\left(t\right)e^{\mathrm{i}\omega t}\,\mathrm{d}t,\label{eq:fourier1a}\\
f\left(z,t\right)=\int_{-\infty}^{\infty}\hat{f}\left(k,\omega\right)\mathrm{e}^{-\mathrm{i}\left(\omega t-kz\right)}\,\mathrm{d}k\mathrm{d}\omega,\label{eq:fourier1b}\\
\hat{f}\left(k,\omega\right)=\frac{1}{\left(2\pi\right)^{2}}\int_{-\infty}^{\infty}f\left(z,t\right)e^{\mathrm{i}\left(\omega t-kz\right)}\,dz\mathrm{d}t.\nonumber 
\end{gather}
This preference was motivated by the fact that the so-defined Fourier
transform of the convolution of two functions has its simplest form.
Namely, the convolution $f\ast g$ of two functions $f$ and $g$
is defined by \cite[7.2, 7.5]{Foll},
\begin{gather}
\left[f\ast g\right]\left(t\right)=\left[g\ast f\right]\left(t\right)=\int_{-\infty}^{\infty}f\left(t-t^{\prime}\right)g\left(t^{\prime}\right)\,\mathrm{d}t^{\prime},\label{eq:fourier2a}\\
\left[f\ast g\right]\left(z,t\right)=\left[g\ast f\right]\left(z,t\right)=\int_{-\infty}^{\infty}f\left(z-z^{\prime},t-t^{\prime}\right)g\left(z^{\prime},t^{\prime}\right)\,\mathrm{d}z^{\prime}\mathrm{d}t^{\prime}.\label{eq:fourier2b}
\end{gather}
 Then its Fourier transform as defined by equations (\ref{eq:fourier1a})
and (\ref{eq:fourier1b}) satisfies the following properties:
\begin{gather}
\widehat{f\ast g}\left(\omega\right)=\hat{f}\left(\omega\right)\hat{g}\left(\omega\right),\label{eq:fourier3a}\\
\widehat{f\ast g}\left(k,\omega\right)=\hat{f}\left(k,\omega\right)\hat{g}\left(k,\omega\right).\label{eq:fourier3b}
\end{gather}

\section{Matrix polynomial and its companion matrix\label{sec:mat-poly}}

An important incentive for considering matrix polynomials is that
they are relevant to the spectral theory of the differential equations
of the order higher than 1, particularly the Euler-Lagrange equations
which are the second-order differential equations in time. We provide
here selected elements of the theory of matrix polynomials following
mostly to \cite[II.7, II.8]{GoLaRo}, \cite[9]{Baum}. General matrix
polynomial eigenvalue problem reads
\begin{equation}
A\left(s\right)x=0,\quad A\left(s\right)=\sum_{j=0}^{\nu}A_{j}s^{j},\quad x\neq0,\label{eq:Aux1a}
\end{equation}
where $s$ is complex number, $A_{k}$ are constant $m\times m$ matrices
and $x\in\mathbb{C}^{m}$ is $m$-dimensional column-vector. We refer
to problem (\ref{eq:Aux1a}) of funding complex-valued $s$ and non-zero
vector $x\in\mathbb{C}^{m}$ as polynomial eigenvalue problem. 

If a pair of a complex $s$ and non-zero vector $x$ solves problem
(\ref{eq:Aux1a}) we refer to $s$ as an \emph{eigenvalue} or as a\emph{
characteristic value} and to $x$ as the corresponding to $s$ \emph{eigenvector}.
Evidently the characteristic values of problem (\ref{eq:Aux1a}) can
be found from polynomial \emph{characteristic equation}
\begin{equation}
\det\left\{ A\left(s\right)\right\} =0.\label{eq:Aux1b}
\end{equation}
We refer to matrix polynomial $A\left(s\right)$ as \emph{regular}
if $\det\left\{ A\left(s\right)\right\} $ is not identically zero.
We denote by $m\left(s_{0}\right)$ the \emph{multiplicity} (called
also \emph{algebraic multiplicity}) of eigenvalue $s_{0}$ as a root
of polynomial $\det\left\{ A\left(s\right)\right\} $. In contrast,
the \emph{geometric multiplicity} of eigenvalue $s_{0}$ is defined
as $\dim\left\{ \ker\left\{ A\left(s_{0}\right)\right\} \right\} $,
where $\ker\left\{ A\right\} $ defined for any square matrix $A$
stands for the subspace of solutions $x$ to equation $Ax=0$. Evidently,
the geometric multiplicity of eigenvalue does not exceed its algebraic
one, see Corollary \ref{cor:dim-ker}. 

It turns out that the matrix polynomial eigenvalue problem (\ref{eq:Aux1a})
can be always recast as the standard ``linear'' eigenvalue problem,
namely
\begin{equation}
\left(s\mathsf{B}-\mathsf{A}\right)\mathsf{x}=0,\label{eq:Aux1c}
\end{equation}
where $m\nu\times m\nu$ matrices $\mathsf{A}$ and $\mathsf{B}$
are defined by
\begin{gather}
\mathsf{B}=\left[\begin{array}{ccccc}
\mathbb{I} & 0 & \cdots & 0 & 0\\
0 & \mathbb{I} & 0 & \cdots & 0\\
0 & 0 & \ddots & \cdots & \vdots\\
\vdots & \vdots & \ddots & \mathbb{I} & 0\\
0 & 0 & \cdots & 0 & A_{\nu}
\end{array}\right],\quad\mathsf{A}=\left[\begin{array}{ccccc}
0 & \mathbb{I} & \cdots & 0 & 0\\
0 & 0 & \mathbb{I} & \cdots & 0\\
0 & 0 & 0 & \cdots & \vdots\\
\vdots & \vdots & \ddots & 0 & \mathbb{I}\\
-A_{0} & -A_{1} & \cdots & -A_{\nu-2} & -A_{\nu-1}
\end{array}\right],\label{eq:CBA1b}
\end{gather}
with $\mathbb{I}$ being $m\times m$ identity matrix. Matrix $\mathsf{A}$,
particularly in monic case, is often referred to as \emph{companion
matrix}. In the case of \emph{monic polynomial} $A\left(\lambda\right)$,
when $A_{\nu}=\mathbb{I}$ is $m\times m$ identity matrix, matrix
$\mathsf{B}=\mathsf{I}$ is $m\nu\times m\nu$ identity matrix. The
reduction of original polynomial problem (\ref{eq:Aux1a}) to an equivalent
linear problem (\ref{eq:Aux1c}) is called \emph{linearization}.

The linearization is not unique, and one way to accomplish is by introducing
the so-called known ``\emph{companion polynomia}l'' which is $m\nu\times m\nu$
matrix
\begin{gather}
\mathsf{C}_{A}\left(s\right)=s\mathsf{B}-\mathsf{A}=\left[\begin{array}{ccccc}
s\mathbb{I} & -\mathbb{I} & \cdots & 0 & 0\\
0 & s\mathbb{I} & -\mathbb{I} & \cdots & 0\\
0 & 0 & \ddots & \cdots & \vdots\\
\vdots & \vdots & \vdots & s\mathbb{I} & -\mathbb{I}\\
A_{0} & A_{1} & \cdots & A_{\nu-2} & sA_{\nu}+A_{\nu-1}
\end{array}\right].\label{eq:CBA1a}
\end{gather}
Notice that in the case of the EL equations the linearization can
be accomplished by the relevant Hamilton equations.

To demonstrate the equivalency between the eigenvalue problems for
$m\nu\times m\nu$ companion polynomial $\mathsf{C}_{A}\left(s\right)$
and the original $m\times m$ matrix polynomial $A\left(s\right)$
we introduce two $m\nu\times m\nu$ matrix polynomials $\mathsf{E}\left(s\right)$
and $\mathsf{F}\left(s\right)$. Namely,
\begin{gather}
\mathsf{E}\left(s\right)=\left[\begin{array}{ccccc}
E_{1}\left(s\right) & E_{2}\left(s\right) & \cdots & E_{\nu-1}\left(s\right) & \mathbb{I}\\
-\mathbb{I} & 0 & 0 & \cdots & 0\\
0 & -\mathbb{I} & \ddots & \cdots & \vdots\\
\vdots & \vdots & \ddots & 0 & 0\\
0 & 0 & \cdots & -\mathbb{I} & 0
\end{array}\right],\label{eq:CBA1c}\\
\det\left\{ \mathsf{E}\left(s\right)\right\} =1,\nonumber 
\end{gather}
where $m\times m$ matrix polynomials $E_{j}\left(s\right)$ are defined
by the following recursive formulas
\begin{gather}
E_{\nu}\left(s\right)=A_{\nu},\quad E_{j-1}\left(s\right)=A_{j-1}+sE_{j}\left(s\right),\quad j=\nu,\ldots,2.\label{eq:CBA1d}
\end{gather}
Matrix polynomial $\mathsf{F}\left(s\right)$ is defined by
\begin{gather}
\mathsf{F}\left(s\right)=\left[\begin{array}{ccccc}
\mathbb{I} & 0 & \cdots & 0 & 0\\
-s\mathbb{I} & \mathbb{I} & 0 & \cdots & 0\\
0 & -s\mathbb{I} & \ddots & \cdots & \vdots\\
\vdots & \vdots & \ddots & \mathbb{I} & 0\\
0 & 0 & \cdots & -s\mathbb{I} & \mathbb{I}
\end{array}\right],\quad\det\left\{ \mathsf{F}\left(s\right)\right\} =1.\label{eq:CBA1e}
\end{gather}
Notice, that both matrix polynomials $\mathsf{E}\left(s\right)$ and
$\mathsf{F}\left(s\right)$ have constant determinants readily implying
that their inverses $\mathsf{E}^{-1}\left(s\right)$ and $\mathsf{F}^{-1}\left(s\right)$
are also matrix polynomials. Then it is straightforward to verify
that
\begin{gather}
\mathsf{E}\left(s\right)\mathsf{C}_{A}\left(s\right)\mathsf{F}^{-1}\left(s\right)=\mathsf{E}\left(s\right)\left(s\mathsf{B}-\mathsf{A}\right)\mathsf{F}^{-1}\left(s\right)=\left[\begin{array}{ccccc}
A\left(s\right) & 0 & \cdots & 0 & 0\\
0 & \mathbb{I} & 0 & \cdots & 0\\
0 & 0 & \ddots & \cdots & \vdots\\
\vdots & \vdots & \ddots & \mathbb{I} & 0\\
0 & 0 & \cdots & 0 & \mathbb{I}
\end{array}\right].\label{eq:CBA1f}
\end{gather}
The identity (\ref{eq:CBA1f}) where matrix polynomials $\mathsf{E}\left(s\right)$
and $\mathsf{F}\left(s\right)$ have constant determinants can be
viewed as the definition of equivalency between matrix polynomial
$A\left(s\right)$ and its companion polynomial $\mathsf{C}_{A}\left(s\right)$. 

Let us take a look at the eigenvalue problem for eigenvalue $s$ and
eigenvector $\mathsf{x}\in\mathbb{C}^{m\nu}$ associated with companion
polynomial $\mathsf{C}_{A}\left(s\right)$, that is
\begin{gather}
\left(s\mathsf{B}-\mathsf{A}\right)\mathsf{x}=0,\quad\mathsf{x}=\left[\begin{array}{c}
x_{0}\\
x_{1}\\
x_{2}\\
\vdots\\
x_{\nu-1}
\end{array}\right]\in\mathbb{C}^{m\nu},\quad x_{j}\in\mathbb{C}^{m},\quad0\leq j\leq\nu-1,\label{eq:CBAx1a}
\end{gather}
where
\begin{equation}
\left(s\mathsf{B}-\mathsf{A}\right)\mathsf{x}=\left[\begin{array}{c}
sx_{0}-x_{1}\\
sx_{1}-x_{2}\\
\vdots\\
sx_{\nu-2}-x_{\nu-1}\\
\sum_{j=0}^{\nu-2}A_{j}x_{j}+\left(sA_{\nu}+A_{\nu-1}\right)x_{\nu-1}
\end{array}\right].\label{eq:CBAx1b}
\end{equation}
With equations (\ref{eq:CBAx1a}) and (\ref{eq:CBAx1b}) in mind we
introduce the following vector polynomial
\begin{equation}
\mathsf{x}_{s}=\left[\begin{array}{c}
x_{0}\\
sx_{0}\\
\vdots\\
s^{\nu-2}x_{0}\\
s^{\nu-1}x_{0}
\end{array}\right],\quad x_{0}\in\mathbb{C}^{m}.\label{eq:CBAx1c}
\end{equation}
Not accidentally, the components of the vector $\mathsf{x}_{s}$ in
its representation (\ref{eq:CBAx1c}) are in evident relation with
the derivatives $\partial_{t}^{j}\left(x_{0}\mathrm{e}^{st}\right)=s^{j}x_{0}\mathrm{e}^{st}$.
That is just another sign of the intimate relations between the matrix
polynomial theory and the theory of systems of ordinary differential
equations \cite[III.4]{Hale}.
\begin{thm}[eigenvectors]
\label{thm:matpol-eigvec} Let $A\left(s\right)$ as in equations
(\ref{eq:Aux1a}) be regular, that $\det\left\{ A\left(s\right)\right\} $
is not identically zero, and let $m\nu\times m\nu$ matrices $\mathsf{A}$
and $\mathsf{B}$ be defined by equations (\ref{eq:Aux1b}). Then
the following identities hold
\begin{equation}
\left(s\mathsf{B}-\mathsf{A}\right)\mathsf{x}_{s}=\left[\begin{array}{c}
0\\
0\\
\vdots\\
0\\
A\left(s\right)x_{0}
\end{array}\right],\;\mathsf{x}_{s}=\left[\begin{array}{c}
x_{0}\\
sx_{0}\\
\vdots\\
s^{\nu-2}x_{0}\\
s^{\nu-1}x_{0}
\end{array}\right],\label{eq:CBAx1d}
\end{equation}
\begin{gather}
\det\left\{ A\left(s\right)\right\} =\det\left\{ s\mathsf{B}-\mathsf{A}\right\} ,\quad\det\left\{ \mathsf{B}\right\} =\det\left\{ A_{\nu}\right\} ,\label{eq:CBAx1g}
\end{gather}
where $\det\left\{ A\left(s\right)\right\} =\det\left\{ s\mathsf{B}-\mathsf{A}\right\} $
is a polynomial of the degree $m\nu$ if $\det\left\{ \mathsf{B}\right\} =\det\left\{ A_{\nu}\right\} \neq0$.
There is one-to-one correspondence between solutions of equations
$A\left(s\right)x=0$ and $\left(s\mathsf{B}-\mathsf{A}\right)\mathsf{x}=0$.
Namely, a pair $s,\:\mathsf{x}$ solves eigenvalue problem $\left(s\mathsf{B}-\mathsf{A}\right)\mathsf{x}=0$
if and only if the following equalities hold
\begin{gather}
\mathsf{x}=\mathsf{x}_{s}=\left[\begin{array}{c}
x_{0}\\
sx_{0}\\
\vdots\\
s^{\nu-2}x_{0}\\
s^{\nu-1}x_{0}
\end{array}\right],\quad A\left(s\right)x_{0}=0,\quad x_{0}\neq0;\quad\det\left\{ A\left(s\right)\right\} =0.\label{eq:CBAx1e}
\end{gather}
\end{thm}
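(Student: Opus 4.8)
The statement packages three claims: the algebraic identity (\ref{eq:CBAx1d}), the determinant identities (\ref{eq:CBAx1g}), and the one-to-one correspondence (\ref{eq:CBAx1e}) between the solutions of $A(s)x=0$ and those of $\left(s\mathsf{B}-\mathsf{A}\right)\mathsf{x}=0$. My plan is to establish the identity (\ref{eq:CBAx1d}) first by a direct block computation, then read off the bijection almost for free, and finally obtain the determinant identities from the polynomial equivalence (\ref{eq:CBA1f}) already verified above. No single step is genuinely hard; the only place that needs argument rather than bookkeeping is the degree assertion at the end.

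For the identity (\ref{eq:CBAx1d}), I would substitute the special vector $\mathsf{x}_s$ of (\ref{eq:CBAx1c}), whose blocks are $x_j=s^{j}x_0$, into the explicit block expression (\ref{eq:CBAx1b}) for $\left(s\mathsf{B}-\mathsf{A}\right)\mathsf{x}$. The first $\nu-1$ block rows then take the form $sx_{j}-x_{j+1}=s\cdot s^{j}x_0-s^{j+1}x_0=0$, so they telescope to zero. The last block row is $\sum_{j=0}^{\nu-2}A_{j}s^{j}x_0+\left(sA_{\nu}+A_{\nu-1}\right)s^{\nu-1}x_0$, and regrouping the two terms carrying $s^{\nu-1}$ reassembles $\sum_{j=0}^{\nu}A_{j}s^{j}x_0=A(s)x_0$. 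This is exactly (\ref{eq:CBAx1d}).

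The correspondence (\ref{eq:CBAx1e}) then follows in both directions. For the forward direction, suppose $\left(s\mathsf{B}-\mathsf{A}\right)\mathsf{x}=0$ with block components $x_0,\dots,x_{\nu-1}$. Reading the first $\nu-1$ block rows of (\ref{eq:CBAx1b}) gives the recursion $x_{j+1}=sx_{j}$, which forces $x_{j}=s^{j}x_0$, i.e. $\mathsf{x}=\mathsf{x}_s$; the last block row then reads $A(s)x_0=0$ by the identity just proved. If $x_0=0$ all blocks vanish and $\mathsf{x}=0$, so a nonzero eigenvector requires $x_0\neq0$. Conversely, given $A(s)x_0=0$ with $x_0\neq0$, the vector $\mathsf{x}_s$ is nonzero (its top block is $x_0$) and is annihilated by $s\mathsf{B}-\mathsf{A}$ by (\ref{eq:CBAx1d}); the accompanying conclusion $\det\left\{A(s)\right\}=0$ is immediate since $A(s)$ then has a nontrivial kernel.

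For the determinant identities (\ref{eq:CBAx1g}), I would take determinants in the equivalence (\ref{eq:CBA1f}). Since $\det\left\{\mathsf{E}(s)\right\}=1$ and $\det\left\{\mathsf{F}(s)\right\}=1$ (hence $\det\left\{\mathsf{F}^{-1}(s)\right\}=1$), and the right-hand side of (\ref{eq:CBA1f}) is block-diagonal with one block $A(s)$ and the rest identities, this yields $\det\left\{s\mathsf{B}-\mathsf{A}\right\}=\det\left\{A(s)\right\}$. The identity $\det\left\{\mathsf{B}\right\}=\det\left\{A_{\nu}\right\}$ is read straight off the block-triangular shape of $\mathsf{B}$, whose diagonal blocks are $\nu-1$ copies of $\mathbb{I}$ and a single $A_{\nu}$. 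The one point deserving explicit care is the degree claim: writing $\det\left\{s\mathsf{B}-\mathsf{A}\right\}=s^{m\nu}\det\left\{\mathsf{B}-s^{-1}\mathsf{A}\right\}$ and letting $s\to\infty$ shows the coefficient of the top power $s^{m\nu}$ equals $\det\left\{\mathsf{B}\right\}=\det\left\{A_{\nu}\right\}$ (equivalently, in the Leibniz expansion only the product of the $s\mathsf{B}$ entries contributes to $s^{m\nu}$). Hence $\det\left\{A(s)\right\}$ has degree exactly $m\nu$ precisely when $\det\left\{A_{\nu}\right\}\neq0$. This leading-coefficient step is the natural candidate for the main obstacle, though it is quite mild.
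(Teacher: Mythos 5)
Your proposal is correct and follows essentially the same route as the paper's proof: direct block substitution to verify the identity (\ref{eq:CBAx1d}), the recursion $x_{j+1}=sx_j$ for the two directions of the correspondence, and the equivalence (\ref{eq:CBA1f}) together with the block structure of $\mathsf{B}$ for the determinant identities. Your leading-coefficient argument for the degree claim is in fact slightly more explicit than the paper's, which simply asserts the degree is $m\nu$ when $\det\left\{ A_{\nu}\right\} \neq0$.
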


\begin{proof}
Polynomial vector identity (\ref{eq:CBAx1d}) readily follows from
equations (\ref{eq:CBAx1b}) and (\ref{eq:CBAx1c}). Identities (\ref{eq:CBAx1g})
for the determinants follow straightforwardly from equations (\ref{eq:CBAx1c}),
(\ref{eq:CBAx1e}) and (\ref{eq:CBA1f}). If $\det\left\{ \mathsf{B}\right\} =\det\left\{ A_{\nu}\right\} \neq0$
then the degree of the polynomial $\det\left\{ s\mathsf{B}-\mathsf{A}\right\} $
has to be $m\nu$ since $\mathsf{A}$ and $\mathsf{B}$ are $m\nu\times m\nu$
matrices.

Suppose that equations (\ref{eq:CBAx1e}) hold. Then combining them
with proven identity (\ref{eq:CBAx1d}) we get $\left(s\mathsf{B}-\mathsf{A}\right)\mathsf{x}_{s}=0$
proving that expressions (\ref{eq:CBAx1e}) define an eigenvalue $s$
and an eigenvector $\mathsf{x}=\mathsf{x}_{s}$.

Suppose now that $\left(s\mathsf{B}-\mathsf{A}\right)\mathsf{x}=0$
where $\mathsf{x}\neq0$. Combing that with equations (\ref{eq:CBAx1b})
we obtain
\begin{gather}
x_{1}=sx_{0},\quad x_{2}=sx_{1}=s^{2}x_{0},\cdots,\quad x_{\nu-1}=s^{\nu-1}x_{0},\label{eq:CBAx2a}
\end{gather}
implying that
\begin{equation}
\mathsf{x}=\mathsf{x}_{s}=\left[\begin{array}{c}
x_{0}\\
sx_{0}\\
\vdots\\
s^{\nu-2}x_{0}\\
s^{\nu-1}x_{0}
\end{array}\right],\quad x_{0}\neq0,\label{eq:CBAx2b}
\end{equation}
 and 
\begin{equation}
\sum_{j=0}^{\nu-2}A_{j}x_{j}+\left(sA_{\nu}+A_{\nu-1}\right)x_{\nu-1}=A\left(s\right)x_{0}.\label{eq:CBAx2c}
\end{equation}
Using equations (\ref{eq:CBAx2b}) and identity (\ref{eq:CBAx1d})
we obtain
\begin{equation}
0=\left(s\mathsf{B}-\mathsf{A}\right)\mathsf{x}=\left(s\mathsf{B}-\mathsf{A}\right)\mathsf{x}_{s}=\left[\begin{array}{c}
0\\
0\\
\vdots\\
0\\
A\left(s\right)x_{0}
\end{array}\right].\label{eq:CBAx2d}
\end{equation}
 Equations (\ref{eq:CBAx2d}) readily imply $A\left(s\right)x_{0}=0$
and $\det\left\{ A\left(s\right)\right\} =0$ since $x_{0}\neq0$.
That completes the proof.
\end{proof}
\begin{rem}[characteristic polynomial degree]
\label{rem:char-pol-deg} Notice that according to Theorem \ref{thm:matpol-eigvec}
the characteristic polynomial $\det\left\{ A\left(s\right)\right\} $
for $m\times m$ matrix polynomial $A\left(s\right)$ has the degree
$m\nu$, whereas in linear case $s\mathbb{I}-A_{0}$ for $m\times m$
identity matrix $\mathbb{I}$ and $m\times m$ matrix $A_{0}$ the
characteristic polynomial $\det\left\{ s\mathbb{I}-A_{0}\right\} $
is of the degree $m$. This can be explained by observing that in
the non-linear case of $m\times m$ matrix polynomial $A\left(s\right)$
we are dealing effectively with many more $m\times m$ matrices $A$
than just a single matrix $A_{0}$.
\end{rem}

Another problem of our particular interest related to the theory of
matrix polynomials is eigenvalues and eigenvectors degeneracy and
consequently the existence of non-trivial Jordan blocks, that is Jordan
blocks of dimensions higher or equal to 2. The general theory addresses
this problem by introducing so-called ``Jordan chains'' which are
intimately related to the theory of system of differential equations
expressed as $A\left(\partial_{t}\right)x\left(t\right)=0$ and their
solutions of the form $x\left(t\right)=p\left(t\right)e^{st}$ where
$p\left(t\right)$ is a vector polynomial, see \cite[I, II]{GoLaRo},
\cite[9]{Baum}. Avoiding the details of Jordan chains developments
we simply notice that an important to us point of Theorem \ref{thm:matpol-eigvec}
is that there is one-to-one correspondence between solutions of equations
$A\left(s\right)x=0$ and $\left(s\mathsf{B}-\mathsf{A}\right)\mathsf{x}=0$,
and it has the following immediate implication.
\begin{cor}[equality of the dimensions of eigenspaces]
\label{cor:dim-ker} Under the conditions of Theorem \ref{thm:matpol-eigvec}
for any eigenvalue $s_{0}$, that is $\det\left\{ A\left(s_{0}\right)\right\} =0$,
we have
\begin{equation}
\dim\left\{ \ker\left\{ s_{0}\mathsf{B}-\mathsf{A}\right\} \right\} =\dim\left\{ \ker\left\{ A\left(s_{0}\right)\right\} \right\} .\label{eq:CBAx2e}
\end{equation}
In other words, the geometric multiplicities of the eigenvalue $s_{0}$
associated with matrices $A\left(s_{0}\right)$ and $s_{0}\mathsf{B}-\mathsf{A}$
are equal. In view of identity (\ref{eq:CBAx2e}) the following inequality
holds for the (algebraic) multiplicity $m\left(s_{0}\right)$
\begin{equation}
m\left(s_{0}\right)\geq\dim\left\{ \ker\left\{ A\left(s_{0}\right)\right\} \right\} .\label{eq:CBAx2f}
\end{equation}
\end{cor}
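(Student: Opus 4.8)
The plan is to exhibit an explicit linear isomorphism between $\ker\{A(s_0)\}$ and $\ker\{s_0\mathsf{B}-\mathsf{A}\}$ and to read off the equality of geometric multiplicities directly from the one-to-one correspondence already established in Theorem \ref{thm:matpol-eigvec}. To this end I would define the linear map
\[
L_{s_0}\colon\mathbb{C}^{m}\to\mathbb{C}^{m\nu},\qquad L_{s_0}\left(x_0\right)=\mathsf{x}_{s_0},
\]
where $\mathsf{x}_{s_0}$ is the stacked vector of equation (\ref{eq:CBAx1c}) with $s=s_0$. This map is manifestly linear in $x_0$. First I would verify, using identity (\ref{eq:CBAx1d}), that $L_{s_0}$ carries $\ker\{A(s_0)\}$ into $\ker\{s_0\mathsf{B}-\mathsf{A}\}$: indeed $\left(s_0\mathsf{B}-\mathsf{A}\right)\mathsf{x}_{s_0}$ has all blocks equal to zero except the last, which equals $A\left(s_0\right)x_0$, so $x_0\in\ker\{A(s_0)\}$ forces $\mathsf{x}_{s_0}\in\ker\{s_0\mathsf{B}-\mathsf{A}\}$.

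Next I would establish that $L_{s_0}$, restricted to $\ker\{A(s_0)\}$, is a bijection onto $\ker\{s_0\mathsf{B}-\mathsf{A}\}$. Injectivity is immediate, since the first block of $\mathsf{x}_{s_0}$ is $x_0$ itself, so $L_{s_0}\left(x_0\right)=0$ implies $x_0=0$. Surjectivity is precisely the forward direction recorded in the proof of Theorem \ref{thm:matpol-eigvec}: the first $\nu-1$ block rows of $\left(s_0\mathsf{B}-\mathsf{A}\right)\mathsf{x}=0$ force $x_j=s_0^{j}x_0$ as in (\ref{eq:CBAx2a}), so every element of $\ker\{s_0\mathsf{B}-\mathsf{A}\}$ already has the stacked form $\mathsf{x}_{s_0}$, while the last block row yields $A\left(s_0\right)x_0=0$, i.e. $x_0\in\ker\{A(s_0)\}$; this is exactly the equivalence (\ref{eq:CBAx1e}). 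Hence $L_{s_0}$ is a linear isomorphism between the two kernels, and the dimension identity (\ref{eq:CBAx2e}) follows at once.

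For the multiplicity inequality (\ref{eq:CBAx2f}) I would argue from the determinant identity (\ref{eq:CBAx1g}), which gives $\det\{A(s)\}=\det\{s\mathsf{B}-\mathsf{A}\}$, so that the algebraic multiplicity $m\left(s_0\right)$, defined as the order of $s_0$ as a root of $\det\{A(s)\}$, coincides with its order as a root of $\det\{s\mathsf{B}-\mathsf{A}\}$. In the monic case $A_\nu=\mathbb{I}$ of primary interest here one has $\mathsf{B}=\mathsf{I}$, so $s\mathsf{B}-\mathsf{A}=s\mathbb{I}-\mathsf{A}$ is the characteristic pencil of the single companion matrix $\mathsf{A}$, and the classical fact that for one matrix the algebraic multiplicity of an eigenvalue dominates its geometric multiplicity $\dim\{\ker\{s_0\mathbb{I}-\mathsf{A}\}\}$ gives $m\left(s_0\right)\geq\dim\{\ker\{s_0\mathbb{I}-\mathsf{A}\}\}$; combined with (\ref{eq:CBAx2e}) this yields (\ref{eq:CBAx2f}). (More generally, whenever $A_\nu$ is invertible the pencil reduces via $\mathsf{B}^{-1}$ to the single matrix $\mathsf{B}^{-1}\mathsf{A}$, and the same classical fact applies.)

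The only genuinely substantive step is the surjectivity claim, which is where all the structural information about the companion pencil enters; the rest is bookkeeping. Since that step is literally the forward implication already proved in Theorem \ref{thm:matpol-eigvec}, I do not expect a real obstacle: the corollary is essentially a repackaging of that theorem, and the main care needed is simply to state the map $L_{s_0}$ and its inverse cleanly so that the dimension count is unambiguous.
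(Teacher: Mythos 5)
Your argument is correct and matches the paper's intended route: the corollary is presented there as an immediate consequence of the one-to-one correspondence in Theorem \ref{thm:matpol-eigvec}, and your map $L_{s_0}$ simply makes that correspondence an explicit linear isomorphism of kernels, after which the inequality follows from the standard fact that algebraic multiplicity dominates geometric multiplicity for the companion matrix. The only caveat is that your derivation of (\ref{eq:CBAx2f}) assumes $A_{\nu}$ invertible while the corollary is stated for any regular $A\left(s\right)$; this suffices for the paper's monic TWT polynomial, and the general case follows from the same dominance fact for regular pencils (e.g.\ via the Smith or Weierstrass canonical form).
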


The next statement shows that if the geometric multiplicity of an
eigenvalue is strictly less than its algebraic one than there exist
non-trivial Jordan blocks, that is Jordan blocks of dimensions higher
or equal to 2.
\begin{thm}[non-trivial Jordan block]
\label{thm:Jord-block} Assuming notations introduced in Theorem
\ref{thm:matpol-eigvec} let us suppose that the multiplicity $m\left(s_{0}\right)$
of eigenvalue $s_{0}$ satisfies
\begin{equation}
m\left(s_{0}\right)>\dim\left\{ \ker\left\{ A\left(s_{0}\right)\right\} \right\} .\label{eq:CBAAx3a}
\end{equation}
Then the Jordan canonical form of companion polynomial $\mathsf{C}_{A}\left(s\right)=s\mathsf{B}-\mathsf{A}$
has a least one nontrivial Jordan block of the dimension exceeding
2.

In particular, if 
\begin{equation}
\dim\left\{ \ker\left\{ s_{0}\mathsf{B}-\mathsf{A}\right\} \right\} =\dim\left\{ \ker\left\{ A\left(s_{0}\right)\right\} \right\} =1,\label{eq:CBAAx3b}
\end{equation}
and $m\left(s_{0}\right)\geq2$ then the Jordan canonical form of
companion polynomial $\mathsf{C}_{A}\left(s\right)=s\mathsf{B}-\mathsf{A}$
has exactly one Jordan block associated with eigenvalue $s_{0}$ and
its dimension is $m\left(s_{0}\right)$.
\end{thm}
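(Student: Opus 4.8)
The plan is to reduce both assertions to the standard dictionary between the Jordan structure of a matrix and the algebraic and geometric multiplicities of its eigenvalues, using the two equivalences already established in the excerpt. First I would fix the eigenvalue $s_0$ and record two numbers: its \emph{algebraic multiplicity} $m\left(s_0\right)$, the order of $s_0$ as a root of $\det\left\{A\left(s\right)\right\}$, and its \emph{geometric multiplicity} $g\left(s_0\right)=\dim\left\{\ker\left\{A\left(s_0\right)\right\}\right\}$. By the determinant identity $\det\left\{A\left(s\right)\right\}=\det\left\{s\mathsf{B}-\mathsf{A}\right\}$ from Theorem \ref{thm:matpol-eigvec}, the number $m\left(s_0\right)$ is simultaneously the multiplicity of $s_0$ as a root of the characteristic polynomial of the companion pencil $s\mathsf{B}-\mathsf{A}$; by Corollary \ref{cor:dim-ker}, $g\left(s_0\right)$ equals $\dim\left\{\ker\left\{s_0\mathsf{B}-\mathsf{A}\right\}\right\}$, the geometric multiplicity of $s_0$ for that pencil. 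Thus both multiplicities attached to $s_0$ are preserved under linearization, and it suffices to argue at the level of the companion matrix.

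Next I would invoke the classical Jordan-form bookkeeping for the matrix whose Jordan canonical form is at issue. In the monic case $A_{\nu}=\mathbb{I}$ used throughout the paper this matrix is simply $\mathsf{A}=\mathscr{C}$; in the general situation, since $\det\left\{\mathsf{B}\right\}=\det\left\{A_{\nu}\right\}\neq0$, it is $\mathsf{B}^{-1}\mathsf{A}$, whose eigenvalues and Jordan blocks coincide with the finite spectral data of the pencil $s\mathsf{B}-\mathsf{A}$. Listing the Jordan blocks attached to $s_0$ as having sizes $d_1,\ldots,d_g$, the two facts I rely on are: the number of such blocks equals the geometric multiplicity, so there are exactly $g=g\left(s_0\right)$ of them; and the sum of their sizes equals the algebraic multiplicity, so $\sum_{i=1}^{g}d_i=m\left(s_0\right)$, with each $d_i\geq1$.

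From here both conclusions are immediate counting arguments. For the first assertion, the hypothesis $m\left(s_0\right)>g\left(s_0\right)$ gives $\sum_{i=1}^{g}d_i>g=\sum_{i=1}^{g}1$, so at least one $d_i$ must exceed $1$; that block is nontrivial in the sense of having dimension at least $2$. For the ``in particular'' assertion, the hypothesis $g\left(s_0\right)=1$ forces a single block, whence $d_1=m\left(s_0\right)$, and the additional hypothesis $m\left(s_0\right)\geq2$ makes this unique block nontrivial of the asserted dimension $m\left(s_0\right)$.

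I expect no serious obstacle: the entire content is the translation of the two preserved multiplicities into block data. The one point needing care is the phrase ``Jordan canonical form of companion polynomial $s\mathsf{B}-\mathsf{A}$'', which I would pin down at the outset as the Jordan form of the associated matrix ($\mathsf{A}$ when monic, $\mathsf{B}^{-1}\mathsf{A}$ when $\det\left\{A_{\nu}\right\}\neq0$), so that ``number of blocks'' and ``total block size'' are unambiguous. I would also read ``dimension exceeding $2$'' in the statement as ``dimension at least $2$'', in agreement with the definition of a nontrivial Jordan block given earlier in the excerpt.
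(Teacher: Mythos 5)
Your proof is correct and matches the paper's intent: the paper gives no detailed argument, stating only that the result ``follows straightforwardly from the definition of the Jordan canonical form and its basic properties,'' and your write-up supplies exactly that standard bookkeeping (number of blocks at $s_{0}$ equals geometric multiplicity, total block size equals algebraic multiplicity, both preserved under linearization via the determinant identity and Corollary \ref{cor:dim-ker}). Your reading of ``dimension exceeding $2$'' as ``dimension at least $2$'' is also the right one, consistent with the paper's own definition of a nontrivial Jordan block.
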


The proof of Theorem \ref{thm:Jord-block} follows straightforwardly
from the definition of the Jordan canonical form and its basic properties.
Notice that if equations (\ref{eq:CBAAx3b}) hold that implies that
the eigenvalue $0$ is cyclic (nonderogatory) for matrix $A\left(s_{0}\right)$
and eigenvalue $s_{0}$ is cyclic (nonderogatory) for matrix $\mathsf{B}^{-1}\mathsf{A}$
provided $\mathsf{B}^{-1}$ exists. We remind that an eigenvalue is
called \emph{cyclic (nonderogatory)} if its geometric multiplicity
is 1. A square matrix is called \emph{cyclic (nonderogatory)} if all
its eigenvalues are cyclic \cite[5.5]{BernM}.

\section{Notations\label{sec:notation}}
\begin{itemize}
\item $\mathbb{C}$ is a set of complex number.
\item $\bar{s}$ is complex-conjugate to complex number $s$
\item $\mathbb{C}^{n}$ is a set of $n$ dimensional column vectors with
complex complex-valued entries.
\item $\mathbb{C}^{n\times m}$ is a set of $n\times m$ matrices with complex-valued
entries.
\item $\mathbb{R}^{n\times m}$ is a set of $n\times m$ matrices with real-valued
entries.
\item $\mathrm{dim}\,\left(W\right)$ is the dimension of the vector space
$W$.
\item $\mathrm{ker}\,\left(A\right)$ is the kernel of matrix $A$, that
is the vector space of vector $x$ such that $Ax=0$.
\item $\det\left\{ A\right\} $ is the determinant of matrix $A$.
\item $\chi_{A}\left(s\right)=\det\left\{ s\mathbb{I}_{\nu}-A\right\} $
is the characteristic polynomial of a $\nu\times\nu$ matrix $A$.
\item $\mathbb{I}_{\nu}$ is $\nu\times\nu$ identity matrix.
\item $M^{\mathrm{T}}$ is a matrix transposed to matrix $M$.
\item EL stands for the Euler-Lagrange (equations).
\end{itemize}
\textbf{\vspace{0.1cm}
}

\textbf{Data Availability:} The data that supports the findings of
this study are available within the article.\textbf{\vspace{0.2cm}
}

\textbf{Acknowledgment:} This research was supported by AFOSR grant
\# FA9550-19-1-0103.


\begin{thebibliography}{FigTWTbk}
\bibitem[ArfWeb]{ArfWeb} Arfken G. and Weber H., \emph{Mathematical
Methods for Physicists - A Comprehensive Guide}, 7th edn., Academic
Press, 2013.

\bibitem[Nusi]{Nusi} Barker R., Booske J., Luhmann N. and G. Nusinovich,
\textsl{Modern Microwave and Millimeter-Wave Power Electronics}, Wiley,
2005.

\bibitem[Baum]{Baum} Baumgartel H., \textsl{Analytic Perturbation
Theory for Matrices and Operators}, Birkhauser, 1985.

\bibitem[BernM]{BernM} Bernstein D., \textsl{Matrix Mathematics:
Theory, Facts, and Formulas}, 2 edn., Princeton University Press,
2009.

\bibitem[BoosVE]{BoosVE} Booske J. et.al., \textsl{Vacuum Electronic
High Power Terahertz Source}, (traveling wave tubes, TWT), IEEE Trans.
on Terahertz Sci. and Tech., \textbf{1}, No. 1, pp. 54-78, (2011).

\bibitem[Burt]{Burt} Burtsev A. et. al., \textsl{Features of the
Development of Electron-Optical Systems for Pulsed Terahertz Traveling-Wave
Tubes}, Tech. Phys., \textbf{63}, pp. 452\textendash 459, (2018).

\bibitem[CheN]{CheN} Chen W. et. al., \textsl{Exceptional points
enhance sensing in an optical microcavity}, Nature, \textbf{548},
192-196, (2017).

\bibitem[FigTWTbk]{FigTWTbk} Figotin A., \textsl{An Analytic Theory
of Multi-stream Electron Beams in Traveling Wave Tubes}, World Scientific,
2020.

\bibitem[FigSynbJ]{FigSynbJ} Figotin A., \textsl{Synthesis of lossless
electric circuits based on prescribed Jordan forms, }J. Math. Phys.,
\textbf{61}, 122703, (\textsl{2020).}

\bibitem[FigPert]{FigPert} Figotin A., Perturbations of circuit evolution
matrices with Jordan blocks, arXiv:2011.09425 {[}physics.class-ph{]},
2020.

\bibitem[FigRey1]{FigRey1} Figotin A. and Reyes G., \textsl{Multi-transmission-line-beam
interactive system}, J. Math. Phys.,\textbf{ 54}, 111901, (2013).

\bibitem[Foll]{Foll} Folland G., \textsl{Fourier analysis and its
applications}, Wadsworth \& Brooks, 1992.

\bibitem[Gilm1]{Gilm1} Gilmour A., \textsl{Principles of Klystrons,
Traveling Wave Tubes, Magnetrons, Cross-Field Amplifiers, and Gyrotrons},
Artech House, 2011.

\bibitem[Gilm]{Gilm} Gilmour A.,\textsl{ Principles of Traveling
Wave Tubes}, Artech House, 1994.

\bibitem[GoLaRo]{GoLaRo} Gohberg I., Lancaster P., L. and Rodman
L., \textsl{Matrix Polynomials}, SIAM, 2009..

\bibitem[Hale]{Hale} Hale J., \textsl{Ordinary Differential Equations},
2nd ed., Krieger Publishing Co., 1980.

\bibitem[Kato]{Kato} Kato T., \textsl{Perturbation theory for linear
operators}, Springer 1995.

\bibitem[KNAC]{KNAC} Kazemi H., Nada M., Mealy T., Abdelshafy A.
and Capolino F., \textsl{Exceptional Points of Degeneracy Induced
by Linear Time-Periodic Variation}, Phys. Rev. Applied, \textbf{11},
014007 (2019).

\bibitem[MAEAD]{MAEAD} Minenna D., Andre F., Elskens Y., Auboin J-F.,
Doveil F., \textsl{The Traveling-Wave Tube in the History of Telecommunication},
Eur. Phys. J., \textbf{44}(1), 1-36, 2019.

\bibitem[OGC]{OGC} Othman M., Galdi V. and Capolino F., \textsl{Exceptional
points of degeneracy and PT symmetry in photonic coupled chains of
scatterers}, Phys. Rev. B, \textbf{95}, 104305 (2017).

\bibitem[OTC]{OTC} Othman M., Tamma V., and Capolino F., Theory and
new amplification regime in periodic multimodal slow wave structures
with degeneracy interacting with an electron beam, IEEE Trans. Plasma
Sci., \textbf{44}, 594 (2016).

\bibitem[OVFC]{OVFC} Othman M, Veysi M., A. Figotin A. and Capolino
F., \textsl{Low starting electron beam current in degenerate band
edge oscillators}, IEEE Trans. Plasma Sci., \textbf{44}, 918 (2016).

\bibitem[OVFC1]{OVFC1} Othman M., Veysi M., Figotin A. and Capolino
F., \textsl{Giant amplification in degenerate band edge slow-wave
structures interacting with an electron beam}, Phys. Plasmas, \textbf{23},
033112 (2016).

\bibitem[PeLiXu]{PeLiXu} Peng C., Li Z., and Xu A., \textsl{Rotation
sensing based on a slow-light resonating structure with high group
dispersion}, Appl. Opt., \textbf{46}, 4125 (2007).

\bibitem[PierTWT]{PierTWT} Pierce J., \textsl{Traveling-Wave Tubes},
D. van Nostrand, 1950.

\bibitem[Pier51]{Pier51} Pierce J., \textsl{Waves in Electron Streams
and Circuits}, Bell Sys. Tech. J., \textbf{30}, 626-651, 1951.

\bibitem[SchaB]{SchaB} Schachter L., \textsl{Beam-Wave Interaction
in Periodic and Quasi-Periodic Structures}, 2nd edn., Springer, 2011.

\bibitem[Tsim]{Tsim} Tsimring S., \textsl{Electron Beams and Microwave
Vacuum Electronics}, Wiley, 2007.

\bibitem[vdWae1]{vdWae1} van der Waerden B., \textsl{Algebra}, vol.
1, Springer, 2003.

\bibitem[VPFC]{VOFC} Veysi M., Othman M., Figotin A. and Capolino
F., \textsl{Degenerate band edge laser}, Phys. Rev. B, \textbf{97},
195107 (2018).

\bibitem[Wie]{Wie} Wiersig J., \textsl{Enhancing the Sensitivity
of Frequency and Energy Splitting Detection by Using Exceptional Points
- Application to Microcavity Sensors for Single-Particle Detection},
Phys. Rev. Lett., \textbf{112}, 203901 (2014).

\bibitem[Wie1]{Wie1} Wiersig J., \textsl{Sensors operating at exceptional
points: General theory}, Phys. Rev. A, \textbf{93}, 033809 (2016).
\end{thebibliography}
\end{document}